\newcommand{\algmax}{{\sc Max-CE}}
\newcommand{\qual}{q}
\def\comment#1{}
\newtheorem{Def}{\bf Definition}[section]
\newtheorem{Exm}{\bf Example}[section]
\newtheorem{Thm}{\bf Theorem}[section]
\newtheorem{Lem}{\bf Lemma}[section]
\newtheorem{Prop}{\bf Proposition}[section]
\newtheorem{Claim}{\bf Claim}[section]
\newtheorem{Remark}{\bf Remark}[section]
\title{On Revenue Maximization with Sharp Multi-Unit Demands \\[0.1in]}
\author{
Ning Chen\thanks{Division of Mathematical Sciences,
%School of Physical and Mathematical Sciences,
Nanyang Technological University, Singapore.
Email: {\tt ningc@ntu.edu.sg}.
}\,\,\thanks{This research was supported by the AcRF Tier 2 grant of Singapore (No. MOE2012-T2-2-071).} \\
\and
Xiaotie Deng\thanks{Department of Computer Science, Shanghai Jiao Tong University, China.
Email: {\tt dengxiaotie@gmail.com}
}\,\,\thanks{Supported by Shanghai Jiaotong University with a 985 project grant and by an NSFC grant No. 61173011} \\
\and
Paul W. Goldberg\thanks{Department of Computer Science, University of Oxford, UK. Email: {\tt Paul.Goldberg@cs.ox.ac.uk}.
}\,\,  \thanks{Supported by EPSRC grant EP/G069239/1 ``Efficient
Decentralised Approaches in Algorithmic Game Theory'' and EPSRC grant EP/K01000X/1 ``Efficient Algorithms for Mechanism Design Without
Monetary Transfer.}\\
\and
Jinshan Zhang\thanks{Department of Computer Science, University of Liverpool, UK. Email:{\tt Jinshan.Zhang@liv.ac.uk.}}\,\,\,\,\thanks{Supported by  EPSRC grant EP/K01000X/1 ``Efficient Algorithms for Mechanism Design Without
Monetary Transfer.}\\
}
\begin{document}

\maketitle
\thispagestyle{empty}

\begin{abstract}
We consider markets consisting of a set of indivisible items, and buyers
that have {\em sharp} multi-unit demand. This means that each buyer $i$ wants a
specific number $d_i$ of items; a bundle of size less than $d_i$ has no value.
We consider the objective of setting prices and allocations
in order to maximize the total revenue of the market maker.
The pricing problem with sharp multi-unit demand buyers has a number of properties
that the unit-demand model does not possess, and is an important
question in algorithmic pricing. We consider the problem of computing a
revenue maximizing solution for two solution concepts: competitive
equilibrium and envy-free pricing.

For unrestricted valuations, these problems are NP-complete; we focus on
a realistic special case of ``correlated values''
where each buyer $i$ has a valuation $v_i\qual_j$
for item $j$, where $v_i$ and $\qual_j$ are positive quantities associated with
buyer $i$ and item $j$ respectively. We present a polynomial time
algorithm to solve the revenue-maximizing competitive equilibrium problem.
For envy-free pricing, if the demand of each buyer is bounded by a
constant, a revenue maximizing solution can be found efficiently;
the general demand case is shown to be NP-hard.\\\\
\textbf{Keywords:} Position Auction, Revenue Maximization, Envy-free,
Competitive Equilibrium,  Sharp Demand
\end{abstract}

\newpage

\section{Introduction}

The problems considered in this paper are motivated by applications
illustrated by the following examples. A publisher (e.g., a TV network)
has some items (such as advertising slots) that are provided to
potential customers (the advertisers). Each customer $i$ has a demand
that specifies the number of items that $i$ needs. Given the demand
requests from different customers, as well as the values that they are
willing to pay, the problem that the publisher faces is how to allocate
the items to customers at which prices. Demand is a practical
consideration and has occurred in a number of applications. For
instance, in TV (or radio) advertising~\cite{NBCFGMRSTVZ09}, advertisers
may request different lengths of advertising slots for their ads
programs. In banner advertising,
advertisers may request different sizes or areas for their displayed
ads, which may be decomposed into a number of base units.
A notable application of our model is where advertisers choose to display
their advertisement using rich media (video, audio, animation)~\cite{BCD98,R09}
that would usually need a fixed number of positions while text ads would
need only one position each. It has been formulated as a `consecutive' sharp-demand
model in sponsored search in recent works~\cite{GNS2007,DSYZ2010,DGSTZ13}.
Hatfield~\cite{Ha09} studies mechanisms (that do not use money) in this
context: each agent has a ``quota'', a fixed number of items required,
and has additive valuations for bundles of the items.
Examples of this kind of situation include allocation of (multiple) projects
to employees, items among heirs, allocating equipment time to scientists,
tutorial sessions to students and players to sports teams.
Our model contrasts with \cite{Ha09} in that we consider mechanisms with
money (market prices arise), and our valuation functions are less general:
for us, the matrix of buyer/item values has rank 1.

We study the economic problem as a two-sided market where the supply side is
composed of $m$ indivisible items and each item $j$ has a parameter
$\qual_j$, measuring the quality of the item. For example, in TV
advertising, inventories of a commercial break are usually divided into
slots of five seconds each, and every slot has an expected number of
viewers.
The other side of the market has $n$ potential buyers where
each buyer $i$ has a demand $d_i$ (the number of items that $i$
requests) and a value $v_{i}$ (the benefit to $i$ for an item of unit
quality). Thus, the valuation that $i$ obtains from item $j$ is given by
$v_{ij}=v_i\qual_j$. We suppose the valuation function
 is also additive by standard assumption in position auction.
The $v_i\qual_j$ valuation model has been considered by
Edelman et al.~\cite{EOS07} and Varian~\cite{Va07} in their seminal work
for keywords advertising. We will focus on the {\em sharp} demand case,
where every buyer requests exactly $d_i$ items. This scenario captures
some similarity but is still quite different from single-minded buyers
(i.e., each one desires a fixed combination of items) and is distinct
from the {\em relaxed} demand case, where every buyer requests at most
$d_i$ items.
In the practical setting of rich media advertisement, one slot can be
sold as a single text ad, or some given number of slots
for one rich media ad. In practice, these slots would normally have to
be adjacent. Here we do not explicitly impose that as a requirement,
but it turns out to be satisfied by our solution, provided that slots
are ordered by quality value.

Given the valuations and demands from the buyers, the market maker
decides on a price vector $\mathbf{p}=(p_j)$ for all items and an
allocation of items to buyers, as an output of the market. The question
is one of which output the market maker should choose to achieve certain
objectives. In this paper, we assume that the market maker would like to
maximize his own revenue, which is defined to be the total payment
collected from the buyers. While revenue maximization is a natural goal
from the market maker's perspective, buyers may have their own
objectives as well. We aim to model a ``free market'' where consumers are price takers;
thus, in a robust solution concept, one has to consider the
performance of the whole market and the interests of the buyers.

{\em Competitive equilibrium} provides such a solution concept that
captures both market efficiency and fairness for the buyers. In a
competitive equilibrium, every buyer obtains a best possible allocation
that maximizes his own utility and every unallocated item is priced at
zero (i.e., market clearance). Competitive equilibrium is one of the
central solution concepts in economics and has been studied and applied
in a variety of domains~\cite{MWG}. Combining the considerations from
the two sides of the market, an ideal solution concept therefore would
be revenue maximizing competitive equilibrium.

For sharp multi-unit demand buyers, when the valuations $v_{ij}$ are
arbitrary, even determining the existence of a competitive equilibrium
is NP-complete (see Appendix~\ref{appendix-NP}). For our correlated
valuation $v_i\qual_j$ model, we have the following results.

\medskip
\noindent \textbf{Theorem 1.} \textit{For sharp multi-unit demand, a
competitive equilibrium may not exist; even if an equilibrium is
guaranteed to exist, a maximum equilibrium (in which each price is as
high as it can be in any solution, see Definition \ref{Def-Max} and Example \ref{example-no-max-eq}) may not exist. Further, there
is a polynomial time algorithm that determines the existence of an
equilibrium, and computes a revenue maximizing one if it does.}

\medskip
While (revenue maximizing) competitive equilibrium has a number of nice
economic properties and has been recognized as an elegant tool for the
analysis of competitive markets, its possible non-existence largely
ruins its applicability. Such non-existence is a result of the market
clearance condition required in the equilibrium (i.e., unallocated items
have to be priced at zero). In most applications, however, especially in
advertising markets, market makers are able to manage the amount of
supplies. For instance, in TV advertising, publishers can `freely'
adjust the length of a commercial break. Therefore, the market clearance
condition becomes arguably unnecessary in those applications. This
motivates the study of {\em envy-free pricing}
(that is, envy-free item pricing~\cite{FFLS12}) which only requires the
{\em fairness} condition in the competitive equilibrium, where no buyer
can get a larger utility from any other allocation for the given prices.
In contrast with competitive equilibrium, an envy-free solution always
exists (a trivial one is obtained by setting all prices to $\infty$).
Once again, taking the interests of both sides of the market into
account, revenue maximizing envy-free pricing is a natural solution
concept that can be applied in those marketplaces.

The study of algorithmic computation of revenue maximizing envy-free
pricing was initiated by Guruswami et al.~\cite{GHKKKM05}, where the
authors considered two special settings with unit demand buyers and
single-minded buyers
%\footnote{Note that our sharp multi-unit demand model is quite different from single-minded buyers: In our model a buyer
%can receive any subset of size $d_i$, whereas in the single-minded model every buyer only wants to get a fixed subset.}
and showed that a revenue maximizing envy-free pricing is NP-hard to compute. Because envy-free
pricing has applications in various settings and efficient computation
is a critical condition for its applicability, there is a surge of
studies on its computational issues since the pioneering work
of~\cite{GHKKKM05}, mainly focusing on approximation solutions and
special cases that admit polynomial time algorithms,
e.g.,~\cite{HK05,BB06,BK06,BBM08,B08,CGV08,FW09,ERR09,CD10,GR11}.

The NP-hardness result of~\cite{GHKKKM05} for unit demand buyers implies
that we cannot hope for a polynomial time algorithm for general $v_{ij}$
valuations in the multi-unit demand setting, even for the very special
case when one has positive values for at most three items~\cite{CD10}.
However, it does not rule out positive computational results for
special, but important, cases of multi-unit demand. For $v_i\qual_j$
valuations with multi-unit demand, where the hardness reductions
of~\cite{GHKKKM05,CD10} does not apply, we have the following results.

\medskip
\noindent \textbf{Theorem 2.} \textit{There is a polynomial time
algorithm that computes a revenue maximizing envy-free solution in the
sharp multi-unit demand model with $v_iq_j$ valuations if the demand of every buyer is bounded by
a constant. On the other hand, the problem is  NP-hard if the sharp demand is
arbitrary, even if there are only three buyers.}

\medskip
For relaxed multi-unit demand, a standard technique can
reduce the problem to the unit-demand version: each buyer $i$ with demand
$d_i$ can be replaced by $d_i$ copies of buyer $i$, each of whom requests one item.
Note that under the sharp demand constraint, this trick is no longer applicable.

\medskip
We summarize our results in the following table. Here, we have a
complete overview of the existence and computation of competitive
equilibrium and envy-free pricing with multi-unit demand buyers. Most of
our results are positive, suggesting that competitive equilibrium and
envy-free pricing are candidate solution concepts to be applicable in
the domains where the valuations are correlated with respect to the
quality of the items.

\begin{table*}[h]
\small
\begin{center}
\begin{tabular}{|c|c|c|c|}\hline
 & & \raisebox{-1ex}[0pt]{Competitive equilibrium} & \raisebox{-1ex}[0pt]{Envy-free pricing} \\[0.1in] \hline
 \raisebox{-4.5ex}[0pt]{Unit demand} & \raisebox{-1.5ex}[0pt]{existence} & \raisebox{-1.5ex}[0pt]{yes~\cite{SS71}}  & \raisebox{-1.5ex}[0pt]{yes (trivial)} \\ \cline{2-4}
  \raisebox{-1ex}[0pt]{(general values $v_{ij}$)} &  \raisebox{-1ex}[0pt]{max revenue} &   &  \\[0.05in]
 & \raisebox{0ex}[0pt]{computation} & \raisebox{1.5ex}[0pt]{P~\cite{SS71,DGS86}} & \raisebox{1.5ex}[0pt]{NP-hard~\cite{GHKKKM05}} \\[0.05in] \hline
 &  & \raisebox{-1ex}[0pt]{not always (P decidable)}   & \\
 \raisebox{-2ex}[0pt]{Sharp multi-unit demand} & \raisebox{1ex}[0pt]{existence} & \raisebox{-1ex}[0pt]{(NP-hard for general $v_{ij}$)} & \raisebox{1ex}[0pt]{yes (trivial)} \\[0.05in] \cline{2-4}
 \raisebox{-0.5ex}[0pt]{($v_{ij}=v_i\qual_j$)} & \raisebox{-1ex}[0pt]{max revenue} &   & \raisebox{-1ex}[0pt]{P (constant demand)}   \\[0.1in]
 & \raisebox{1ex}[0pt]{computation} & \raisebox{2ex}[0pt]{P (if one exists)}  & \raisebox{0.5ex}[0pt]{{NP-hard (arbitrary demand)}}  \\ \hline
\end{tabular}
\caption{Summary of previous work and our results.}
\end{center}
\end{table*}
\normalsize

Despite the recent surge in the studies of algorithmic pricing,
multi-unit demand models have not received much attention. Most previous
work has focused on two simple special settings: unit demand and
single-minded buyers, but arguably multi-unit demand has much more
applicability. While the relaxed demand model shares similar properties
to unit demand (e.g., existence, solution structure, and computation),
the sharp demand model has a number of features that unit demand
does not possess.

\begin{itemize}
\item Existence of equilibrium. In unit or relaxed demand case, the competitive equilibrium always exists, moreover, the maximum and minimum equilibrium (an equilibrium price vector no more than any equilibrium price vector in each coordinates) always exists.
  As discussed above, a competitive
equilibrium  may not exist in the sharp demand model. Further, even if a
solution exist, the solution space may not form a distributive lattice (Any price vector between the minimum equilibrium price vector and maximum equilibrium price vector in each coordinate is an equilibrium price vector).
\item Over-priced items. In unit or relaxed demand, the price $p_j$ of any item $j$
is always at most the value $v_{ij}$ of the corresponding winner $i$.
This no longer holds for sharp multi-unit demand. Specifically, even if
$p_j>v_{ij}$, buyer $i$ may still want to get $j$ since his net profit
from other items may compensate his loss from item $j$ (see
Example~\ref{example-over-price})\footnote{This phenomenon does occur in
our real life. For example, in most travel packages offered by travel
agencies, they could lose money for some specific programs; but their
overall net profit could always be positive.}. This property enlarges
the solution space and adds an extra challenge to finding a revenue
maximizing solution.
\end{itemize}

\paragraph{Our Techniques.}
To compute a competitive equilibrium, we first find a `candidate' winner
set (at least one optimal winner set is a candidate winner set, see Definition \ref{Def-candidate-set} for details), which can be proved to be an equilibrium winner set if a
competitive equilibrium exists; then, with this set, we transform the
computation of competitive equilibria to a linear program of
exponential size, which can be solved by the ellipsoid algorithm in
polynomial time. The situation becomes complicated when finding an
optimal envy-free solution. Actually, we prove that it is NP-hard to
compute an optimal envy-free solution even if there are only three
buyers. Hence, our efforts focus on the special, yet very important
bounded-demand case. To compute an optimal envy-free solution for
bounded demand, certain `candidate' winner sets (the number of such sets
is polynomial) are defined and found; and crucially, there is at least
one optimal winner set in our selected candidate winner sets.  For each
`candidate' winner set, if the demand is bounded by a constant, we present
a linear programming to characterize its optimal solution when the
allocation is known.  Finally, a dynamic programming algorithm is
provided to find the allocation sets when a `candidate' winner set is
fixed. Both the linear programming and the dynamic programming run in
polynomial time.

\subsection{Related Work}

There are extensive studies on multi-unit demand in economics, see,
e.g.,~\cite{AC96,EWK98,CP07}. Our study focuses on sharp demand buyers.
An alternative model is when buyers have relaxed multi-unit demand
(i.e., one can buy a subset of at most $d_i$ items), where it is well
known that the set of competitive equilibrium prices is non-empty and
forms a distributive lattice~\cite{SS71,GS99}. This immediately implies
the existence of an equilibrium with maximum possible prices; hence,
revenue is maximized. Demange, Gale, and Sotomayor~\cite{DGS86} proposed
a combinatorial dynamics which always converges to a revenue maximizing
(or minimizing) equilibrium for unit demand; their algorithm can be
easily generalized to relaxed multi-unit demand.

From an algorithmic point of view, the problem of revenue maximization
in envy-free pricing was initiated by Guruswami et al.~\cite{GHKKKM05},
who showed that computing an optimal envy-free pricing is
APX-hard for unit-demand bidders and gave an $O(\log n)$ approximation
algorithm. Briest~\cite{B08} showed that given appropriate complexity assumptions,
%(the hardness of the balanced bipartite independent set
% problem in constant degree graphs or refuting random 3CNF formulas),
the unit-demand envy-free pricing problem in general cannot be approximated
within $O(\log^{\epsilon} n)$ for some $\epsilon>0$. Hartline and
Yan~\cite{HY11} characterized optimal envy-free pricing for unit-demand
and showed its connection to mechanism design. For the multi-unit demand
setting, Chen et al.~\cite{CGV08} gave an $O(\log D)$ approximation
algorithm when there is a metric space behind all items, where $D$ is
the maximum demand, and Briest~\cite{B08} showed that the problem is
hard to approximate within a ratio of $O(n^{\epsilon})$ for some
$\epsilon$, unless $NP\subseteq \bigcap_{\epsilon>0} BPTIME(2^{n^{\epsilon}})$.
For the problem of maximizing social welfare via truthful mechanisms,
\cite{KV10} investigates a similar sharp demand model
for combinatorial auctions in the case that bidders all have the same demand $d$.
\cite{KV10} obtains positive algorithmic results for approximation.

Recent work by Feldman et al.~\cite{FFLS12} studies envy-free revenue
maximization problem with budget but without demand constraints and
presents a 2-approximate mechanism for the envy-free pricing problem.
Another line of research is on single-minded bidders, including, for
example,~\cite{GHKKKM05,BBM08,BB06,BK06,CS08,ERR09}. To the best of our
knowledge, this paper is the first to study algorithmic computation of
sharp multi-unit demand.  Most related work to this is a follow-up
paper~\cite{BFM13}, where the authors prove  that, based on
our work, the maximum revenue of envy-free solutions for $v_iq_j$
valuation with sharp multi-unit demand cannot be approximated within
a factor $O(m^{1-\epsilon})$ for arbitrary demands, for any
$\epsilon > 0$, unless P $=$ NP, and provide a simple
$O(m)$-approximation algorithm. \cite{BFM13} also studies an
interesting subclass of ``proper'' instances and gives a tight
2-approximation algorithm for this class. Deng et al.~\cite{DGSTZ13}
investigates a ``consecutive demand'' variant in which items are arranged
in a sequence and buyers want items that are consecutive in the
sequence.

%\cite{HY11} characterizes the optimal envy-free solution by ironing virtual function and it is not difficult to generalize their result for the relaxed demand case by duplicating  $d_i$ buyers for each buyer $i$, where each dummy buyer has the same valuation as its origins and only requests one item. However, their result cannot be generalized in a simple way for sharp multi-unit demand case since the structure are totally different and we  have the following results.

%\paragraph{Organization.}
%The remainder of the paper is organized as follows. In Section 2, we present the model and some preliminary definitions.  We go on to propose a strongly polynomial time algorithm
%to compute a competitive equilibrium in Section 3.  The revenue maximization problem under envy-free conditions is transformed  to a polynomial number of discrete linear programming problems with similar forms in Section 4. Section 5 is devoted to giving a polynomial time algorithm for solving the revenue maximization problem of envy-freeness when the  desired number of items of each buyer is bounded by a constant number. Discussion and further extensions of the model are given in Section 6.

%\input{preliminaries.tex}

\section{Preliminaries}
\subsection{Settings and Definitions}
We have a market with $m$ indivisible items, $M=\{1,2,\ldots, m\}$,
where each item $j$ has unit
supply and a parameter $\qual_j > 0$, representing the quality or
desirability of $j$. In the market, there are also $n$ potential buyers, $N=\{1,2,\ldots, n\}$,
where each buyer
$i$ has a value $v_{i} > 0$, which gives the benefit that $i$ obtains
for each unit of quality. Hence, the valuation that buyer $i$ has for
item $j$ is $v_{ij}=v_i\cdot \qual_j$. We suppose the valuation function
 is also additive. In addition, each buyer $i$ has a
demand request $d_i\in \mathbb{Z}^{+}$, which specifies the number of
items that $i$ would like to get. We assume that $d_i$ is a sharp
constraint, i.e., $i$ gets either exactly $d_i$ items\footnote{By the
nature of the solution concepts considered in the paper, it can be
assumed without loss of generality that $i$ will not get more than $d_i$
items.} or nothing at all. Our model therefore defines a market with
multi-unit demand buyers and unit supply items. For any subset of buyers
$S\subseteq N$, we use $d(S)=\sum_{i\in S}d_i$ to denote the total demand of items
by buyers in $S$.

An outcome of the market is a tuple $(\mathbf{p},\mathbf{X})$, where
\begin{itemize}
\item $\mathbf{p}=(p_1,\ldots,p_m)\geq 0$ is a {\em price} vector, where
$p_j$ is the price charged for item $j$;
\item $\mathbf{X}=(X_1,\ldots,X_n)$ is an {\em allocation} vector, where
$X_i$ is the set of items that $i$ wins. If $X_i\neq \emptyset$, we say
$i$ is a winner and have $|X_i|=d_i$ due to the demand constraint; if
$X_i=\emptyset$, $i$ does not win any items and we say $i$ is a loser.
Further, since every item has unit supply, we require $X_i\cap
X_{i'}=\emptyset$ for any $i\neq i'$.
\item If $j\in X_i$, we use $i=b(j)$ to represent the buyer of $j\in M$.
\end{itemize}

Given an output $(\mathbf{p},\mathbf{X})$, let
$u_i(\mathbf{p},\mathbf{X})$ denote the $utility$ of $i$. That is, if
$X_i\neq \emptyset$, then $u_i(\mathbf{p},\mathbf{X})=\sum\limits_{j\in
X_i}(v_{ij}-p_j)$; if $X_i=\emptyset$, then
$u_i(\mathbf{p},\mathbf{X})=0$.

\begin{Def}[Envy-freeness]\label{Def-EF}
We say a tuple $(\mathbf{p},\mathbf{X})$ is an {\em envy-free} solution
if every buyer is envy-free, where a buyer $i$ is envy-free if the
following conditions are satisfied:
\begin{itemize}
\item if $X_i\neq \emptyset$, then (i)
$u_i(\mathbf{p},\mathbf{X})\geq 0$,
and (ii) for any other subset of items $T$ with $|T|=d_i$,
$u_i(\mathbf{p},\mathbf{X})\geq
\sum\limits_{j\in T}(v_{ij}-p_{j})$;
\item if $X_i=\emptyset$ (i.e., $i$ wins nothing), then, for any subset
of items $T$ with $|T|=d_i$, $\sum\limits_{j\in T}(v_{ij}-p_j)\leq 0$.
\end{itemize}
\end{Def}

Envy-freeness captures fairness in the market --- the utility of
everyone is maximized at the corresponding allocation for the given
prices. That is, if $i$ wins a subset $X_i$, then $i$ cannot obtain a
higher utility from any other subset of the same size; if $i$ does not
win anything, then $i$ cannot obtain a positive utility from any subset
with size $d_i$. It is easy to see that an envy-free solution always
exists (e.g., set all prices to be $\infty$ and allocate nothing to
every buyer).

Another solution concept we will consider is competitive equilibrium, which requires that,
besides envy-freeness, every unsold item must be priced at zero (or at any given reserve price). Such
market clearance condition captures efficiency of the whole market. The formal definition is given below.

\begin{Def}[Competitive equilibrium]
We say a tuple $(\mathbf{p},\mathbf{X})$ is a {\em competitive
equilibrium} if it is envy-free, and for each item $j$, $p_j=0$ if
no-one wins $j$ in the allocation $\mathbf{X}$.
\end{Def}

For a given output $(\mathbf{p},\mathbf{X})$, the {\em revenue} collected by
the market maker is defined as $\sum_{i=1}^{n}\sum_{j\in X_i}p_j$. Note that by the definition of competitive equilibrium, the revenue collected from a competitive equilibrium is $\sum_{i=1}^{m}p_j$. We are interested in revenue
maximizing solutions, specifically, revenue maximizing competitive
equilibrium (if one exists) and revenue maximizing envy-free pricing.
The main objective of the paper is algorithmic computations of these two
optimization problems.

To simplify the following discussions, we sort all buyers and items in
non-increasing order of their unit values and qualities, respectively,
i.e., $v_1\geq v_2\geq\cdots\geq v_n$ and $\qual_1\geq \qual_2\geq\cdots\geq
\qual_m$. Let $K$ be the number of distinct values in the set
$\{v_1,\ldots,v_n\}$. Let $A_1,\ldots,A_K$ be a partition of all buyers
where each $A_k$, $k=1,2,\ldots, K$, contains the set of buyers that
have the $k$th largest value.

%%%%%%%%%%%%%%%%%%%%%%%%%%%%%
\iffalse

\begin{displaymath}
\begin{split}
\mbox{Maximize}\ \ \ \ &\sum\limits^{n}_{i=1}\sum\limits_{j\in X_i}p_j\ \ \ \ (P_1)\\
\mbox{subject to}\ \ \ \ &(\mathbf{p},\mathbf{X})\ \ \mbox{is a competitive equilibrium}
\end{split}
\end{displaymath}
and
 \begin{displaymath}
 \begin{split}
\mbox{Maximize}\ \ \ \ &\sum\limits^{n}_{i=1}\sum\limits_{j\in X_i}p_j\ \ \ \ (P_2)\\
\mbox{subject to}\ \ \ \ &(\mathbf{p},\mathbf{X})\ \ \mbox{is an envy-free solution}
\end{split}
\end{displaymath}

\fi
%%%%%%%%%%%%%%%%%%%%%%%%%%%%%

%\input{CE.tex}
\subsection{Examples}

It is well known that a competitive equilibrium always exists for unit
demand buyers (even for general $v_{ij}$ valuations)~\cite{SS71};
for our sharp multi-unit demand model, however, a competitive equilibrium
may not exist, as the following example shows.

\begin{Exm}[Competitive equilibrium need not exist]\label{ex-max-CE-not-exist}
There are two buyers $i_1,i_2$ with values $v_{i_1}=10$ and $v_{i_2}=9$,
and demands $d_{i_1}=1$ and $d_{i_2}=2$, respectively, and
two items $j_1,j_2$ with unit quality $\qual_{j_1}=\qual_{j_2}=1$.
If $i_1$ wins an item,
without loss of generality, say $j_1$, then $j_2$ is unsold and
$p_{j_2}=0$; by envy-freeness of $i_1$, we have $p_{j_1} = 0$. Thus,
$i_2$ envies the bundle $\{j_1,j_2\}$. If $i_2$ wins both items, then
$p_{j_1}+p_{j_2}\le v_{i_2j_1}+v_{i_2j_2}=18$, implying that $p_{j_1}\le
9$ or $p_{j_2}\le 9$; thus, $i_1$ is not envy-free. Hence, there is no
competitive equilibrium in the given instance.
\end{Exm}

In the unit demand case, it is well-known that the set of equilibrium prices
forms a distributive lattice; hence, there exist extremes which correspond
to the maximum and the minimum equilibrium price vectors. In our multi-unit
demand model, however, even if a competitive equilibrium
exists, maximum equilibrium prices may not exist.
\begin{Def}[Maximum Equilibrium]\label{Def-Max}
 A price vector $\mathbf{p}$ is called a  maximum equilibrium price vector if for any other  equilibrium price vector $\mathbf{q}$, $p_j\geq q_j$ for every item $j$. An equilibrium $(\mathbf{p},\mathbf{X})$ is called a relaxed maximum equilibrium  if $\mathbf{p}$ is a maximum price vector.
\end{Def}
\begin{Exm}[Maximum equilibrium need not exist]\label{example-no-max-eq}
There are two buyers $i_1,i_2$ with values $v_{i_1}=10,v_{i_2}=1$ and
demands $d_{i_1}=2,d_{i_2}=1$, and two items $j_1,j_2$ with unit quality
$\qual_{j_1}=\qual_{j_2}=1$. It can be seen that allocating the two items to
$i_1$ at prices $(19,1)$ or $(1,19)$ are both revenue maximizing
equilibria; but there is no equilibrium price vector which is at least
both $(19,1)$ and $(1,19)$.
%If the value of the buyer $i_1$ is changed to $v'_{i_1}=0.1$ and the demand of $i_2$ is changed to $d_{i_2}=2$, then allocating $j_1,j_2$ to $i_2$ at price $(0.2,0)$ and allocating $j_1,j_2$ to $i_2$ at price $(0,0.2)$ both are minimum equilibria; there is no equilibrium price vector which is at most both $(0.2,0)$ and $(0,0.2)$.
\end{Exm}

\subsection{Over-Priced Items}

Because of the sharp multi-unit demand, an interesting and important
property is that it is possible that some items are `over-priced'; this
is a significant difference between sharp multi-unit and unit demand models.
Formally, in a solution $(\mathbf{p},\mathbf{X})$, we say an item $j$ is
{\em over-priced} if there is a buyer $i$ such that $j\in X_i$ and $p_j>
v_i\qual_j$. That is, the price charged for item $j$ is larger than its
contribution to the utility of its winner.

\begin{Exm}[Over-priced items in a revenue maximizing solution]\label{example-over-price}
There are two buyers $i_1,i_2$ with values $v_{i_1}=20,v_{i_2}=10$ and
demands $d_{i_1}=1$ and $d_{i_2}=2$, and three items $j_1,j_2,j_3$ with
qualities $\qual_{j_1}=3,\qual_{j_2}=2,\qual_{j_3}=1$. We can see that the
allocations $X_{i_1}=\{j_1\}, X_{i_2}=\{j_2,j_3\}$ and prices
$(45,25,5)$ constitute a revenue maximizing envy-free solution with
total revenue $75$, where item $j_2$ is over-priced. If no items are
over-priced, the maximum possible prices are $(40,20,10)$ with total
revenue $70$.
\end{Exm}

We have the following characterization for over-priced items in an
equilibrium solution.

\begin{Lem}\label{lem-CE-overprice-1}
For any given competitive equilibrium $(\mathbf{p},\mathbf{X})$, the following claims hold:
\begin{itemize}
\item If there is any unallocated item, then there are no over-priced items.
\item At most one winner can have over-priced items; further, that
winner, say $i$, must be the one with the smallest value among all
winners in the equilibrium $(\mathbf{p},\mathbf{X})$. That is, for any
other winner $i'\neq i$, we have $v_{i'}>v_i$.
\end{itemize}
\end{Lem}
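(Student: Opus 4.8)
The plan is to reason purely from the envy-free conditions of Definition~\ref{Def-EF} together with market clearance, using one preliminary observation about single-item swaps. For a winner $i$, condition (ii) applied to bundles that differ from $X_i$ in a single element forces the net values $v_i\qual_t-p_t$ inside $X_i$ to dominate those outside: for every $s\in X_i$ and every $t\notin X_i$, taking $T=(X_i\setminus\{s\})\cup\{t\}$ (which has size $d_i$) in condition (ii) yields $v_i\qual_s-p_s\ge v_i\qual_t-p_t$. For the first bullet, suppose $j\in X_i$ is over-priced, so $v_i\qual_j-p_j<0$, and let $\ell$ be an unallocated item, so that $p_\ell=0$ by market clearance. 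The single-swap inequality with $s=j$, $t=\ell$ gives $v_i\qual_j-p_j\ge v_i\qual_\ell-p_\ell=v_i\qual_\ell>0$, contradicting $v_i\qual_j-p_j<0$. Hence an over-priced item cannot coexist with any unallocated item.

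For the second bullet the key structural fact is that ownership of an over-priced item makes every outside item strictly unattractive. Concretely, if winner $i$ owns an over-priced item $j$, then for every $t\notin X_i$ the single-swap inequality with $s=j$ gives $v_i\qual_t-p_t\le v_i\qual_j-p_j<0$. Now take any other winner $i'$; its bundle $X_{i'}$ is nonempty and disjoint from $X_i$, so summing this strict inequality over $t\in X_{i'}$ gives $v_i\sum_{t\in X_{i'}}\qual_t<\sum_{t\in X_{i'}}p_t$. On the other hand, individual rationality of $i'$ (condition (i)) reads $\sum_{t\in X_{i'}}p_t\le v_{i'}\sum_{t\in X_{i'}}\qual_t$. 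Chaining these two and dividing by the positive quantity $\sum_{t\in X_{i'}}\qual_t$ yields $v_i<v_{i'}$.

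This single comparison delivers both halves of the second claim: it shows directly that every other winner $i'$ satisfies $v_{i'}>v_i$, so $i$ is the (unique) smallest-value winner; and it rules out a second over-priced winner $i'$, since running the same argument with the roles of $i$ and $i'$ swapped would also force $v_{i'}<v_i$, which is impossible. I expect the only real obstacle to be identifying the correct pair of inequalities to combine here: the over-pricing of a single item must first be upgraded, via single-swap envy-freeness, to the global statement that \emph{all} items outside $X_i$ have strictly negative net value for $i$, and this must then be matched against $i'$'s individual-rationality inequality taken over precisely $i'$'s own bundle. Once that ``all outside items are negative'' fact is in place, the value comparison and both conclusions are immediate.
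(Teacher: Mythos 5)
Your proposal is correct and takes essentially the same approach as the paper's proof: the first bullet is the same single-swap with a zero-priced unallocated item, and the second bullet combines the fact that an over-priced item forces all outside items to have strictly negative net value for $i$ with the individual rationality of the other winner $i'$. The only cosmetic difference is that you sum over the whole bundle $X_{i'}$ and divide by $\sum_{t\in X_{i'}}\qual_t$, whereas the paper picks a single witness item $j'\in X_{i'}$ with $v_{i'}\qual_{j'}\ge p_{j'}$ and exhibits the explicit envied bundle $X_i\cup\{j'\}\setminus\{j\}$.
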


\begin{proof}
The first claim is obvious since any unallocated item $j'$ is priced at
0; thus if there is a winner $i$ and item $j\in X_i$ such that $p_j>
v_i\qual_j$, then $i$ would envy the subset $X_i\cup \{j'\}\setminus \{j\}$.

To prove the second claim, suppose there are two winners $i,i'$
where $v_{i}\ge v_{i'}$, and suppose that $i$ has over-priced item $j$.
Since $i'$ is envy-free, his own utility must be non-negative;
we know there is an item $j'\in X_{i'}$ such that $v_{i'}\qual_{j'}\ge p_{j'}$.
This implies that $v_{i}\qual_{j'}\ge p_{j'}$; thus, $i$ would envy the subset
$X_{i}\cup\{j'\}\setminus \{j\}$, a contradiction.
% A similar argument holds for the last statement that the winner with
% over-priced items must be the one with the smallest value.
\end{proof}

\subsection{Properties}

We present some observations regarding envy-freeness and competitive
equilibrium. Our first observation implies that a winner is envy-free if
and only if he prefers each of his allocated items to any other item.

\begin{Lem}\label{lem-EF-property-0}
Given any solution $(\mathbf{p},\mathbf{X})$ and any winner $i$, if $i$
is envy-free then $v_{ij}-p_{j}\ge v_{ij'}-p_{j'}$ for any items
$j\in X_i$ and $j'\notin X_i$.
On the other hand, if $i$ is not envy-free,
then there is $j\in X_i$ and $j'\notin X_i$ such that $v_{ij}-p_{j} <
v_{ij'}-p_{j'}$.
\end{Lem}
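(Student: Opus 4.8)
The plan is to prove the two directions of Lemma~\ref{lem-EF-property-0} by directly unpacking Definition~\ref{Def-EF}, which already encodes the relevant comparisons between the winning bundle $X_i$ and competing bundles of the same size $d_i$. The key observation is that a single item swap, replacing some $j\in X_i$ by some $j'\notin X_i$, produces a candidate bundle $T=X_i\cup\{j'\}\setminus\{j\}$ of size exactly $d_i$, so the envy-free inequality $u_i(\mathbf{p},\mathbf{X})\ge\sum_{\ell\in T}(v_{i\ell}-p_\ell)$ applies to it. The plan is to reduce the global ``no bundle is preferred'' condition to this local ``no single swap improves'' condition, which is the natural candidate for a clean proof.

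For the forward direction, suppose $i$ is envy-free and fix any $j\in X_i$ and $j'\notin X_i$. I would form $T=X_i\cup\{j'\}\setminus\{j\}$, note $|T|=d_i$, and apply condition (ii) of Definition~\ref{Def-EF} to get $\sum_{\ell\in X_i}(v_{i\ell}-p_\ell)\ge\sum_{\ell\in T}(v_{i\ell}-p_\ell)$. Since $X_i$ and $T$ agree on all items except that $X_i$ contains $j$ while $T$ contains $j'$, every common term cancels, leaving exactly $v_{ij}-p_j\ge v_{ij'}-p_{j'}$, which is the claim.

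For the reverse direction, I would argue contrapositively: assume that for every $j\in X_i$ and $j'\notin X_i$ we have $v_{ij}-p_j\ge v_{ij'}-p_{j'}$, and show $i$ is then envy-free, i.e.\ that no bundle $T$ with $|T|=d_i$ beats $X_i$. The natural approach is an exchange argument: starting from $X_i$, transform it into an arbitrary target bundle $T$ by a sequence of single swaps, each removing an item of $X_i\setminus T$ and inserting an item of $T\setminus X_i$. Because $|X_i|=|T|=d_i$, these two sets have equal cardinality, so the swaps can be paired up and the intermediate sets always have size $d_i$. Each swap weakly decreases the total utility by the per-item hypothesis, so telescoping gives $u_i(\mathbf{p},\mathbf{X})\ge\sum_{\ell\in T}(v_{i\ell}-p_\ell)$; hence no $T$ is strictly preferred, contradicting non-envy-freeness. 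Equivalently, one can state this reverse direction directly as in the lemma: if $i$ is \emph{not} envy-free, some $T$ strictly beats $X_i$, and the exchange argument forces at least one swap along the way to strictly decrease utility, yielding the desired pair $j,j'$ with $v_{ij}-p_j<v_{ij'}-p_{j'}$.

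The main obstacle, modest as it is, lies in the reverse direction: making the exchange/pairing argument rigorous, in particular handling condition (i) of Definition~\ref{Def-EF} (nonnegativity of $u_i$) and the separate loser case, and ensuring that the intermediate bundles remain valid size-$d_i$ sets so that the envy comparisons are legitimate at every step. One must also take care that the forward direction's cancellation is clean when $j$ and $j'$ are the only differing items; this is immediate but should be stated explicitly. I expect everything beyond the pairing bookkeeping to be routine term cancellation.
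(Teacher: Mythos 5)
Your proposal is correct and follows essentially the same route as the paper: the forward direction uses the single-swap bundle $X_i\cup\{j'\}\setminus\{j\}$ exactly as the paper does, and your ``exchange/telescoping'' argument for the reverse direction is just a more elaborate phrasing of the paper's one-line observation that, since $|X_i|=|T|$, the items of $X_i\setminus T$ and $T\setminus X_i$ can be paired so that at least one pair must carry the strict inequality. The intermediate-bundle bookkeeping you worry about is unnecessary --- the difference $\sum_{j\in X_i}(v_{ij}-p_j)-\sum_{j'\in T}(v_{ij'}-p_{j'})$ collapses directly to a sum over such pairs, with no envy comparison needed at intermediate steps.
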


\begin{proof}
If $i$ is envy-free but (for $j\in X_i$ and $j'\notin X_i$) $v_{ij}-p_{j} < v_{ij'}-p_{j'}$,
it is easy to see that $i$ would envy subset $X_{i}\cup \{j'\}\setminus\{j\}$,
a contradiction. If $i$ is not envy-free, then there is a subset $T$ of items
with $|T|=d_i$ such that $\sum_{j\in X_i}(v_{ij}-p_j)<\sum_{j'\in
T}(v_{ij'}-p_{j'})$. Since $|X_i|=|T|$, the inequality holds for at
least one item, i.e., there is $j\in X_i$ and $j'\notin X_i$ such that
$v_{ij}-p_{j} < v_{ij'}-p_{j'}$.
\end{proof}

\begin{Lem}\label{lem-EF-property-1}
For any envy-free solution $(\mathbf{p},\mathbf{X})$, suppose there are
two buyers $i,i'$ with values $v_{i}>v_{i'}$ and two items $j$
and $j'$ that are allocated to $i$ and $i'$ respectively, i.e.,
$j\in X_{i}$ and $j'\in X_{i'}$. Then $\qual_{j}\geq \qual_{j'}$.
\end{Lem}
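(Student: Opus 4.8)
The plan is to derive the quality ordering directly from two applications of Lemma~\ref{lem-EF-property-0}, one to each of the two winners, and then combine them through the price difference $p_j-p_{j'}$. The key preliminary observation is that, since allocations are disjoint and $i\neq i'$, the items $j$ and $j'$ are distinct, with $j'\notin X_i$ and $j\notin X_{i'}$. This places us in exactly the situation covered by Lemma~\ref{lem-EF-property-0}: a winner together with one of his own items and an item he does not own.

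First I would apply Lemma~\ref{lem-EF-property-0} to winner $i$, using $j\in X_i$ and $j'\notin X_i$, to obtain
\[
v_i\qual_j-p_j\ \geq\ v_i\qual_{j'}-p_{j'},
\]
which rearranges to $v_i(\qual_j-\qual_{j'})\geq p_j-p_{j'}$. Symmetrically, applying the same lemma to winner $i'$ with $j'\in X_{i'}$ and $j\notin X_{i'}$ gives
\[
v_{i'}\qual_{j'}-p_{j'}\ \geq\ v_{i'}\qual_j-p_j,
\]
which rearranges to $p_j-p_{j'}\geq v_{i'}(\qual_j-\qual_{j'})$.

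Chaining these two inequalities through the common quantity $p_j-p_{j'}$ yields $v_i(\qual_j-\qual_{j'})\geq v_{i'}(\qual_j-\qual_{j'})$, i.e. $(v_i-v_{i'})(\qual_j-\qual_{j'})\geq 0$. Since $v_i>v_{i'}$ by hypothesis, the factor $v_i-v_{i'}$ is strictly positive, so dividing it out gives $\qual_j-\qual_{j'}\geq 0$, which is the desired conclusion $\qual_j\geq\qual_{j'}$.

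I do not anticipate a genuine obstacle here: this is a short rearrangement/exchange argument, and the only points needing care are the bookkeeping facts that $j$ and $j'$ are distinct and lie outside each other's bundles (so that Lemma~\ref{lem-EF-property-0} is applicable in both directions), together with the use of the strict inequality $v_i>v_{i'}$ to justify the final cancellation of a positive factor. It is worth noting that the conclusion is only the weak inequality $\qual_j\geq\qual_{j'}$; the argument does not force strictness, which is consistent with the possibility $\qual_j=\qual_{j'}$ surviving the cancellation.
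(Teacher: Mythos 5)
Your proof is correct and follows essentially the same route as the paper: two applications of Lemma~\ref{lem-EF-property-0} (one per winner), combined to yield $(v_i-v_{i'})(\qual_j-\qual_{j'})\geq 0$, from which the strict hypothesis $v_i>v_{i'}$ gives $\qual_j\geq\qual_{j'}$. Whether one adds the two inequalities (as the paper does) or chains them through $p_j-p_{j'}$ (as you do) is an immaterial difference in the algebra.
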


\begin{proof}
By the above Lemma~\ref{lem-EF-property-0}, we have
\begin{eqnarray*}
v_{i} \qual_{j} -p_{j} &\geq& v_{i} \qual_{j'}-p_{j'} \\
v_{i'}\qual_{j'}-p_{j'}&\geq& v_{i'}\qual_{j} -p_{j}
\end{eqnarray*}
Adding the two inequalities together, we get $(v_{i}-v_{i'})(\qual_{j}-\qual_{j'})\geq 0$,
yielding the desired result.
\end{proof}

Lemma~\ref{lem-EF-property-1} implies that in any envy-free solution,
the allocation of items is monotone in terms of their amount of
qualities and the values of the winners, i.e., winners with larger
values win items with larger qualities. However, it does not imply that
the value of every winner is larger than or equal to the value of any
loser. For instance, consider three buyers $i_1,i_2,i_3$ and two items
$j_1,j_2$ with $\qual_{j_1}=2$ and $\qual_{j_2}=1$. The values and demands are
$v_{i_1}=1.3, v_{i_2}=1, v_{i_3}=0.9$ and $d_{i_1}=1, d_{i_2}=2,
d_{i_3}=1$. Then prices $p_{j_1}=2.2,p_{j_2}=0.9$ and allocations
$X_{i_1}=\{j_1\}, X_{i_2}=\emptyset, X_{i_3}=\{j_2\}$ constitute a
revenue maximizing envy-free solution. In this solution,
$v_{i_2}>v_{i_3}$, but $i_2$ does not win any item (because of the sharp
demand constraint) whereas $i_3$ wins item $j_2$.

\begin{Lem}\label{lem-CE-property-1}
If there is a competitive equilibrium $(\mathbf{p},\mathbf{X})$, then for any winner $i$,
item $j\in X_i$ and unallocated item $j'$, we have  $\qual_{j}\geq \qual_{j'}$.
\end{Lem}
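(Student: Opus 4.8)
The plan is to derive the inequality $q_j \ge q_{j'}$ directly from two ingredients: the market-clearance condition that a competitive equilibrium imposes on the unallocated item $j'$, and the envy-freeness of the winner $i$ applied to the pair of items $j$ and $j'$. The only extra facts I will need are that prices are nonnegative and that every value $v_i$ is strictly positive, both of which are built into the model.

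First I would invoke the defining feature of a competitive equilibrium: since $j'$ is unallocated, market clearance forces $p_{j'}=0$. Next, because $j'$ is unallocated it is in particular not in $X_i$, so the pair $j\in X_i$ and $j'\notin X_i$ is exactly the configuration covered by Lemma~\ref{lem-EF-property-0}. As $i$ is a winner in an envy-free solution, that lemma gives $v_{ij}-p_j \ge v_{ij'}-p_{j'}$, i.e. $v_i q_j - p_j \ge v_i q_{j'} - p_{j'}$. Substituting $p_{j'}=0$ yields $v_i q_j - p_j \ge v_i q_{j'}$, which rearranges to $v_i(q_j - q_{j'}) \ge p_j$. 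Since the price vector satisfies $\mathbf{p}\ge 0$ we have $p_j \ge 0$, hence $v_i(q_j - q_{j'}) \ge 0$; dividing through by $v_i>0$ gives $q_j \ge q_{j'}$, as required.

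I do not expect a genuine obstacle here, as the argument is a short chain of substitutions; the only point worth checking is that the step truly relies on market clearance, since it is the conclusion $p_{j'}=0$ (rather than any nontrivial bound on $p_{j'}$) that lets the envy-free inequality collapse into a clean comparison of qualities. It is also worth noting that the argument uses only $p_j \ge 0$ and never any upper bound such as $p_j \le v_i q_j$, so it remains valid even when item $j$ is over-priced; this is consistent with the characterization in Lemma~\ref{lem-CE-overprice-1}, and in fact feeding an over-priced $j$ into the final inequality $v_i(q_j-q_{j'})\ge p_j$ would already signal a contradiction, recovering the first claim of that lemma.
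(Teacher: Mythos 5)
Your proof is correct and is essentially identical to the paper's: both set $p_{j'}=0$ by market clearance, apply the winner's envy-freeness (Lemma~\ref{lem-EF-property-0}) to the pair $j,j'$, and use $p_j\ge 0$ together with $v_i>0$ to conclude $\qual_j\ge\qual_{j'}$. The closing remark about over-priced items is a correct bonus observation but not part of the paper's argument.
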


\begin{proof}
Since item $j'$ is not allocated to any buyer, its price $p_{j'}=0$.
By envy-freeness and Lemma \ref{lem-EF-property-0}, we have $v_i\qual_{j}\ge v_i\qual_{j}-p_{j}\ge v_i\qual_{j'}-p_{j'}=v_i\qual_{j'}$,
which implies that $\qual_{j}\ge \qual_{j'}$.
\end{proof}

By the above characterization, in any competitive equilibrium, all
allocated items have larger qualities. Hence, by
Lemmas~\ref{lem-EF-property-1} and~\ref{lem-CE-property-1}, we know that
if the set of winners is fixed in a competitive equilibrium, the
allocation is determined implicitly as well. On the other hand, we
observe that Lemma~\ref{lem-CE-property-1} does not hold if
$(\mathbf{p},\mathbf{X})$ is an (revenue maximizing) envy-free solution.
For instance, consider two buyers $i_1,i_2$ with values
$v_{i_1}=10,v_{i_2}=1$ and demand $d_{i_1}=1,d_{i_2}=10$, and twelve
items $j_1,j_2,\ldots,j_{12}$ with qualities
$\qual_{j_1}=10,\qual_{j_2}=5,\qual_{j_3}=\cdots =\qual_{j_{12}}=1$. It can be seen that
in the optimal envy-free solution, we set prices
$p_{j_1}=91,p_{j_2}=\infty,p_{j_3}=\cdots =p_{j_{12}}=1$, and allocate
$X_{i_1}=\{j_1\}, X_{i_2}=\{j_3,\ldots,j_{12}\}$, which generates total
revenue $91+10=101$. In this solution,
$\qual_{j_2}>\qual_{j_3}=\cdots=\qual_{j_{12}}$, but item $j_2$ is not allocated to
any buyer.

%\begin{Lem}\label{lem-CE-property-2}
%For any given solution $(\mathbf{p},\mathbf{X})$, if there is a loser $i$ and a set $S$ of $d_i$ items where all items in
 %$S$ are allocated to buyers with values smaller than $v_i$, then $i$ is not envy-free.
%5The statement continues to hold for the following variants:
%\begin{itemize}
%\item If some of the items in $S$ are priced at 0.
%\item If some of the items in $S$ (but not all) are allocated to buyers with the same value $v_i$ as $i$, and the
%    remaining items in $S$ are allocated to buyers with smaller values or priced at 0.
%\end{itemize}
%\end{Lem}

%\begin{Lem}\label{lem-CE-property-2}
%  Given an envy-free solution $(\mathbf{p},\mathbf{X})$, for any loser $i$ and
%  any subset $S$ of $d_i$ items, none of the following statements holds:
%\begin{itemize}
%\item all items in $S$ are either allocated to winners with values smaller than $v_i$ or priced at 0;
%\item some of the items in $S$ (but not all) are allocated to winners with the same value $v_i$ as $i$, and the remaining %items in $S$ are allocated to winners with smaller values or priced at 0.
%\end{itemize}
%\end{Lem}

\begin{Lem}\label{lem-CE-property-2}
Given an envy-free solution $(\mathbf{p},\mathbf{X})$, a loser $\ell$ and
any subset $T$ of $d_\ell$ items, the following property cannot hold:\newline
A non-empty subset of items in $T$ are either allocated to
winners with values smaller than $v_\ell$ or priced at 0; any other
elements of $T$ are allocated to winners having the same value $v_\ell$ as $\ell$.
\end{Lem}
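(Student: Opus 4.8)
The plan is to argue by contradiction from the loser condition. If the stated property held, then writing $T = T_1 \cup T_2$, where $T_1$ is the non-empty set of items priced at $0$ or owned by winners of value strictly below $v_\ell$ and $T_2$ the items owned by winners of value exactly $v_\ell$, envy-freeness of the loser $\ell$ gives $\sum_{j \in T}(v_\ell \qual_j - p_j) \le 0$. I would aim to contradict this by showing that in fact $\sum_{j \in T}(v_\ell \qual_j - p_j) > 0$, so that $\ell$ strictly prefers the bundle $T$ to winning nothing.

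The positive contributions are easy to see term by term: an item priced at $0$ contributes $v_\ell \qual_j > 0$; an item owned by a winner $i$ with $v_i < v_\ell$ contributes $v_\ell \qual_j - p_j = (v_\ell - v_i)\qual_j + (v_i \qual_j - p_j)$, exceeding the owner's own per-item utility by the strictly positive gap $(v_\ell - v_i)\qual_j$; and an item owned by a winner of value exactly $v_\ell$ contributes precisely that winner's per-item utility. Summing, and using that every winner involved has value at most $v_\ell$ together with the non-emptiness of $T_1$, one obtains the desired strict inequality \emph{provided} no item of $T$ is over-priced relative to $v_\ell$, i.e.\ provided $p_j \le v_\ell \qual_j$ for all $j \in T$.

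The main obstacle is therefore over-pricing: because prices may exceed $v_\ell \qual_j$ (Example~\ref{example-over-price}), individual terms $v_\ell \qual_j - p_j$ can be negative, and the break-even $T_2$ terms leave no slack to absorb them. To control this I would invoke the propagation consequence of Lemma~\ref{lem-EF-property-0}: if a winner $i$ owns an over-priced item then every item outside $X_i$ is over-priced for $i$; contrapositively, any item priced at $0$ lying outside $X_i$ forbids $i$ from over-pricing anything. This immediately settles the case where some item of $T_1$ is priced at $0$ and unallocated (more precisely, lies outside every relevant winner's bundle): then no such winner over-prices, every term of the sum is non-negative with a strict gain coming from $T_1$, and the contradiction follows.

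The remaining, harder case is when all witnesses in $T_1$ are owned by winners of value strictly below $v_\ell$ and over-pricing is present. Here I would either (i) track the cascade forced by propagation --- an over-priced item pushes up the prices of all items outside its owner's bundle, which in turn threatens the non-negativity of the utilities of the other value-at-most-$v_\ell$ winners, eventually forcing a contradiction --- or (ii) argue by a counting bound: since $\ell$ loses, fewer than $d_\ell$ items can satisfy $p_j < v_\ell \qual_j$, and I would exhibit $d_\ell$ such ``good'' items by extracting, from each winner of value at most $v_\ell$ and non-negative utility, a non-over-priced item whose value gap to $v_\ell$ makes it good for $\ell$. Choosing the minimum-value over-pricing winner, as in the proof of Lemma~\ref{lem-CE-overprice-1}, is the natural device for organizing this cascade, and making it terminate cleanly is the crux of the argument.
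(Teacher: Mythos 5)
Your proposal correctly isolates the obstacle (over-priced items) and correctly disposes of the easy cases, but it stops short of a proof: for the hard case you offer two candidate strategies and explicitly leave the crux unresolved. More importantly, the target you set yourself --- establishing $\sum_{j\in T}(v_\ell\qual_j-p_j)>0$ for the \emph{given} set $T$ --- is not deducible from the hypotheses. The unique winner who may over-price (necessarily the minimum-value winner, by the argument of Lemma~\ref{lem-CE-overprice-1}, which applies verbatim to envy-free solutions) can have its over-priced items selected into $T$ while the compensating under-priced items of its bundle are left out; e.g.\ with $d_\ell=1$, if that winner $i$ has $v_i$ just below $v_\ell$ and owns $j_1,j_2$ with $v_i\qual_{j_1}-p_{j_1}=+5$ and $v_i\qual_{j_2}-p_{j_2}=-5$, then $T=\{j_2\}$ satisfies the hypothesis yet $v_\ell\qual_{j_2}-p_{j_2}<0$. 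No cascade or per-item counting can make $T$ itself profitable; the witness of envy has to be a \emph{different} bundle (here $\{j_1\}$). Your counting variant has the same defect: a winner can place $|T\cap X_i|\ge 2$ items into $T$ while owning only one item that is individually ``good'' for $\ell$, so good items cannot be extracted one per item of $T$.

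The paper's proof supplies exactly the missing idea, a bundle-level replacement. For each winner $i$ with $T_i=T\cap X_i\neq\emptyset$, replace $T_i$ by the $|T_i|$ items of $X_i$ maximizing $v_i\qual_j-p_j$; since $\sum_{j\in X_i}(v_i\qual_j-p_j)\ge 0$ by envy-freeness of $i$, an averaging argument gives that this best sub-bundle has non-negative \emph{sum} even when individual items are over-priced, and passing from $v_i$ to $v_\ell\ge v_i$ preserves the inequality, strictly when $v_i<v_\ell$. Keeping the zero-priced items of $T$ (whose contribution is strictly positive when present) and invoking the hypothesis that at least one strict term occurs yields a set $T'$ with $|T'|=d_\ell$ and $\sum_{j\in T'}(v_\ell\qual_j-p_j)>0$, contradicting envy-freeness of $\ell$. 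This replacement step is the content of the lemma's proof and is absent from your proposal.
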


Note that this is a result about envy-free prices, not just competitive
equilibrium.

\begin{proof}
Let $(\mathbf{p},\mathbf{X})$ be an envy-free pair of price and allocation vectors.
Given the loser $\ell$ and $T$ satisfying the conditions of the statement of the Lemma,
we show how to construct a set $T'$ of items that $\ell$ envies.

Let $T=T_0\cup T_1\cup\cdots\cup T_s$ be a partition of $T$ where $T_0$ consists of items
priced at 0 in $(\mathbf{p},\mathbf{X})$ and for $i>0$, $T_i=T\cap X_i$ and $s$ is the number of non-empty elements in $\{T\cap X_i,i\in [n]\}$.
Note that any non-empty $T_i$ satisfies $v_i \leq v_\ell$, and if $T_0=\emptyset$
then $T_i\not=\emptyset$ for some $i>0$ with $v_i<v_\ell$.

Note that $T_0$ satisfies $\sum_{j\in T_0} v_i\qual_j-p_j\geq 0$, where the inequality is
strict if $T_0$ is non-empty. Let $T'_0=T_0$.

Consider any non-empty $T_i$ (with $i>0$).
Let $T'_i$ be the $|T_i|$ items $j\in X_i$ that maximize $v_i\qual_j-p_j$.
We have $\sum_{j\in T'_i} v_i\qual_j-p_j \geq 0$.
Hence $\sum_{j\in T'_i} v_\ell\qual_j-p_j \geq 0$, with strict inequality if $v_i<v_\ell$.

Summing these inequalities, we have $\sum_{i=0}^s \sum_{j\in T'_i} v_\ell\qual_j-p_j \geq 0$, and in
fact the inequality is strict since at least one of the $s+1$ inequalities is strict.
Let $T'=T'_0\cup T'_1\cup\cdots\cup T'_s$; $|T'|=|T|=d_\ell$ and we have shown that
$\ell$ envies $T'$.
\end{proof}

%\begin{proof}
%By contradiction; we use the property to construct a set of items that $i$ envies.
%
%Assume buyers $i_1,\ldots,i_\ell$ are winners such that $v_{i_k}<v_i$
%for each $k=1,\ldots,\ell$, and $\sum_{k=1}^{\ell-1}d_{i_k} < d_i$ and
%$\sum_{k=1}^{\ell}d_{i_k}\ge d_i$. It is easy to see such buyers always
%exist. For each $k=1,\ldots,\ell-1$, we have
%\[\sum_{j\in X_{i_k}}(v_i\qual_j-p_j) > \sum_{j\in X_{i_k}}(v_{i_k}\qual_j-p_j)\ge 0.\]
%We next consider the allocation $X_{i_\ell}$ of buyer $i_\ell$. Let
%$T=\{j_1,\ldots,j_r\}$ be the $r$ items in $X_{i_\ell}$ with the largest
%difference $v_{i_\ell}\qual_j-p_j$, where $r=\sum_{k=1}^{\ell}d_{i_k}-d_i$.
%We claim that $\sum_{j\in T}(v_i\qual_{j}-p_{j})> 0$. Otherwise, we have
%$0\ge \sum_{j\in T}(v_i\qual_{j}-p_{j})> \sum_{j\in
%T}(v_{i_\ell}\qual_{j}-p_{j})$. Since items in $T$ have the largest
%difference $v_{i_\ell}\qual_j-p_j$, we know that for any item $j\in
%X_{i_\ell}\setminus T$, $v_{i_\ell}\qual_j-p_j<0$. Hence, $\sum_{j\in
%X_{i_\ell}}(v_{i_\ell}\qual_{j}-p_{j})<0$, which contradicts envy-freeness
%of $i_\ell$. Therefore, we have
%\[\sum_{k=1}^{\ell-1}\sum_{j\in X_{i_k}}(v_i\qual_j-p_j) + \sum_{j\in T}(v_i\qual_j-p_j)>0.\]
%That is, $i$ is not envy-free. The arguments are similar to the two
%variants; hence the claim follows.
%\end{proof}

\section{Computation of Competitive Equilibrium}

%\subsection{Algorithm}

Our main result of this section is the following.

\begin{Thm}\label{theorem-ce-main}
There is a polynomial algorithm to determine the existence of a
competitive equilibrium; and if one exists, it computes a revenue
maximizing equilibrium.
\end{Thm}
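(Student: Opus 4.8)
The plan is to decouple the problem into choosing a \emph{winner set} $S\subseteq N$ and then choosing prices, and to show that the structural lemmas make both steps tractable. I first observe that a winner set essentially determines the rest of a competitive equilibrium. By Lemma~\ref{lem-CE-property-1} the allocated items of any equilibrium are exactly the $d(S)$ highest-quality items $\{1,\dots,d(S)\}$, so market clearance forces $p_j=0$ for every item $j>d(S)$; and by Lemma~\ref{lem-EF-property-1} the assignment of these top items to the winners is forced up to ties, since sorting winners by value and items by quality, the highest-value winner must receive the highest-quality block of size equal to his demand, and so on. Hence once $S$ is fixed the allocation $\mathbf{X}$ is pinned down, and only the price vector remains to be optimized.

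Second, for a fixed $S$ with its induced allocation, I would write a linear program in the variables $(p_j)_{j\le d(S)}$ maximizing $\sum_j p_j$ subject to: individual rationality $\sum_{j\in X_i}(v_i\qual_j-p_j)\ge 0$ for each winner $i$; winner-side envy-freeness, which by Lemma~\ref{lem-EF-property-0} collapses to the polynomially many pairwise inequalities $v_i\qual_j-p_j\ge v_i\qual_{j'}-p_{j'}$ for $j\in X_i$, $j'\notin X_i$; and loser-side envy-freeness, i.e.\ for every loser $\ell$ and every size-$d_\ell$ subset $T$, $\sum_{j\in T}(v_\ell\qual_j-p_j)\le 0$. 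The last family is exponentially large, but it admits a trivial separation oracle: given $\mathbf{p}$, take the $d_\ell$ items maximizing $v_\ell\qual_j-p_j$ and test whether their total is positive. Thus the program is solvable by the ellipsoid method in polynomial time, and since the unsold prices are fixed at $0$ while each winner's total payment is bounded by $v_i\sum_{j\in X_i}\qual_j$, the objective is bounded and the optimum attained. Feasibility of this program is exactly the existence of an equilibrium with winner set $S$, and its value is the maximum revenue over such equilibria.

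Third, and this is the crux, I need a \emph{canonical} candidate winner set $S^*$ such that, whenever any competitive equilibrium exists, $S^*$ is the winner set of a revenue-maximizing one. My candidate comes from a greedy sweep in non-increasing value order: process buyers from highest value down, and commit a buyer to $S^*$ whenever his demand still fits in the as-yet-unallocated top-quality items, skipping him otherwise. The foregoing examples suggest this is forced: committing is typically \emph{necessary} because any item left unsold is priced at $0$ and hence cheaply available to losers, which tends to preclude their envy-freeness, while the value-ordered processing is precisely what avoids the configuration forbidden by Lemma~\ref{lem-CE-property-2} (a high-value buyer losing, on his preferred items, to winners of strictly smaller value). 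The over-priced-item lemma (Lemma~\ref{lem-CE-overprice-1}) then confirms that the induced allocation, with at most one over-priced winner, necessarily the lowest-value one, is the unique allocation consistent with equilibrium for this winner set.

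The main obstacle is establishing the completeness and canonicity of $S^*$: that if the LP for $S^*$ is infeasible then \emph{no} competitive equilibrium exists, and that $S^*$ simultaneously maximizes revenue. I would prove this by an exchange argument, taking an arbitrary equilibrium with winner set $S$ and transforming it one buyer at a time into one with winner set $S^*$ without destroying envy-freeness or market clearance, using Lemma~\ref{lem-CE-property-2} to rule out the swap configurations that would let a displaced loser envy, and handling ties within a value class $A_k$ by fixing a representative ordering. With $S^*$ in hand the algorithm simply runs the LP of the second step: if feasible it outputs the maximizer, a revenue-maximizing equilibrium, and otherwise it reports that no competitive equilibrium exists.
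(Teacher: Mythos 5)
Your overall architecture matches the paper's: a two-phase algorithm that first fixes a candidate winner set $S^*$, derives the forced allocation from Lemmas~\ref{lem-EF-property-1} and~\ref{lem-CE-property-1}, and then maximizes $\sum_j p_j$ over an LP with winner-side pairwise constraints and exponentially many loser constraints handled by a separation oracle plus the ellipsoid method. That second phase is essentially identical to the paper's {\sc stage~2} and is fine.

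The genuine gap is in your construction of $S^*$. Your rule --- ``commit a buyer whenever his demand still fits in the as-yet-unallocated items, skipping him otherwise'' --- is a per-buyer greedy, and it fails precisely at the first value class $A_k$ where the eligible buyers collectively overflow the remaining capacity $D$. The correct criterion, which the paper isolates, is that an equilibrium exists only if some subset $S'\subseteq\{i\in A_k: d_i\le D\}$ has $\sum_{i\in S'}d_i$ \emph{exactly equal} to $D$: if any capacity is left over, the leftover items are priced at $0$ by market clearance, and a skipped buyer of class $A_k$ then has a bundle violating Lemma~\ref{lem-CE-property-2}. Finding such an $S'$ is a subset-sum computation (polynomial here only because each $d_i\le m$, so dynamic programming in time $O(n^2m)$ applies), not a greedy choice. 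Concretely, with $D=5$ and a class containing demands $4,3,2$, your greedy may commit the demand-$4$ buyer, strand one item at price $0$, and then conclude (via LP infeasibility) that no equilibrium exists, whereas the winner set $\{3,2\}$ fills $D$ exactly and does admit one; and when \emph{no} subset sums to exactly $D$, your procedure never detects that this alone certifies non-existence. This also undermines your exchange argument: the paper's conversion of an arbitrary revenue-maximizing equilibrium to one with winner set $S^*$ relies on the equilibrium's winners in $A_k$ and the algorithm's chosen $S'$ having equal total demand (both equal $D$), so that the items can be reassigned within the class at unchanged prices; a ``representative ordering'' of ties does not recover this. To repair the proof you need to replace the per-buyer greedy in the overflow class by the exact-fill subset-sum step and make ``no exact-fill subset exists'' one of the two places where the algorithm reports non-existence (the other being LP infeasibility).
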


Thus, both the existence problem and the maximization problem become
tractable, as a result of the correlated valuations $v_{ij}=v_i\qual_j$.

\medskip
The algorithm, called \algmax, is divided into two steps. The first step
is to compute a set of `candidate' winners if an equilibrium exists. The
second step is to calculate a `candidate' equilibrium and verify if it
is indeed a (revenue maximizing) equilibrium. Recall that $A_k$, $1\leq
k\leq K$ denotes all the buyers with the $k$th largest value.

\begin{center}
\small{}\tt{} \fbox{
\parbox{5.0in}{\hspace{0.05in} \\
[-0.05in]
\algmax\ {\sc stage~1.}
\begin{enumerate}
\item Let $S^*\leftarrow\emptyset$ be the set of candidate winners
\item Let $k\leftarrow 1$ and let $D\leftarrow m$ be the number of ``available items''
\item While $k\leq K$
    \begin{itemize}
    \item If $d_i>D$ for every $i\in A_k$, let $k\leftarrow k+1$
    \item Else
        \begin{itemize}
        \item Let $S=\{i~|~i\in A_k, \ d_i\leq D\}$
        \item If $d(S) > D$
            \begin{description}
            \item[(a)] {If there is $S'\subseteq S$ s.t. $d(S')= D$

                 {$~~$}let $S^*\leftarrow S^*\cup S'$, and goto \algmax\ {\sc stage~2}
              }
            \item[(b)] Else, a competitive equilibrium does not exist, and return
            \end{description}

        \item Else  $d(S)\le D$
            \begin{description}
            \item[(c)] Let $S^*\leftarrow S^*\cup S$,\ \ $D\leftarrow D-\sum_{i\in S}d_i$,\ \ $k\leftarrow k+1$
            \end{description}
        \end{itemize}
    \end{itemize}
\item Goto \algmax\ {\sc stage~2}
\end{enumerate}
} }
\end{center}

Note that in the above step~3(a) we check whether there is
$S'\subseteq S$ such that $d(S')= D$; this is equivalent to
solving a subset sum problem. However, in our instance, each demand
satisfies $d_i\le m$. Hence, a dynamic programming approach can solve
the problem in time $O(n^2m)$. Hence, {\sc stage~1} runs in strongly
polynomial time.

An input to \algmax\ is all the $n$ buyers  with valuation $v_i$ and demand $d_i$ and all the $m$ items with qualities $q_j$.
\begin{Lem}\label{lemma-ce-stage1}
If an input to \algmax\ has a competitive equilibrium $(\mathbf{p},\mathbf{X})$,
then {\sc stage~1} will not return that an equilibrium does not exist at step~3(b).
\end{Lem}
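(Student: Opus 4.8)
The plan is to assume a competitive equilibrium $(\mathbf{p},\mathbf{X})$ exists, let $W$ be its set of winners, write $w_k=\sum_{i\in W\cap A_k}d_i$ for the number of items won by class-$k$ buyers, and let $D_k$ denote the value of $D$ when the loop begins processing $A_k$. I would track the algorithm against $W$ through an invariant proved by induction on $k$: whenever the loop starts class $A_k$ having exited every earlier class through the skip step or step~3(c), the candidate set coincides with the equilibrium winners already passed, $S^*=W\cap(A_1\cup\cdots\cup A_{k-1})$, and $D_k=m-\sum_{j<k}w_j$. The base case $k=1$ is immediate ($S^*=\emptyset$, $D_1=m$).

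Two consequences of the invariant drive the argument. First, since $d(W)\le m$ I get $w_k\le\sum_{j\ge k}w_j=d(W)-(m-D_k)\le D_k$, so every class-$k$ winner has demand at most $D_k$; hence any buyer with $d_i>D_k$ is forced to lose, which validates the skip step and gives $W\cap A_k\subseteq S$. Second, and this is the crux, I would invoke Lemma~\ref{lem-CE-property-2} to forbid ``spurious'' losers. By the invariant exactly $\sum_{j<k}w_j=m-D_k$ items are allocated to the higher-value classes $A_1,\dots,A_{k-1}$, leaving $D_k$ items that are either won by class-$k$ buyers ($w_k$ of them), won by strictly lower-value buyers, or unallocated (priced $0$); the latter two types are exactly the items a class-$k$ loser may place in the forbidden configuration of Lemma~\ref{lem-CE-property-2}. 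Thus if some $\ell\in A_k$ with $d_\ell\le D_k$ were a loser while $D_k-w_k\ge 1$, I could build a set $T$ of $d_\ell$ of these $D_k$ eligible items containing at least one priced-$0$/lower-value item and otherwise class-$k$ winner items, contradicting envy-freeness of $\ell$.

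With these tools the induction closes: in the skip case $w_k=0$; in step~3(c), where $\sum_{i\in S}d_i\le D_k$, any loser $\ell\in S$ would give $w_k\le\sum_{i\in S}d_i-d_\ell\le D_k-d_\ell$, i.e.\ $D_k-w_k\ge d_\ell\ge 1$, triggering the contradiction above, so $W\cap A_k=S$ and the invariant passes to $k+1$. To finish the stated claim I examine the step where the loop enters the branch $\sum_{i\in S}d_i>D_k$. Here $W\cap A_k\subsetneq S$ (its demand $w_k\le D_k<\sum_{i\in S}d_i$), so a loser $\ell\in S$ exists, and the Lemma~\ref{lem-CE-property-2} argument forces $D_k-w_k=0$, i.e.\ $w_k=D_k$. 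Then $S':=W\cap A_k$ is a subset of $S$ with $\sum_{i\in S'}d_i=D_k$, which is precisely what step~3(a) searches for; the algorithm therefore takes branch~3(a) and never declares non-existence at~3(b).

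I expect the main obstacle to be the careful bookkeeping inside the forbidden-configuration argument: one must check that $T$ can always be assembled (enough eligible items since $D_k\ge d_\ell$, at least one priced-$0$/lower-value item since $D_k-w_k\ge 1$, the remainder drawn from same-value winners) and that the items won by higher-value classes are correctly excluded, which is exactly what makes the count of ``$D_k$ eligible items'' legitimate via the invariant. Getting the $w_k\le D_k$ inequality and the equality $w_k=D_k$ to hinge on the same counting identity is the delicate point that ties the whole induction together.
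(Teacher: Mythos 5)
Your proof is correct and follows essentially the same route as the paper's: both track the equilibrium winner set $W$ against the algorithm's $S^*$ class by class, and both use Lemma~\ref{lem-CE-property-2} twice --- once to show no loser is ever added to $S^*$ in the earlier iterations, and once at the critical iteration to force $\sum_{i\in W\cap A_k}d_i = D$ so that $W\cap A_k$ itself witnesses the subset sought in step~3(a). Your version merely makes the bookkeeping explicit (the invariant $D_k=m-\sum_{j<k}w_j$ and the inequality $w_k\le D_k$) and phrases the final step as exhibiting the witness rather than as a contradiction, which is a presentational rather than substantive difference.
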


\begin{proof}
Let $(\mathbf{p},\mathbf{X})$ be a competitive equilibrium of an input to \algmax.
In this proof, when we refer to winning/losing buyers, or allocated/unallocated
items, we mean with respect to $(\mathbf{p},\mathbf{X})$.
In particular,
let $W$ be the set of winners of $(\mathbf{p},\mathbf{X})$.

Suppose that \algmax\ {\sc stage~1} exits on the $k$-th iteration of the loop.
We claim that during the first $k-1$ iterations, all buyers added to $S^*$ must
be winners i.e. $S^*\subset W$. To see this, suppose alternatively that at iteration $k'<k$,
buyer $\ell$ is the first loser to be added to $S^*$.
In that case, $\ell$ has $d_\ell$ items that
satisfy the conditions of Lemma~\ref{lem-CE-property-2}, contradicting envy-freeness
(Suppose that the winners found by the algorithm during the first $k'-1$
iterations are given their allocation in $(\mathbf{p},\mathbf{X})$.
At iteration $k'$, the algorithm has more than $d_\ell$ available items,
some of which are allocated to buyers with value less than $\ell$,
or are unallocated.). Second,  we claim that $S^*=W\cap (\cup_{i=1}^{k-1} A_{i})$. We will inductively shows that  $W\cap (\cup_{i=1}^{k-1} A_{i})\subset S^*$. The base case $k=2$, since step $3(c)$ is executed during the first $k-1$ iterations, each $i\in A_1$ with $d_i\le m$ will be added to $S^*$, which gives that $W\cap A_1\subset S^*$. Suppose now $k-2$ is true, we argue that the case $k-1$ is true. Since step $3(c)$ is executed during the first $k-1$ iterations, each buyer $i\in A_{k-1}$ with $d_i\le D$ will be added to $S^*$. This is the maximum number of buyers in $A_{k-1}$ which can be added into $W$.  Therefore, $W\cap (\cup_{i=1}^{k-1} A_{i})\subset S^*$. Since $S^*\subset W\cap (\cup_{i=1}^{k-1} A_{i})$, the claim $S^*=W\cap (\cup_{i=1}^{k-1} A_{i})$ holds.

At the final iteration $k$ we must have $S\neq\emptyset$
(otherwise the algorithm will begin a new iteration).
Since $d(S) > D$, we have
$S\backslash W\neq\emptyset$ (members of $S$ have too much demand for them
all to be able to win).
Since there is no subset $S'\subseteq S$
such that $d(S')=D$, we have $d(S\cap W)<D$.
Hence, there are items that are not allocated to buyers in $S^*\cup(S\cap W)$. Note that these items are either priced $0$ or allocated to winners with values smaller than $v_k$. Since $S^*=W\cap (\cup_{i=1}^{k-1} A_{i})$, there are $D$ items  allocated to winners with values no more than $v_k$ or priced $0$, among which there are items either priced $0$ or allocated to winners with values smaller than $v_k$. Hence, for any loser $i'\in S\setminus W$ where $d_{i'}<D$, we can find $d_{i'}$ items that satisfy the
condition of Lemma~\ref{lem-CE-property-2}: a contradiction.
\end{proof}

\begin{Lem}
A revenue maximizing competitive equilibrium $(\mathbf{p},\mathbf{X})$
can be converted to one with equal revenue whose winning set is $S^*$.
\end{Lem}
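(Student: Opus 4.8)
The plan is to compare the winner set $W$ of the given revenue-maximizing equilibrium $(\mathbf{p},\mathbf{X})$ with the set $S^*$ produced by {\sc stage~1}, and to show that they can differ only in a controlled way that preserves revenue. First I would establish a \emph{greedy} structural fact: processing the value classes $A_1,\dots,A_K$ in decreasing order of value, every buyer that {\sc stage~1} adds to $S^*$ at step~3(c) must also win in $W$, and every buyer skipped because its demand exceeds the number of available items must also lose in $W$. This is an induction on the classes. By Lemma~\ref{lem-CE-property-1} the allocated items are exactly the top-quality ones, and by Lemma~\ref{lem-EF-property-1} higher-value winners receive higher-quality items, so after the classes of value strictly above the current class $A_k$ the winners of $W$ occupy the top $m-D$ items, leaving exactly the bottom $D$ items available to buyers of value at most $v:=v_{A_k}$. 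If some buyer $\ell$ that {\sc stage~1} wishes to add were a loser in $W$, then among these $D$ available items at least one would be unallocated or held by a strictly lower-value winner, and Lemma~\ref{lem-CE-property-2} would exhibit a bundle of size $d_\ell\le D$ that $\ell$ envies---a contradiction. Thus $W$ and $S^*$ agree on every class processed before {\sc stage~1} halts.

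If {\sc stage~1} halts by exhausting the loop (never reaching the subset-sum step), this already gives $W=S^*$ and there is nothing to convert. The remaining case is when {\sc stage~1} stops at step~3(a) on some critical class $A_k$, selecting $S'\subseteq S$ with $\sum_{i\in S'}d_i=D$. Here I would first show, by the same use of Lemma~\ref{lem-CE-property-2}, that the winners of $W$ inside $A_k$, written $Q:=W\cap A_k$, also satisfy $\sum_{i\in Q}d_i=D$: if they left an available item unallocated or gave one to a lower-value winner, then the loser $\ell\in S\setminus Q$ (non-empty since $\sum_{i\in S}d_i>D=\sum_{i\in Q}d_i$) would again envy a bundle. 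Consequently all $m$ items are sold, both $Q$ and $S'$ are subsets of $S$ of total demand exactly $D$, and $W$ and $S^*$ differ only inside $A_k$.

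The heart of the conversion is then to reallocate the $D$ lowest-quality items---those held in $\mathbf{X}$ by the common-value class $Q$---to the buyers of $S'$ while keeping $\mathbf{p}$ unchanged. Writing $\pi(j)=v\qual_j-p_j$, Lemma~\ref{lem-EF-property-0} forces all items held by the value-$v$ winners to share a common profit $\pi^\ast$, with $\pi^\ast\ge\pi(j')$ for every other item $j'$. The key observation is that $\pi^\ast=0$: any winner $i\in Q$ gives $\pi^\ast=u_i/d_i\ge 0$, while the loser $\ell\in S\setminus Q$ with $d_\ell\le D$ attains maximum profit $d_\ell\pi^\ast$ (its best $d_\ell$ items may all be taken among the $D\ge d_\ell$ critical items), so envy-freeness of $\ell$ forces $d_\ell\pi^\ast\le 0$. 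With $\pi^\ast=0$ in hand, any partition of the $D$ critical items among $S'$ makes each new winner envy-free with utility $0$; every unchanged higher-value winner stays envy-free since all prices and its own bundle are untouched; every buyer of $Q\setminus S'$ becomes an envy-free loser, its best bundle now having profit $d\pi^\ast=0$; and every lower-value buyer retains its loser condition, which depends only on $\mathbf{p}$ and the buyer's own value and demand. All items remain sold, so market clearance holds and the revenue $\sum_j p_j$ is unchanged.

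The main obstacle is the critical-class step, specifically extracting the equality $\pi^\ast=0$, since this is exactly what makes the value-$v$ winners and losers interchangeable under the fixed price vector; establishing $\sum_{i\in Q}d_i=D$ is a necessary preliminary. The inductive greedy argument of the first step is routine but must be set up carefully, keeping track of the invariant that the already-selected buyers occupy the top items, so that Lemma~\ref{lem-CE-property-2} can be invoked uniformly across the case-3(c), skipped, and critical classes.
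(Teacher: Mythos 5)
Your proposal is correct and follows essentially the same route as the paper: show that $W$ and $S^*$ agree on all value classes above the critical class $A_k$ and have equal total demand within $A_k$ (via Lemmas~\ref{lem-EF-property-1}, \ref{lem-CE-property-1} and~\ref{lem-CE-property-2}), show that the items held by the value-$v$ winners yield zero profit, and then reassign those items at unchanged prices. Your per-item argument that $\pi^*=0$ is a more explicit version of the paper's aggregate utility-tightness step (and is in fact what is needed to verify envy-freeness of the repartitioned bundles); just note that the "common profit" conclusion from Lemma~\ref{lem-EF-property-0} needs one extra line when $|Q|=1$, since that lemma only compares a winner's items to items outside his own bundle.
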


\begin{proof}
Assume that the given instance has a competitive equilibrium
$(\mathbf{p},\mathbf{X})$
and that \algmax\ enters \algmax\ {\sc stage~2}
at the $k$th iteration with the set of candidates $S^*$.
Let $W$ be the set of winners of $(\mathbf{p},\mathbf{X})$, and let
$W'=W\cap (A_1\cup\cdots\cup A_{k-1})$ and $W''=W\setminus W'$. Let
$S^1=S^*\cap (A_1\cup\cdots\cup A_{k-1})$ and $S^2=S^*\setminus S^1$ (note
that $S^2\subseteq A_k$). From the analysis of the above lemma and
Lemma~\ref{lem-CE-property-2}, we know that (i) $W'=S^1$, (ii)
$W''\subseteq A_k$, and (iii) $d(W'')=d(S^2)$. (i) is proved in Lemma~\ref{lem-CE-property-2}. For (ii), if $W''\backslash A_k\neq \emptyset$, then $W''\cap A_k$ will be selected by \algmax\ and \algmax\ in stage $1$ will enter $k'$th iteration with $ k'>k$, which contradicts that $k$ is the final iteration in stage $1$ of \algmax\ .  Hence, $W''\subset A_k$. For (iii),  in $k$th iteration of stage $1$, if \algmax\ enters step $3(a)$, then $d(S^2)=m-d(S^1)=m-d(W')\ge d(W'')$. Suppose $d(S^2)>d(W'')$, which means that some buyer in $S^2$ will be a loser in $(\mathbf{p},\mathbf{X})$.  Due to (ii), in $(\mathbf{p},\mathbf{X})$, there are $m-d(W)$ items priced $0$ and $d(W'')$ items allocated to $W''$. Hence, the loser in $S^2$ of $(\mathbf{p},\mathbf{X})$ will not be envy-free by Lemma~\ref{lem-CE-property-2}, a contradiction. If in $k$th iteration of stage $1$,  \algmax\ enters step $3(c)$, then $S^2$ will be winners by previous argument and we have $d(W'')=d(S^2)$.
Thus, the only difference between $S^*$ and $W$ lies in the selection of
buyers in $A_k$ (this is due to possibly multiple choices in step~3(a)
in \algmax\ {\sc stage~1}). Due to envy-freeness, we have
\[
\sum_{i\in W''\setminus S^2}u_i(\mathbf{p},\mathbf{X})
=\sum_{i\in W''\setminus S^2}\sum_{j\in X_i}(v_i\qual_j-p_j)\ge 0 \ge \sum_{i\in S^2\setminus W''}u_i(\mathbf{p},\mathbf{X})
\]
Since all buyers in $W''\setminus S^2$ and $S^2\setminus W''$ have the
same value, we know that the above inequalities are tight. Thus, if we
reassign the items in $\cup_{i\in W''}X_i$ to the buyers in $S^2$ and
keep the same prices, the resulting output will still be an equilibrium.
\end{proof}

Given the above characterization, the second step of the algorithm
\algmax\ is described in \algmax\ {\sc stage~2}. In the  LP of \algmax\ {\sc stage~2}, there are $m$ variables where each item $j$ has a variable $p^*_j$.
The first two constraints ensure that the price vector is a set of feasible market
clearing prices. The third condition guarantees that all winners are envy-free.
The last condition says that for each loser $i$ and any subset of items
$T$ with $T=|d_i|$, $i$ cannot obtain a positive utility from $T$.
Notice that it is possible that there are exponentially many combinations of $T$;
thus the LP has an exponential number of constraints.
However, observe that for any given solution $\mathbf{p}^*$, it is easy to verify
if $\mathbf{p}^*$ is a feasible solution of the LP or find a violated constraint.
In particular, for every loser $i$, we can order all items $j$ in decreasing order
of $v_i\qual_j-p^*_j$ and verify the subset $T$ composed of the first $d_i$ items;
if $i$ cannot obtain a positive utility from such $T$, then $i$ is envy-free.
Therefore, there is a separation oracle to the LP, and thus, the ellipsoid
method can solve the LP in polynomial time.
Hence, the total running time of \algmax\ is  polynomial.

\begin{center}
\small{}\tt{} \fbox{
\parbox{6.2in}{\hspace{0.05in} \\[-0.05in]
\algmax\ {\sc stage~2.}
\begin{enumerate}
\setcounter{enumi}{4}
\item Allocation $\mathbf{X}^*$ is constructed as follows:
    \begin{itemize}
    \item Let $X^*_i\leftarrow\emptyset$, for each buyer $i\notin S^*$
    \item For each $i\in S^*$ in non-increasing order of $v_i$
        \begin{itemize}
        \item allocate $d_i$ of the remaining items to $i$ in non-increasing order of $\qual_j$
        \end{itemize}
    \end{itemize}

\item Price $\mathbf{p}^*$ is computed according to the following linear program:
       \begin{flushleft}
        \begin{tabular}{cll}
        $\max$ & $\sum_{i\in S^*}\sum_{j\in X^*_i}p^*_j$ &  \\[.05in]
        $s.t.$ & $p^*_j \ge 0$ & $\forall\ j$ \\[.05in]
         & $p^*_j = 0$ &  $\forall\ j\notin \cup_{i\in S^*}X^*_i$ \\[.05in]
         & $v_i\qual_j-p^*_j\ge v_i\qual_{j'}-p^*_{j'}$ & $\forall\ i\in S^*, j\in X^*_i, j'\notin X^*_i$  \\[.05in]
         & $\sum_{j\in T}(v_i\qual_j-p^*_j)\le 0$ & $\forall\ i\notin S^*, \ T \ \textup{with} \ |T|=d_i$
        \end{tabular}
        \end{flushleft}

\item If the above linear program has a feasible solution,
      output the tuple $(\mathbf{p}^*,\mathbf{X}^*)$
\item Else, return that a competitive equilibrium does not exist
\end{enumerate}
} }
\end{center}

If the algorithm returns a tuple $(\mathbf{p}^*,\mathbf{X}^*)$, certainly it is
a competitive equilibrium; further, it is a revenue maximizing equilibrium
because of the objective function in the LP. It is therefore sufficient to
show the following claim to complete the proof of Theorem~\ref{theorem-ce-main}.

\begin{Lem}
If there exists a competitive equilibrium, then {\sc stage~2} will not claim
that an equilibrium does not exist at step~8.
\end{Lem}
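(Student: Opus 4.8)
The goal is to show that the linear program in step~6 is feasible whenever the instance admits a competitive equilibrium; once feasibility is established, the LP objective certifies that the returned tuple is revenue maximizing, so nothing further is needed. The plan is to exhibit an explicit feasible point of the LP, namely the price vector of a competitive equilibrium whose winner set is exactly $S^*$ and whose allocation is the greedy allocation $\mathbf{X}^*$ constructed in step~5. So assume a competitive equilibrium exists. By the preceding lemma, there is a revenue maximizing competitive equilibrium $(\mathbf{p},\mathbf{X})$ whose winner set is $S^*$. I would then argue that, after a harmless relabeling of items, its prices satisfy every constraint of the LP.

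The crux is to show that the allocation $\mathbf{X}$ can be taken to be the greedy allocation $\mathbf{X}^*$. By Lemma~\ref{lem-CE-property-1} the allocated items in $(\mathbf{p},\mathbf{X})$ are precisely the $d(S^*)$ highest-quality items (their quality multiset is forced, even if ties leave the particular item indices ambiguous), and by Lemma~\ref{lem-EF-property-1} a winner with strictly larger value receives items of weakly larger quality. These are exactly the rules that step~5 follows, so across distinct value classes the quality multiset handed to each winner coincides with that in $\mathbf{X}^*$. The only remaining freedom is the distribution of a common quality block among winners sharing the same value $v$. Here I would use the observation that, in a competitive equilibrium, any two items $j_1,j_2$ held by winners of a common value class satisfy $v\qual_{j_1}-p_{j_1}=v\qual_{j_2}-p_{j_2}$: envy-freeness of each owner (Lemma~\ref{lem-EF-property-0}) gives the inequality in both directions. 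Consequently, when such a class contains at least two winners, every item in its block has the same net value, and reallocating the block arbitrarily among these winners while keeping $\mathbf{p}$ fixed preserves both envy-freeness and market clearance. Thus $(\mathbf{p},\mathbf{X})$ can be converted, with the same prices and the same revenue, into a competitive equilibrium in which each winner's bundle has exactly the quality multiset assigned by the greedy rule.

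Finally, since every valuation equals $v_i\qual_j$ and hence depends only on quality, I would fix a quality-preserving bijection $\sigma$ mapping each winner's bundle under the converted equilibrium onto $X^*_i$ and mapping unallocated items to unallocated items, and set $p^*_{\sigma(j)}=p_j$. The pair $(\mathbf{p}^*,\mathbf{X}^*)$ is then a competitive equilibrium, so $\mathbf{p}^*$ is feasible for the LP: nonnegativity and the market-clearing condition $p^*_j=0$ for $j\notin\cup_{i\in S^*}X^*_i$ are immediate; the winner constraints $v_i\qual_j-p^*_j\ge v_i\qual_{j'}-p^*_{j'}$ are the pairwise form of winner envy-freeness from Lemma~\ref{lem-EF-property-0}; and the loser constraints are exactly the envy-freeness conditions for buyers outside $S^*$. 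Hence step~6 admits a feasible solution and step~8 is never reached. I expect the main obstacle to be the bookkeeping of the second paragraph: rigorously matching the greedy allocation to an arbitrary equilibrium allocation in the presence of ties in both values and qualities, for which the equal-net-value identity within a value class is the essential tool.
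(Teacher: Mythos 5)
Your proof is correct and follows essentially the same route as the paper's: exhibit the price vector of a competitive equilibrium whose winner set is $S^*$ (obtained from the preceding lemma) as a feasible point of the LP of {\sc stage~2}, using Lemmas~\ref{lem-EF-property-1} and~\ref{lem-CE-property-1} to identify its allocation with the greedy $\mathbf{X}^*$. The only difference is one of care: the paper simply asserts that ``the allocation of items to the winners is fixed,'' whereas you justify this in the presence of value and quality ties via the equal-net-value identity within a value class and a quality-preserving relabeling of items, which fills in a detail the paper glosses over.
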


\begin{proof}
If there is a competitive equilibrium $(\mathbf{p},\mathbf{X})$, let $W$ be
the set of winners of the equilibrium. By Lemma~\ref{lemma-ce-stage1},
we know that \algmax\ will enter \algmax\ {\sc stage~2}. By the above discussions,
we know that $W$ and $S^*$ only differ in the last $k$th iteration of the main loop of
\algmax\ {\sc stage 1} and replacing all winners in $W\cap A_k$ with $S^*\cap A_k$
gives an equilibrium as well. Further, by Lemma~\ref{lem-EF-property-1}
and~\ref{lem-CE-property-1}, the allocation of items to the winners in
$W$ is fixed. Hence, the equilibrium price vector $\mathbf{p}$ gives a
feasible solution to the LP in the {\sc stage~2}, which implies that the
algorithm will not claim that an equilibrium does not exist.
\end{proof}

\section{Computation of Envy-Free Pricing}
\newcommand{\algref}{{\sc Max-EF}}

In this section, we will ignore the market clearance condition (i.e. that
unsold items are priced at 0) and only consider envy-freeness.
We noted earlier that an envy-free solution always exists.
Our main results are the following.

\begin{Thm}\label{Thm-EF-1}
For the sharp multi-unit demand with $v_i\qual_j$ valuations,
it is NP-hard to solve the revenue-maximizing envy-free pricing problem,
even if there are only three buyers. However, if the demand of each buyer
is bounded by a constant, then the revenue-maximizing envy-free pricing
problem can be solved in polynomial time.
\end{Thm}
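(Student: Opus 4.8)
The plan is to handle the two halves of the statement by completely different means: a reduction for the arbitrary-demand hardness claim, and a three-layer algorithm for the bounded-demand tractability claim. For the \textbf{NP-hardness} part I would reduce from a numerical problem such as \textsc{Partition} (or a subset-sum variant), encoding the numbers in the item qualities $\qual_j$ rather than in the buyers, since only three buyers are allowed. I would fix three values $v_1>v_2>v_3$ and three demands $d_1,d_2,d_3$ so that, by the monotonicity of Lemma~\ref{lem-EF-property-1}, any envy-free solution in which all three buyers win is forced to hand the top $d_1$ qualities to buyer $1$, the next $d_2$ to buyer $2$, and the next $d_3$ to buyer $3$, leaving the rest unsold (and priced high enough to block all envy, exactly as in the example with an $\infty$-priced item following Lemma~\ref{lem-CE-property-1}). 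The reduction then exploits the over-pricing phenomenon of Example~\ref{example-over-price}: each winner's nonnegativity constraint $\sum_{j\in X_i}(v_i\qual_j-p_j)\ge 0$, combined with the cross-buyer inequalities of Lemma~\ref{lem-EF-property-0}, behaves like a knapsack budget, so that maximizing $\sum_j p_j$ amounts to choosing which items to over-price against which to under-price; the existence of prices meeting a target revenue $R^{*}$ is arranged to coincide with the existence of a balanced partition. I expect designing this gadget — choosing qualities and demands so that the envy inequalities pin the prices to exactly the intended knapsack, and checking that no alternative winner set or choice of unsold items beats $R^{*}$ — to be the main obstacle in this half.

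For the \textbf{algorithm} I would follow the structure hinted at in the techniques overview. First I would prove that it suffices to search over a polynomial family of \emph{candidate winner sets}. Scanning the value classes $A_1,\dots,A_K$ from highest to lowest value and using both the monotone block structure of Lemma~\ref{lem-EF-property-1} and the exclusion principle of Lemma~\ref{lem-CE-property-2}, the freedom in selecting winners should collapse to a per-class choice; because each $d_i$ is at most a constant $c$, the number of relevant configurations in the ``boundary'' classes is polynomial. The crucial claim, mirroring the role played by Lemma~\ref{lemma-ce-stage1} in the competitive-equilibrium case, is that at least one revenue-maximizing envy-free solution has its winner set inside this family.

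Second, for a \emph{fixed} winner set I would run a dynamic program over the items sorted by quality $\qual_1\ge\cdots\ge\qual_m$. The state records the current winner (taken in decreasing order of value) and how many items have so far been assigned to it, and each step either assigns the current item to the current winner, declares it unsold, or advances to the next winner; Lemma~\ref{lem-EF-property-1} guarantees this left-to-right block form is without loss of generality, and the bounded demand keeps the number of states polynomial. Third, for a fixed winner set \emph{together with} a fixed allocation I would compute revenue-maximizing envy-free prices by a linear program in the variables $p_j$, whose winner constraints are the inequalities of Lemma~\ref{lem-EF-property-0} and whose loser constraints forbid any $d_i$-item bundle of positive utility. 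Here the bounded demand is used a second, decisive time: each loser contributes only $\binom{m}{d_i}=O(m^{c})$ explicit constraints, so the LP has polynomial size and I do not even need the ellipsoid/separation-oracle argument used for the competitive-equilibrium LP in Theorem~\ref{theorem-ce-main}.

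Finally I would combine the three layers: enumerate the candidate winner sets, for each one run the dynamic program to choose the allocation while using the LP optimum as the per-configuration objective, and return the best solution found. The two points I expect to require the most care are (i) the optimality proof for the candidate family — showing that restricting to polynomially many winner sets loses no revenue — and (ii) reconciling the dynamic program with the linear program, i.e.\ arguing that the LP-optimal revenue decomposes along the quality order (with coupling only between adjacent value classes, whose boundary data is bounded since demands are bounded) so that it can legitimately serve as the additive objective the dynamic program optimizes. These two structural facts, rather than the LP or the enumeration themselves, are the heart of the positive result.
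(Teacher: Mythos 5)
Both halves of your plan contain a genuine gap, and in each case it sits exactly where the paper's actual work is done.

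For the hardness half, the difficulty you propose to exploit cannot be the source of NP-hardness. Once the winner set \emph{and} the allocation are fixed, finding revenue-maximizing envy-free prices is a linear program in the $p_j$ (the paper says as much at the start of Section~\ref{section-winer-known}), so ``choosing which items to over-price against which to under-price'' subject to the winners' budget constraints is polynomial-time solvable and encodes nothing. Moreover, your premise that envy-freeness forces the three winners to take consecutive prefixes of the quality order is false: Lemma~\ref{lem-EF-property-1} only forces monotonicity among \emph{sold} items, and the example following Lemma~\ref{lem-CE-property-1} shows an optimal envy-free solution that skips an item of intermediate quality. That freedom --- which subset of items is sold at all --- is where the paper hides the combinatorial choice: it reduces from X3C with winner set $\{1,3\}$, encodes each 3-set as a quality $R_i=\sum_{a_j\in S_i}M^j$, and uses the envy-freeness of the \emph{losing} buyer $2$ as the knapsack budget $\sum_k \qual_{j_k}\le L$ on the qualities buyer $3$ selects. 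A Partition-style gadget could in principle be made to work, but only if the partition is encoded in the choice of sold items, not in the prices.

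For the algorithmic half, you have correctly identified the missing piece --- ``arguing that the LP-optimal revenue decomposes along the quality order so that it can legitimately serve as the additive objective the dynamic program optimizes'' --- but you have not supplied it, and without it the DP and the LP do not compose: the LP optimum is a global function of the whole allocation, while your DP transitions need a per-item additive contribution. The paper's resolution (Lemma~\ref{lem-EF-property-3} and Proposition~\ref{price-structure}) is a structural characterization of optimal prices: only the last $2\Delta$ items (those of the two lowest-value winners plus a buffer for the losers in $L_S$) need to be priced by an LP, after which every earlier price is forced by the telescoping recurrence $p_{j_s}=v_{i_{b(s)}}(\qual_{j_s}-\qual_{j_{s+1}})+p_{j_{s+1}}$. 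This makes the revenue equal to $\sum_j \bigl(j\cdot w_j-(j-1)\cdot w_{j-1}\bigr)\qual_{z_j}$ plus terms independent of the selection, with coefficients $w_j$ that do not depend on which items are chosen --- and \emph{that} is the additive objective the dynamic program {\sc Solve-DLP} maximizes (Proposition~\ref{Prop-EF-5}). Your enumeration of candidate winner sets and your observation that the loser constraints number only $O(m^{\Delta})$ are both fine, but they are the easy parts; the proof stands or falls on the price-structure lemma you have left as a conjecture.
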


We note that the correlated $v_i\qual_j$ valuations are crucial to derive
an efficient computation when the demands are bounded by a constant;
in contrast, for unit-demand, the envy-free pricing is NP-hard for
general valuations $v_{ij}$ even if every buyer is interested in at
most three items~\cite{CD10}.

\subsection{Algorithm for Constant Demands}

Throughout this subsection, let $\Delta$ be a constant where
$d_i\leq\Delta$ for any buyer $i$, and again, buyers and items are
sorted according to their values and qualities. For any tuple
$(\mathbf{p},\mathbf{X})$, we assume that all unsold items are priced
at $\infty$. This assumption is without loss of generality for envy-freeness.
We will first explore some important properties of an (optimal)
envy-free solution, then at the end of the section present our algorithm.

\subsubsection{Candidate Winner Sets}

For a given set $S$ of buyers, let $\max(S)$ and $\min(S)$ denote the
buyer in $S$ that has the largest and smallest index, respectively.

\begin{Def}[Candidate winner set]\label{Def-candidate-set}
Given a subset of buyers $S\neq \emptyset$, let $k=\max\{r|A_r\cap S\not=\emptyset\}$.
We say $S$ is a {\em candidate winner set}
if the total demand of buyers in $S$ is at most $m$, i.e., $d(S)\leq m$,
and for any $i\in A_1\cup \cdots\cup A_{k-1}\setminus S$,
$d_i>\sum\limits_{i'\in S: ~i'>i}d_{i'}$.
\end{Def}

The definition of candidate winner set is closely related to envy-freeness.
Indeed, due to Lemma~\ref{lem-CE-property-2}, the above definition defines
a slightly larger set (than all possible sets of winners in envy-free
solutions) as the inequality does not consider all the buyers completely
in the same value category $v_j$. However, this definition makes it easier
for us to describe and analyze the algorithm.

\begin{Prop}\label{Prop-EF-1}
For any envy-free solution $(\mathbf{p},\mathbf{X})$,
let $S=\{i~|~X_i\neq\emptyset\}$ be the set of winners.
Then $S$ is either a candidate winner set or $S=\emptyset$.
\end{Prop}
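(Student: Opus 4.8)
The plan is to verify the two defining conditions of a candidate winner set directly, with the second condition established by contradiction through Lemma~\ref{lem-CE-property-2}. Assume $S\neq\emptyset$ (otherwise there is nothing to prove), and set $k=\max\{r\mid A_r\cap S\neq\emptyset\}$. The first condition, $d(S)\leq m$, is immediate: each winner $i\in S$ receives exactly $d_i$ items, the bundles $X_i$ are pairwise disjoint by unit supply, and there are only $m$ items in total, so $\sum_{i\in S}d_i\leq m$.

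For the second condition I would argue by contradiction. Suppose some loser $i\in A_1\cup\cdots\cup A_{k-1}\setminus S$ violates it, i.e. $d_i\leq\sum_{i'\in S:\,i'>i}d_{i'}$. Write $W_{>i}=\{i'\in S\mid i'>i\}$ for the winners with index larger than $i$; since buyers are sorted in non-increasing order of value, every $i'\in W_{>i}$ satisfies $v_{i'}\leq v_i$. Crucially, because $i$ belongs to some $A_r$ with $r<k$, its value strictly exceeds the common value of the buyers in $A_k$; since $A_k\cap S\neq\emptyset$, every winner in $A_k\cap S$ therefore has value strictly below $v_i$ and hence index larger than $i$. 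Thus $W_{>i}$ contains at least one winner of value $<v_i$, so the strictly-smaller-value winners in $W_{>i}$ contribute total demand $D_{<}\geq 1$, while the remaining members of $W_{>i}$ have value exactly $v_i$.

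Next I would build the subset $T$ that triggers Lemma~\ref{lem-CE-property-2}. Since the bundles of $W_{>i}$ contain at least $d_i$ items in total and the strictly-smaller-value winners supply at least one item, I can select $T$ to consist of exactly $d_i$ items drawn entirely from the bundles of $W_{>i}$, including at least one item allocated to a strictly-smaller-value winner. Then $|T|=d_\ell=d_i$, a nonempty subset of $T$ is allocated to winners of value $<v_i$, and every other item of $T$ is allocated to a winner of value $=v_i$ --- precisely the forbidden configuration of Lemma~\ref{lem-CE-property-2} for the loser $\ell=i$. This contradicts envy-freeness, so the second condition holds and $S$ is a candidate winner set.

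The main obstacle, and really the only step demanding care, is the bookkeeping that guarantees the forbidden configuration is realizable: one must confirm that $A_k\cap S$ injects a strictly-smaller-value winner into $W_{>i}$ (so the ``strictly smaller'' part of $T$ is nonempty) and that every selected item lies in a bundle of value $\leq v_i$ (so no item falls outside the two cases of the lemma). It is worth noting that the argument never invokes the ``priced at $0$'' clause of Lemma~\ref{lem-CE-property-2}, since all items of $T$ are allocated; this is exactly what keeps the proof valid for general envy-free pricing rather than only for competitive equilibrium.
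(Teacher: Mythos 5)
Your proof is correct and takes essentially the same route as the paper: the paper's own proof is a one-line appeal to Lemma~\ref{lem-CE-property-2}, and you have simply made explicit the bookkeeping it leaves implicit (the trivial bound $d(S)\leq m$, and the construction of the set $T$ witnessing the forbidden configuration for a loser violating the demand condition). The details you supply --- in particular that every winner in $A_k\cap S$ has index larger than $i$ and value strictly below $v_i$, so the ``strictly smaller value'' part of $T$ is nonempty --- are exactly what is needed for the lemma to apply.
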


\begin{proof}
The claim follows directly from Lemma~\ref{lem-CE-property-2}.
\end{proof}
By Proposition \ref{Prop-EF-1}, in order to calculate the optimal envy-free solutions, we need only to compute the envy-free solutions whose winner sets are candidate winner sets. Therefore, Algorithm {\sc FindWinners}$(S)$ is a procedure for finding candidate winner sets based on Lemma~\ref{lem-CE-property-2} and Proposition \ref{Prop-EF-1}. It is an inductive procedure where  the buyer with larger value must be selected as a winner if his demand is no more than the total demands of all the winners with smaller values (otherwise by Lemma~\ref{lem-CE-property-2}, this buyer will be a loser and not be envy-free if he is not selected).
\begin{center}
\small{}\tt{} \fbox{
\parbox{6.2in}{\hspace{0.05in} \\[-0.05in]
{\sc FindWinners}$(S)$: Input a set of buyers $S$
    \begin{itemize}
    \item Let $i_{\max}=\max(S)$ and assume $i_{\max}\in A_k$
    \item Initially let $W_S=S$
    \item For each buyer $j\in A_1\cup \cdots\cup A_{k-1}$ in reverse order
        \begin{itemize}
         \item If $j\notin S$ and $d_{j}\leq\sum\limits_{i\in W_S:~i>j}d_i$, let $W_S\leftarrow W_S\cup\{j\}$
        \end{itemize}
    \item Return $W_S$
    \end{itemize}
 }}
\end{center}

\begin{Prop}\label{Prop-EF-0}
For any subset of buyers $S$, let $W_S=${\sc FindWinners}$(S)$.
\begin{itemize}
\item If $d(W_S)\leq m$, then $W_S$ is a candidate winner set
      and for any candidate winner set $S'\supseteq S$, $W_S \subseteq S'$.
\item If $d(W_S)> m$, then there is no candidate winner set containing $S$.
\end{itemize}
\end{Prop}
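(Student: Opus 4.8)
The plan is to isolate a single structural invariant of \textsc{FindWinners} and then read off both bullets from it. The central observation is that the loop only ever adds buyers lying in $A_1\cup\cdots\cup A_{k-1}$, so the lowest-value class meeting $W_S$ is still $A_k$ (it contains $i_{\max}=\max(S)$); hence the index $k$ associated with $W_S$ in the definition of candidate winner set equals the one for $S$, and $W_S\cap A_r=S\cap A_r$ for all $r\ge k$. Combined with the reverse (decreasing-index) processing order, this yields the invariant I would exploit: when a buyer $i$ is examined, every buyer of larger index that will ever belong to $W_S$ is already present. Writing $W_S^{(i)}$ for the contents of $W_S$ at the moment $i$ is processed, I would first prove
\[
\{\,i'\in W_S^{(i)} : i'>i\,\}=\{\,i'\in W_S : i'>i\,\},
\]
since any later-added buyer of index exceeding $i$ would have been processed before $i$ in reverse order, while buyers of $S$ with index $>i$ lie in $W_S^{(i)}$ from the outset. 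Consequently the tail-sum $\sum_{i'\in W_S:\,i'>i}d_{i'}$ is already fixed at the time $i$ is tested.

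Next I would verify the strict-inequality part of the candidate-winner-set condition for $W_S$, independently of its total demand. Take any $i\in A_1\cup\cdots\cup A_{k-1}\setminus W_S$. Then $i\notin S$ and \textsc{FindWinners} declined to add $i$, i.e.\ its test failed: $d_i>\sum_{i'\in W_S^{(i)}:\,i'>i}d_{i'}$. By the invariant this tail-sum equals $\sum_{i'\in W_S:\,i'>i}d_{i'}$, so $d_i>\sum_{i'\in W_S:\,i'>i}d_{i'}$, which is exactly the required inequality. Hence $W_S$ always satisfies the structural condition, and it is a candidate winner set precisely when $d(W_S)\le m$.

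The heart of the argument, and the step I expect to be the main obstacle, is the minimality claim: for every candidate winner set $S'\supseteq S$ one has $W_S\subseteq S'$. I would prove this by contradiction, letting $j$ be the \emph{largest} index in $W_S\setminus S'$. Since $S\subseteq S'$ we have $j\notin S$, so $j$ was added by \textsc{FindWinners}, giving $d_j\le\sum_{i\in W_S^{(j)}:\,i>j}d_i=\sum_{i\in W_S:\,i>j}d_i$ by the invariant. By maximality of $j$, every element of $W_S$ of index exceeding $j$ lies in $S'$, so $\sum_{i\in W_S:\,i>j}d_i\le\sum_{i\in S':\,i>j}d_i$, whence $d_j\le\sum_{i\in S':\,i>j}d_i$. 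On the other hand, because $S\subseteq S'$ the class index $k'$ of $S'$ satisfies $k'\ge k$, so $j\in A_1\cup\cdots\cup A_{k-1}\subseteq A_1\cup\cdots\cup A_{k'-1}$ while $j\notin S'$; the candidate-winner-set condition for $S'$ then forces $d_j>\sum_{i\in S':\,i>j}d_i$, a contradiction. I emphasize that this step nowhere uses $d(W_S)\le m$, which is what lets it drive both bullets.

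Finally I would assemble the two cases. If $d(W_S)\le m$, the second paragraph shows $W_S$ is a candidate winner set, and the minimality claim gives $W_S\subseteq S'$ for every candidate winner set $S'\supseteq S$. If instead $d(W_S)>m$, then any candidate winner set $S'\supseteq S$ would satisfy $W_S\subseteq S'$ by the minimality claim, so $d(S')\ge d(W_S)>m$, contradicting $d(S')\le m$; hence no candidate winner set contains $S$.
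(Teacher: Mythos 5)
Your proof is correct; the paper itself offers no argument beyond ``follows directly from the definition of candidate winner set,'' and your write-up supplies exactly the details that are being elided: the reverse-order invariant showing the tail-sum $\sum_{i'\in W_S:\,i'>i}d_{i'}$ is already final when $i$ is tested, and the maximal-index contradiction establishing minimality of $W_S$ among candidate winner sets containing $S$. In particular your observation that the minimality step does not need $d(W_S)\le m$, so that it also yields the second bullet, is the right way to organize the argument.
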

\begin{proof}
Obviously, if $d(W_S)\leq m$, then from the definition of candidate winner set, we know $W_S$ is a candidate winner set. Still, by the definition of candidate winner set, for any $j$ in $W_S\backslash S$, any candidate winner set $S'\supseteq S$, since $d_{j}\leq\sum\limits_{i\in W_S:~i>j}d_i$, then $d_{j}\leq\sum\limits_{i\in S':~i>j}d_i$  (since $S'\supseteq S$), thus, $j\in S'$, hence, $W_S \subseteq S'$. Therefore,  the second statement follows.
\end{proof}
Similar to {\sc FindWinners}$(S)$, {\sc FindLoser}$(S)$ is also an inductive procedure based on the observation that  if a loser is envy-free then a loser with the same valuation but with a larger demand will also be envy-free. For more details, see the proof of Proposition \ref{Prop-EF-2}.
\begin{center}
\small{}\tt{} \fbox{
\parbox{6.2in}{\hspace{0.05in} \\[-0.05in]
{\sc FindLoser}$(S)$: Input a candidate winner set $S$
    \begin{itemize}
    \item Let $i_{\min}=\min(S)$ and assume $i_{\min}\in A_j$
    \item Initially let $L_S=\emptyset$, and $\alpha=\infty$
    \item For each $k=j,j+1,\ldots,K$
        \begin{itemize}
         \item Let $i_{0}=\arg\min\{d_i~|~i\in A_k\backslash\ S\}$
         \item If $d_{i_{0}}<\alpha$, let $L_S\leftarrow L_S\cup\{i_{0}\}$ and $\alpha\leftarrow d_{i_{0}}$
        \end{itemize}
    \item Return $L_S$
    \end{itemize}
 }}
\end{center}

\begin{Prop}\label{Prop-EF-2}
For any given tuple $(\mathbf{p},\mathbf{X})$ with winner set $S$,
suppose that $S$ is a candidate winner set and let $L_S=${\sc FindLoser}$(S)$.
If all losers in $L_S$ are envy-free with respect to $(\mathbf{p},\mathbf{X})$,
then all other losers are envy-free as well.
\end{Prop}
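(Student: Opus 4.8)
The plan is to prove the contrapositive: if some loser $\ell$ is not envy-free, then some loser already in $L_S$ is not envy-free. The engine of the argument is a \emph{domination} observation, namely that a loser with higher value and smaller demand is harder to keep envy-free than one with lower value and larger demand. I would first record two preliminary facts. First, since throughout this subsection every unsold item is priced at $\infty$, any set $T$ that a loser strictly envies must consist entirely of sold items, of which there are exactly $d(S)\le m$; hence a loser $\ell$ with $d_\ell>d(S)$ is automatically envy-free, because it cannot even assemble $d_\ell$ finitely-priced items, so no feasible $T$ can give it positive utility in the sense of Definition~\ref{Def-EF}. Second, I would prove the domination lemma: if $\ell$ envies a set $T$ (so $\sum_{j\in T}(v_\ell\qual_j-p_j)>0$) and $\ell'$ is another loser with $v_{\ell'}\ge v_\ell$ and $d_{\ell'}\le d_\ell$, then $\ell'$ is not envy-free either.

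For the domination lemma, since $\qual_j>0$ and $v_{\ell'}\ge v_\ell$ we have $v_{\ell'}\qual_j-p_j\ge v_\ell\qual_j-p_j$ for every $j$, so $\sum_{j\in T}(v_{\ell'}\qual_j-p_j)>0$ as well. Let $T'\subseteq T$ consist of the $d_{\ell'}$ items of $T$ with the largest values of $v_{\ell'}\qual_j-p_j$. The linchpin is the elementary fact that the top $d_{\ell'}$ summands of a positive-sum collection of $d_\ell\ge d_{\ell'}$ reals have positive sum: were their sum at most $0$, then their minimum, and hence every discarded summand (which is no larger), would be nonpositive, forcing the whole sum nonpositive, a contradiction. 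Thus $\sum_{j\in T'}(v_{\ell'}\qual_j-p_j)>0$ with $|T'|=d_{\ell'}$, witnessing that $\ell'$ is not envy-free.

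With these tools I would split the losers by value class. Let $i_{\min}=\min(S)$ lie in $A_j$, so that every buyer of $S$ sits in classes $A_j,\dots,A_K$. For a loser $\ell$ in a strictly higher class among $A_1,\dots,A_{j-1}$, all of $S$ has larger index than $\ell$, so the candidate-winner-set inequality gives $d_\ell>\sum_{i'\in S:\,i'>\ell}d_{i'}=d(S)$, and $\ell$ is envy-free by the first preliminary fact. For a loser $\ell$ in a class $A_k$ with $k\ge j$, I would exhibit a dominating loser inside $L_S$: letting $i_0$ be the minimum-demand loser of $A_k$, either $i_0$ was inserted into $L_S$ (and then $i_0$ itself dominates $\ell$, having the same value and $d_{i_0}\le d_\ell$), or it was rejected because the running minimum $\alpha$ at that iteration already satisfied $\alpha\le d_{i_0}\le d_\ell$; that $\alpha$ is the demand of some member of $L_S$ drawn from an earlier and hence strictly higher-value class, which therefore dominates $\ell$. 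Either way {\sc FindLoser} has placed a dominator of $\ell$ in $L_S$, so the domination lemma makes $\ell$ envy-free whenever every member of $L_S$ is. Combining the two regimes covers all losers.

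The step I expect to require the most care is matching the ``decreasing-demand staircase'' produced by {\sc FindLoser} to the coverage requirement, i.e.\ arguing precisely that the interplay between the running minimum $\alpha$ and the per-class minimizers guarantees a dominator (higher-or-equal value, smaller-or-equal demand) for \emph{every} loser in classes $A_j,\dots,A_K$. The auto-envy-freeness of the high-value, high-demand losers is then a clean consequence of the $\infty$-pricing convention together with the candidate-winner-set inequality, and the ``top summands'' fact, though routine, is the key transfer principle that carries envy from $\ell$ to its dominator.
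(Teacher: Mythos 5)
Your proposal is correct and follows essentially the same route as the paper: the same two-regime split (losers in classes above $\min(S)$ are automatically envy-free because $d_\ell>d(S)$ and unsold items are priced at $\infty$; losers in the remaining classes are handled by a value-$\ge$/demand-$\le$ dominator that {\sc FindLoser} is guaranteed to have placed in $L_S$). The only cosmetic difference is that the paper proves the envy-transfer step by averaging $\sum_{j\in T'}(v\qual_j-p_j)$ over all size-$d_i$ subsets of $T'$, whereas you take the top $d_{\ell'}$ summands of a positive sum --- two equivalent formulations of the same fact.
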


\begin{proof}
\iffalse
For any $i\in L_S$, if $i$ is envy-free, then for any subset $T$ of items
with $|T|=d_i$, $\sum_{j\in T}(v_i\qual_j-p_j)\leq 0$.
Hence, for any $v\leq v_i$ and $T'$ with $|T'|\geq d_i$ (Consider sum of $v\qual_j-p_j$ over all the elements $j$ of subset $T$  with $d_i$ of $T'$, then each value $v\qual_j-p_j$ is counted by ${|T'|-1 \choose d_i-1}$ times), we have
$$\sum_{j\in T'}(v\qual_j-p_j)=\frac{1}{{|T'|-1 \choose d_i-1}}
\sum_{T\subseteq T',|T|=d_i}\sum_{j\in T}(v\qual_j-p_j)\le\frac{1}{{|T'|-1 \choose
d_i-1}}\sum_{T\subseteq T',|T|=d_i}\sum_{j\in T}(v_i\qual_j-p_j)\leq 0.$$
\fi
Assume there exists a loser $i'$ who is not envy-free, that is, such that there exists a set $T'$ of $d_{i'}$ items such that $\sum_{j\in T'}(v_{i'}q_j-p_j)>0$. This implies that there exists $T\subseteq T'$ with $|T|=d_i$ such that $\sum_{j\in T}(v_{i} q_j-p_j)\geq\sum_{j\in T}(v_{i'}q_j-p_j)>0$: a contradiction.

Hence, by the rules of {\sc FindLoser}, we know that if all the losers in $L_S$
are envy-free, all other losers in $A_{j}\cup \cdots \cup A_K$ are envy-free as well.
On the other hand, for any loser $j\in A_1\cup \cdots \cup A_{j-1}$,
since $S$ is a candidate winner set, we know that
$d_j>\sum\limits_{i\in S: ~i>j}d_i=d(S)$.
Since all unsold items are priced at $\infty$, we know that $j$ is envy-free.
Hence, all losers are envy-free.
\end{proof}

\subsubsection{Bounding the Number of Candidate Winner Sets}

We have the following bound on the number of candidate winner sets.

\begin{Lem}\label{lem-EF-property-4}
For any  $k\in\{2,\ldots,K\}$ and $S\subseteq A_k$, suppose $d(S)\leq m$. Let
\[\mathcal{C}=\big\{S\cup S'~|~ S'\subseteq A_1\cup \cdots \cup A_{k-1}
 \ \textup{and} \ S\cup S' \ \textup{is a candidate winner set}\big\}\]
Then $|\mathcal{C}|\leq \left\lfloor\frac{m}{d(S)}\right\rfloor$.
\end{Lem}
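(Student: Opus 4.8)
The plan is to reduce the lemma to a single ``separation'' statement about total demands and then to finish with an elementary counting argument. Throughout, write $B=A_1\cup\cdots\cup A_{k-1}$ and recall that every element of $\mathcal{C}$ has the form $W=S\cup S'$ with $S'\subseteq B$, and that every such $W$ satisfies $d(S)\le d(W)\le m$ (the lower bound because $S\subseteq W$, the upper bound because $W$ is a candidate winner set, and note $d(S)>0$ since $S\neq\emptyset$). The key claim I would establish is: \emph{any two distinct sets $W_1,W_2\in\mathcal{C}$ satisfy $|d(W_1)-d(W_2)|\ge d(S)$}. Granting this, the demands $\{d(W):W\in\mathcal{C}\}$ are pairwise distinct and, listed in increasing order as $x_1<\cdots<x_t$ with $t=|\mathcal{C}|$, satisfy $x_1\ge d(S)$ and $x_{l+1}-x_l\ge d(S)$; hence $m\ge x_t\ge x_1+(t-1)d(S)\ge t\cdot d(S)$, which yields $t\le\lfloor m/d(S)\rfloor$ at once.

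To prove the separation claim I would argue directly. Fix distinct $W_1,W_2\in\mathcal{C}$, let $i$ be the largest index on which they differ, and assume without loss of generality $i\in W_2\setminus W_1$. Since $W_1\cap A_k=W_2\cap A_k=S$, the differing index lies in $B$, so all of $S$ (and everything the two sets share above $i$) contributes to the common suffix demand $T:=\sum_{i'\in W_1:\,i'>i}d_{i'}=\sum_{i'\in W_2:\,i'>i}d_{i'}\ge d(S)$. I would then introduce the suffix-sum difference $g(j):=\Sigma_{W_1}(j)-\Sigma_{W_2}(j)$, where $\Sigma_W(j)=\sum_{i'\in W,\,i'\ge j}d_{i'}$. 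Note that $g(1)=d(W_1)-d(W_2)$ is exactly the quantity to be bounded, while at the top $g(i)=-d_i$, and the candidacy of $W_1$ (with $i\notin W_1$) gives $d_i>T\ge d(S)$, so $g(i)\le -d(S)$ sits below the band $(-d(S),d(S))$. The goal becomes showing $g$ never lands inside that open band, so in particular $g(1)$ avoids it.

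The heart of the argument, and the step I expect to be the main obstacle, is ruling out band entry, since a priori the demand carried by $W_1$ strictly below $i$ could pull the two totals arbitrarily close together. I would handle this by taking the largest index $j\le i$ with $g(j)\in(-d(S),d(S))$; if none exists we are done, so suppose one does. Such a $j$ must be a differing index (otherwise $g(j)=g(j+1)$, which lies outside the band by maximality), so the jump $g(j)-g(j+1)=\pm d_j$ crosses the band from one side. If $j\in W_1\setminus W_2$, then $j\notin W_2$, and since $W_2$ is a candidate winner set we have $d_j>\Sigma_{W_2}(j+1)$; substituting into $g(j)=\Sigma_{W_1}(j+1)+d_j-\Sigma_{W_2}(j+1)$ gives $g(j)>\Sigma_{W_1}(j+1)\ge T\ge d(S)$, contradicting $g(j)<d(S)$. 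The case $j\in W_2\setminus W_1$ is symmetric, using that $W_1$ is a candidate winner set to force $g(j)<-d(S)$. Thus the candidate-winner inequality is precisely what guarantees that whenever $g$ would move into the band it instead overshoots past it; this establishes the separation claim and hence the lemma.
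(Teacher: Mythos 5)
Your proof is correct, and it takes a genuinely different route from the paper's. The paper proves the bound by an explicit inductive construction: it processes the buyers of $A_1\cup\cdots\cup A_{k-1}$ in decreasing index order, maintains the list of all candidate winner sets over the processed suffix sorted by total demand, and shows the invariant that the $i$-th smallest set in the list has demand at least $i\cdot d(S)$ (when a new buyer $\ell-t$ is added, exactly the sets with demand less than $d_{\ell-t}$ survive without him, and every old set survives with him appended, which interleaves the two sorted lists while preserving the invariant). You instead prove the pairwise separation claim $|d(W_1)-d(W_2)|\ge d(S)$ directly, via the suffix-sum walk $g(j)=\Sigma_{W_1}(j)-\Sigma_{W_2}(j)$: the walk starts at $g(i)\le -d(S)$ below the forbidden band, and the candidate-winner inequality forces any step that would land inside $(-d(S),d(S))$ to overshoot past it, since a differing index $j\notin W_2$ satisfies $d_j>\Sigma_{W_2}(j+1)$ and hence $g(j)>\Sigma_{W_1}(j+1)\ge T\ge d(S)$ (and symmetrically). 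I checked the details — in particular that the largest differing index lies in $A_1\cup\cdots\cup A_{k-1}$ so that $T\ge d(S)$, that a non-differing $j$ has $g(j)=g(j+1)$ outside the band by maximality, and that both case analyses correctly invoke candidacy of the set \emph{not} containing $j$ — and the argument is sound. Your separation property is in fact slightly stronger than the paper's invariant (which only lower-bounds the $i$-th smallest demand by $i\cdot d(S)$ and does not imply consecutive gaps of $d(S)$), and it yields the count by elementary pigeonholing on the interval $[d(S),m]$. What the paper's version buys in exchange is that its proof \emph{is} the enumeration procedure reused verbatim in {\sc stage~1} of {\sc Max-EF}, so correctness and the algorithmic construction of $\mathcal{C}$ come in one package; your argument bounds $|\mathcal{C}|$ cleanly but does not by itself produce the sets.
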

%Lemma 4.1: I expanded proof.
%NOTE: Xiaotie
\begin{proof}
Let $a=d(S)$ and $\ell$ be the index of the buyer $\max(A_{k-1})$.
We add buyers $\ell, \ell -1, \ell-2, \ldots, 1$ into $S$ in sequence and
maintain all the possible candidate winner sets.
Let $\mathcal{C}_0=\{S\}$.
In general, we have constructed $\mathcal{C}_t$ containing all the candidate winner sets
of $\{\ell, \ell-1, \ell-2,\ldots, \ell-t+1\}\cup S$.
We order $\mathcal{C}_t=\{S_{t,1},S_{t,2},\ldots, S_{t,n_t}\}$ such that
$d(S_{t,1})\leq d(S_{t,2})\le\cdots \le d(S_{t,n_t})\leq m$.
We will inductively prove that $d(S_{t,i})\geq id(S)$, for $t=0,1,\cdots,\ell$.

The base case $t=0$ is trivial since $\mathcal{C}_0=\{S\}$.
Suppose the claim holds for some other value of $t$.
That is, we have constructed $\mathcal{C}_t$ containing all the candidate winner sets
of $\{\ell, \ell-1, \ell-2,\ldots, \ell-t+1\}\cup S$ with $d(S_{t,i})\geq id(S)$, for any $i\le n_t$. Now for the case $t+1$, which means we will add $\ell -t $ into $\mathcal{C}_t$ to construct $\mathcal{C}_{t+1}$.
Let $t_s = \max\{i: d(S_{t,i}) < d_{\ell-t}\}$ if $\{i: d(S_{t,i}) < d_{\ell-t}\}\neq\emptyset$, otherwise $t_s=0$.
Let $S_{t+1,j}=S_{t,j}$ for $j=1,2,\cdots, t_s$,
$S_{t+1,j+t_s}=S_{t,j}\cup\{\ell-t\}$ for $j=1,2,\ldots,n_t$.
Clearly $d(S_{t+1,i})\geq id(S)$ for $i\leq t_s$ by the inductive hypothesis.
Also,
$$d(S_{t+1,j+t_s})=d(S_{t,j})+d_{\ell -t}\geq jd(S)+d(S_{t,t_s})\geq (j+t_s)d(S).$$
Let $n_{t+1}=\max\{i: d(S_{t+1,i})\leq m\}$.
Clearly the claim follows for the case $t+1$.

The lemma follows by the condition $m\geq d(S_{\ell,n_{\ell}})\geq n_{\ell}d(S)$.
\end{proof}

\subsubsection{Optimal Winner Sets}

\begin{Def}[Optimal winner set]
A subset of buyers $S$ is called an {\em optimal winner set} if
there is a revenue maximizing envy-free solution $(\mathbf{p},\mathbf{X})$ such
that $S$ is its set of winners.
\end{Def}

\begin{Prop}\label{Prop-EF-4}
Let $S$ be an optimal winner set and let $k=\max\{r~|~A_r\cap S\not=\emptyset\}$.
For any $S'\subseteq A_k$, if $d(S')=d(S\cap A_k)$,
then $S'\cup(S\backslash A_k)$ is an optimal winner set as well.
\end{Prop}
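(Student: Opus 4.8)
The plan is to start from a revenue-maximizing envy-free solution $(\mathbf{p},\mathbf{X})$ whose winner set is $S$, and to turn it into an envy-free solution of the same revenue whose winner set is $S'\cup(S\setminus A_k)$; since the revenue is unchanged and $(\mathbf{p},\mathbf{X})$ was optimal, the new solution is revenue-maximizing as well, which is exactly what ``optimal winner set'' requires. Write $v$ for the common value of the buyers in $A_k$, let $I=\bigcup_{i\in S\cap A_k}X_i$ be the set of items held by the value-$v$ winners (so $|I|=d(S\cap A_k)=d(S')$), and let $\sigma_j=v\qual_j-p_j$ be the surplus of item $j$ to a value-$v$ buyer. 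If $S'=S\cap A_k$ there is nothing to prove, so assume $S'\setminus(S\cap A_k)\neq\emptyset$ and fix a buyer $\ell\in S'\setminus(S\cap A_k)$; note that $\ell$ is a \emph{loser} of $(\mathbf{p},\mathbf{X})$ and that $d_\ell\le d(S')=|I|$. This single swapped-in loser is what I will use to control the construction.

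The construction keeps every winner of $S\setminus A_k$, their allocation, and the price of every item outside $I$ unchanged; it re-prices the items of $I$ to a common surplus $c:=\frac{1}{|I|}\sum_{j\in I}\sigma_j$ by setting $p'_j:=v\qual_j-c$, and then hands out the items of $I$ to the buyers of $S'$ in any way giving each $i\in S'$ exactly $d_i$ of them. Because $\sum_{j\in I}p'_j=v\sum_{j\in I}\qual_j-|I|c=\sum_{j\in I}(v\qual_j-\sigma_j)=\sum_{j\in I}p_j$, the revenue on $I$, and hence the total revenue, is preserved exactly. For envy-freeness among the new winners: any two value-$v$ winners now hold items of identical surplus $c$, so by Lemma~\ref{lem-EF-property-0} neither envies the other's items. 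Moreover, in the original solution each $j\in I$ satisfied $\sigma_j\ge\sigma_{j'}$ for every $j'\notin I$ (its holder, a value-$v$ winner, did not envy $j'$), so $c$, being an average of the $\sigma_j$, dominates every outside surplus; hence no new winner envies an item held by $S\setminus A_k$ or an unsold item, and each new winner obtains utility $d_ic$. This is where the monotonicity of Lemma~\ref{lem-EF-property-1} underlies the fact that the items of $I$ are exactly the low-quality ones left to the value-$v$ class.

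The crux is pinning down $c$. The new winners need $c\ge0$ (nonnegative utility), whereas every value-$v$ loser of the new solution, in particular each demoted buyer of $(S\cap A_k)\setminus S'$ (whose demand is at most $|I|$), needs its best bundle of surplus-$c$ items to be nonpositive, i.e.\ $c\le0$; these reconcile only at $c=0$. Optimality forces this through $\ell$: since the surpluses in $I$ dominate all others, the best $d_\ell$ items of $\ell$ are the $d_\ell$ largest-surplus items of $I$, and envy-freeness of $\ell$ in $(\mathbf{p},\mathbf{X})$ says their sum is at most $0$; being the largest, their average is at least the overall average $c$, so $c\le0$, while nonnegative utility of the value-$v$ winners gives $\sum_{j\in I}\sigma_j\ge0$, i.e.\ $c\ge0$. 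Thus $c=0$, every value-$v$ loser is envy-free, and the revenue equals the original optimum. The step I expect to be the real obstacle is making sure that, after re-pricing $I$ to surplus $0$, no surviving item is over-attractive to a value-$v$ buyer and that the high-value losers stay envy-free: the former needs revenue-maximality (one must argue that any higher-quality item with positive surplus to value $v$ could have had its price raised without breaking the higher winner's envy-freeness, so at an optimum no such item exists), and the latter follows because $S'\cup(S\setminus A_k)$ still obeys the candidate-winner-set inequalities, so the large demands of the higher losers preclude a profitable bundle exactly as in Lemma~\ref{lem-CE-property-2} and Proposition~\ref{Prop-EF-2}. Reconciling these two requirements with the forced value $c=0$ is the delicate interface of the whole argument.
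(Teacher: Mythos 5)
Your overall plan --- hand the items of the lowest value class to $S'$ while preserving revenue --- is the paper's plan, and your derivation that the average surplus $c$ equals $0$ (winners' individual rationality for $c\ge 0$, envy-freeness of the swapped-in loser $\ell$ for $c\le 0$) is sound and is a nice loser-based substitute for part of what the paper gets from Claim~\ref{claim-EF-3}. But the re-pricing step $p'_j:=v\qual_j-c$ is a genuine error. Knowing $c=0$ only controls the \emph{total} surplus of $I$, not the individual surpluses $\sigma_j$, and when some $\sigma_j<0$ (an over-priced item, which the paper shows can occur in optimal envy-free solutions) your re-pricing \emph{lowers} that item's price. A price decrease can make a \emph{higher-value winner} envious, a case your proof never addresses (you only worry about value-$v$ buyers and high-value losers). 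Concretely, take Example~\ref{example-over-price} and add a clone $i_3$ of $i_2$ ($v_{i_3}=10$, $d_{i_3}=2$): the solution $X_{i_1}=\{j_1\}$, $X_{i_2}=\{j_2,j_3\}$ with prices $(45,25,5)$ is still revenue-maximizing (revenue $75$), so $S=\{i_1,i_2\}$ is an optimal winner set with $A_k=\{i_2,i_3\}$ and $S'=\{i_3\}$ admissible. Here $\sigma_{j_2}=-5$, $\sigma_{j_3}=+5$, $c=0$, and your construction sets $p'_{j_2}=20$, $p'_{j_3}=10$; buyer $i_1$ then gets $40-20=20$ from $j_2$ versus utility $60-45=15$ from his own item, so $i_1$ envies and the constructed tuple is not envy-free. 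Your proposed patches do not repair this: the ``raise the price of over-attractive outside items'' idea concerns items outside $I$ (which you already correctly showed have surplus at most $c$), and the candidate-winner-set inequalities only restrict bundles drawn from lower-value winners' items, neither of which touches the real problem of items \emph{inside} $I$ whose prices drop.

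The paper avoids this entirely by never changing a single price. It splits into two cases: if $|S\cap A_k|=|S'|=1$ the swap is a pure relabeling at the old prices (which is exactly what rescues the counterexample above --- give $\{j_2,j_3\}$ to $i_3$ at prices $(25,5)$); otherwise some buyer $\bar{i}\in A_k$ has $d_{\bar{i}}<d(S\cap A_k)$, and then Claim~\ref{claim-EF-3} (total surplus of $I$ is zero) combined with a descending-prefix argument forces \emph{every individual} $\sigma_j$ to be zero already --- if not, the top-$d_{\bar{i}}$ prefix has strictly positive surplus and $\bar{i}$ envies it. So in every case where a nontrivial redistribution is needed, the original prices already have the uniform-zero-surplus property you were trying to manufacture, and no re-pricing is required. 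To fix your proof you would need to replace the re-pricing by this case analysis (or otherwise prove that each $\sigma_j=0$ whenever $\min_{i\in A_k} d_i < |I|$, and fall back to relabeling when it is not).
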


Before proving the proposition, we first establish the following claim.

\begin{Claim}\label{claim-EF-3}
 Let $(\mathbf{p},\mathbf{X})$ be a revenue-maximizing envy-free solution
and let $S$ be the winning set in $(\mathbf{p},\mathbf{X})$, and let
$k=\max\{r~|~A_r\cap S\not=\emptyset\}$. Then every buyer in $A_k$ has utility zero.
\end{Claim}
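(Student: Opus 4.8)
The plan is to argue by contradiction via a \emph{uniform price increase on all sold items}. Since a loser always has utility zero by Definition~\ref{Def-EF}, it suffices to rule out a \emph{winner} in $A_k$ with strictly positive utility. So suppose some winner $i^*\in A_k$ has $u_{i^*}(\mathbf{p},\mathbf{X})>0$, and set $v:=v_{i^*}$. Because $k=\max\{r\mid A_r\cap S\neq\emptyset\}$, no winner has a value smaller than $v$: every winner $i$ satisfies $v_i\geq v$. This ordering, together with the correlated form $v_i\qual_j$, is exactly what will let the single positive utility propagate.

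The key step is to upgrade this one positive utility into positivity for \emph{every} winner. Since $u_{i^*}>0$, there is an item $j^+\in X_{i^*}$ with $v\qual_{j^+}-p_{j^+}>0$. Fix any winner $i$; as $j^+\notin X_i$, the characterization of Lemma~\ref{lem-EF-property-0} gives, for every $j\in X_i$, the bound $v_i\qual_j-p_j\geq v_i\qual_{j^+}-p_{j^+}=(v_i-v)\qual_{j^+}+(v\qual_{j^+}-p_{j^+})>0$, where we use $v_i\geq v$, $\qual_{j^+}>0$, and $v\qual_{j^+}-p_{j^+}>0$. Summing over $j\in X_i$ yields $u_i>0$. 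Hence \emph{all} winners have strictly positive utility.

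Finally, raise the price of every sold item by a common small amount $\delta>0$, leaving the (formally infinite) prices of unsold items untouched. For any two sold items the quantity $v_i\qual_j-p_j$ drops by the same $\delta$, so every winner-versus-sold-item comparison is preserved, and comparisons against unsold items remain trivially satisfied since those stay priced at $\infty$; thus every winner stays envy-free. Each winner's utility decreases by exactly $d_i\delta$, which remains non-negative for $\delta$ small because all utilities are strictly positive, and losers remain envy-free since every sold price only increased while unsold prices are unchanged. The revenue, collected only from sold items, strictly increases by $\delta$ times the number of sold items, which is positive because $i^*$ is a winner. This contradicts revenue-maximality of $(\mathbf{p},\mathbf{X})$, so no winner in $A_k$ can have positive utility, and every buyer in $A_k$ has utility zero.

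The delicate points are conceptual rather than computational: recognizing that a single positive-utility winner in the \emph{bottom} value class forces \emph{all} winners to have positive utility (the one place the correlated structure $v_i\qual_j$ is essential), and being careful that revenue is measured over sold items only, so the $\infty$-priced unsold items neither spoil the objective nor block the uniform raise. I expect the positivity-propagation step to be the crux; once it is in place, verifying that the uniform raise preserves envy-freeness and strictly increases revenue is routine.
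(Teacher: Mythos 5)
Your proof is correct and follows essentially the same route as the paper: both argue that a positive-utility winner in the bottom value class $A_k$ permits a uniform price increase on all sold items that preserves envy-freeness and strictly raises revenue, contradicting optimality. The only difference is in verifying that winners keep non-negative utility after the raise --- the paper fixes $\delta=u_{i_{\max}}(\mathbf{p},\mathbf{X})/d_{i_{\max}}$ and shows by an averaging argument over $X_{i_{\max}}$ that no item becomes over-priced, whereas you first propagate strict positivity of $v_i\qual_j-p_j$ to every allocated item via Lemma~\ref{lem-EF-property-0} and then take $\delta$ small; both verifications are sound.
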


\begin{proof}
Of course, every loser in $A_k$ has utility zero. To show that every winner in $A_k$
has utility zero, we show that if there exists a winner who has positive utility, then prices
can be raised to the point where his utility becomes zero, while maintaining envy-freeness
(contradicting the assumption that $(\mathbf{p},\mathbf{X})$ maximizes revenue).

Let $i_{max}$ be the buyer in $A_k\cap S$ with the highest utility. Let
$$\delta=\frac{u_{i_{\max}}(\mathbf{p},\mathbf{X})}{d_{i_{\max}}}.$$
We claim that $(\mathbf{p}+\delta,\mathbf{X})$ is an envy-free solution as
well, where the price of each item is increased by $\delta$.

Obviously we have $\delta\ge 0$, and the conclusion holds trivially if
$\delta=0$. Suppose $\delta>0$. For the tuple $(\mathbf{p}+\delta,\mathbf{X})$,
since all items have their prices increased by the same amount,
all losers are still envy-free and all winners would not envy the items they don't get.
Hence, we need only to check that each winner still gets a non-negative utility.
For $i_{\max}$, we have $u_{i_{\max}}(\mathbf{p}+\delta,\mathbf{X})=0$.
For any other winner $i\neq i_{\max}$, it holds that $v_i\geq v_{i_{\max}}$. Since $i$ does
not envy any item in $(\mathbf{p},\mathbf{X})$, for any item $j'\in X_i$ and $j\in
X_{i_{\max}}$, it holds that $v_i\qual_{j'}-p_{j'}\geq v_{i}\qual_j-p_j$,
hence, $p_{j'}\leq v_i(\qual_{j'}-\qual_j)+p_j$. Then, we get
$$p_{j'}\leq \frac{\sum_{j\in
X_{i_{\max}}}(v_i(\qual_{j'}-\qual_j)+p_j)}{d_{i_{\max}}}=v_i\qual_{j'}-\frac{\sum_{j\in
X_{i_{\max}}}(v_i\qual_j-p_j)}{d_{i_{\max}}}.$$
This implies that
\[
p_{j'}+\delta \leq v_i\qual_{j'}.
\]
Hence, $u_i(\mathbf{p}+\delta,\mathbf{X})=\sum_{j'\in
X_i}(v_i\qual_{j'}-p_{j'}-\delta)\geq 0$. Therefore,
$(\mathbf{p}+\delta,\mathbf{X})$ is an envy-free solution.
\end{proof}

We are now ready for the proof of Proposition~\ref{Prop-EF-4}.

\begin{proof}[Proof of Proposition~\ref{Prop-EF-4}]
Since $S$ is an optimal winner set, there is an optimal envy-free solution
$(\mathbf{p},\mathbf{X})$ such that $S=\{i~|~X_i\neq\emptyset\}$.
We construct a new allocation $\mathbf{X}'$ with winner set $S'\cup(S\backslash A_k)$
as follows:
\begin{itemize}
\item For any $i\notin A_k$, $X'_i=X_i$.
\item For any $i\in A_k\setminus S'$, $X'_i=\emptyset$.
\item For all the buyers in $S'$, allocate items in $\bigcup_{i\in S\cap A_k}X_i$
to them arbitrarily. (The allocation is feasible as $d(S')=d(S\cap A_k)$.)
\end{itemize}
Obviously, $(\mathbf{p},\mathbf{X}')$ generates the same revenue as $(\mathbf{p},\mathbf{X})$.
We claim that $(\mathbf{p},\mathbf{X}')$ is an envy-free solution
(this implies our desired result that $S'\cup(S\backslash A_k)$ is an optimal winner set).
For any buyer $i\notin A_k$, since prices are not changed, $i$ is still envy-free.

Next we prove that all buyers $i\in A_k$ are envy-free in $(\mathbf{p},\mathbf{X}')$.
Let $J=\cup_{i\in S\cap A_k}X_i$ be the set of items allocated to buyers in $A_k$; we also have
$J=\cup_{i\in S'}X'_i$. Suppose first that $|S\cap A_k|=|S'|=1$; in this case
$(\mathbf{p},\mathbf{X})$ differs trivially from $(\mathbf{p},\mathbf{X}')$, so
$(\mathbf{p},\mathbf{X}')$ is envy-free.

Alternatively, there is some buyer $\bar{i}\in A_k$ with $d_{\bar{i}}<d(S\cap A_k)$.
We show that any item $j\in J$ allocated to any buyer $i\in A_k$ in $(\mathbf{p},\mathbf{X}')$,
affords zero utility to $i$, i.e. $j$ satisfies $v_i\qual_j=p_j$.
Let $v$ be the value shared by all $i\in A_k$, i.e. $v=v_i$ for any $i\in A_k$.
Since $(\mathbf{p},\mathbf{X})$ is envy-free, we have using Claim~\ref{claim-EF-3}
that $u_i(\mathbf{p},\mathbf{X})=0$ for all $i\in A_k$, hence $\sum_{j\in J} v\qual_j - p_j = 0$.
Suppose some $j\in J$ does not satisfy $v\qual_j-p_j=0$.
Arrange all $j\in J$ in descending order of $v\qual_j-p_j$.
Any proper prefix $P$ of this sequence satisfies $\sum_{j\in P} v\qual_j-p_j>0$.
Then buyer $\bar{i}$ envies this prefix.
\end{proof}

%For any buyer $i\in (S\cap A_k)\setminus T$, by the above discussion, his utility is
%zero in both $(\mathbf{p},\mathbf{X})$ and $(\mathbf{p},\mathbf{X}')$;
%thus, $i$ is envy-free as well. It remains to consider buyers in $T$.
%
%Let $i^*\in T$ be the buyer who gets an item $j^*\in \bigcup_{j\in S\cap A_k}X_j$
%with the smallest $v_{i_{\max}}\qual_j-p_j$.
%Since all winners in $S\cap A_k$ have utility zero in $(\mathbf{p},\mathbf{X})$,
%we know that $v_{i_{\max}}\qual_{j^*}-p_{j^*}\le 0$. If $v_{i_{\max}}\qual_{j^*}-p_{j^*} = 0$,
%since $v_{i_{\max}}\qual_j-p_j\ge v_{i_{\max}}\qual_{j^*}-p_{j^*} = 0$ for any
%$j\in \bigcup_{j\in S\cap A_k}X_j$, the inequality is tight, which implies that
%all buyers in $T$ will have utility zero on every single item.
%If $v_{i_{\max}}\qual_{j^*}-p_{j^*} < 0$, then there must exist
%$j'\in \bigcup_{j\in S\cap A_k}X_j$ such that $v_{i_{\max}}\qual_{j'}-p_{j'} > 0$.
%Hence, we must have $|T|=1$. (Otherwise, all the buyers in $T$ are not envy-free
%at $(\mathbf{p},\mathbf{X})$ since they can get positive utility from items in
%$\bigcup_{j\in S\cap A_k}X_j$ by picking them one by one according to decreasing
%order of utilities.) Therefore, $(\mathbf{p},\mathbf{X}')$ is an envy-free solution.

\subsubsection{Maximizing Revenue for a Given Set of Winners and Allocated Items}\label{section-winer-known}

Suppose that $S$ is a candidate winner set and $T$ is a subset of items, where $|T|=d(S)$.
We want to know if there is an envy-free solution such that $S$ is the
set of winners and $S$ wins items in $T$; if yes, we want to find one that maximizes revenue.
This problem can be solved easily by a linear program with an exponential
number of constraints for each $i\in S$. The following combinatorial
algorithm does the same thing; the idea inside is critical to our main algorithm.

We will use the following notations:
$S=\{i_1,i_2,\ldots,i_t\}$ with $i_1 < i_2 < \cdots < i_t$ and
      $T=\{j_1,j_2,\ldots,j_{\ell}\}$ with $j_{1}<j_2<\cdots<j_{\ell}$.
Let $i_{b(s)}$ be the winner of $j_s$, $s=1,2,\ldots, \ell$.
\begin{Remark}\label{Remark-EF-1}
It should be noted that in $LP^{(k)}$, the objective function is equivalent to maximize $p_k$. Also note that  $d_i =O(1)$ for constraint (5) of {\sc MaxRevenue}, for any $i\in [n]$. By the pricing rule (2),(6) and (c) of {\sc MaxRevenue}$(S,T)$, the total revenue $\sum_{j\in T}p_j$ obtained is a linear increasing function of $p_k$, hence maximizing $p_k$ is equivalent to maximizing the total revenue. This remark will be used later in the proof of Lemma \ref{lem-EF-property-3}.
\end{Remark}
We establish the following properties:
\begin{Prop}~\label{price-structure}
Let $(\mathbf{p},\mathbf{X})$ be computed in terms of LP$^{(k^*)}$
where $k^*\in X_{i_t}$ in {\sc MaxRevenue}$(S,T)$.
Let $i_{b(u)}$ be the winner of $j_u$.
Use the convention $j_{\ell-d_{i_t}+1}=k^*$.
For $s=1,2,\ldots, \ell-d_{i_t}$, we have
\begin{enumerate}
\item $v_{i_{b(s)}}\qual_{j_{s+1}}-p_{j_{s+1}}\ge 0$;
\item
$\frac{p_{j_s}}{\qual_{j_s}}\geq \frac{p_{j_{s+1}}}{\qual_{j_{s+1}}}$;
%$\frac{p_{j_1}}{\qual_{j_1}}\geq \frac{p_{j_2}}{\qual_{j_2}}\geq\cdots\geq\frac{p_{j_{\ell-d_{i_{t}}}}}{\qual_{j_{\ell-d_{i_{t}}}}}\geq \frac{p_{k^*}}{\qual_{k^*}}.$
\item
$p_{j_i}\geq p_{j_{i+1}}$.
\end{enumerate}
\end{Prop}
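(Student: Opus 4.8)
The plan is to reduce all three inequalities to a single uniform pricing identity together with one monotonicity fact about the winners' net values; parts~2 and~3 will then be immediate algebraic consequences of part~1.

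First I would record that, for every $i\in\{1,\dots,\ell-d_{i_t}\}$ (so that $j_i\notin X_{i_t}$ by the allocation rule), the prices produced by {\sc MaxRevenue}$(S,T)$ satisfy
\[
v_{i_{b(i)}}\qual_{j_i}-p_{j_i}=v_{i_{b(i)}}\qual_{j_{i+1}}-p_{j_{i+1}},
\qquad\text{equivalently}\qquad
p_{j_i}=v_{i_{b(i)}}\bigl(\qual_{j_i}-\qual_{j_{i+1}}\bigr)+p_{j_{i+1}}.
\]
This must be checked against whichever rule fixes $p_{j_i}$: rule~(c) gives it verbatim when $j_i\in X_{i_1}\cup\cdots\cup X_{i_{t-2}}$, while for $j_i\in X_{i_{t-1}}$ (including the last such item, whose successor is $k^*$) it is precisely constraint~$(3)$ of LP$^{(k^*)}$, since $(3)$ forces $v_{i_{t-1}}\qual_j-p_j$ to be constant across $X_{i_{t-1}}$ and equal to its value at $k^*$. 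In all cases the identity links $j_i$ to $j_{i+1}$ through the value of the \emph{winner of $j_i$}.

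For part~1 I would work with the common net value $c_r:=v_{i_r}\qual_j-p_j$ (for $j\in X_{i_r}$), well defined for $r\le t-1$ by the identity above. Evaluating the identity at a buyer boundary shows $c_r=v_{i_r}\qual_{j'}-p_{j'}$ where $j'$ is the first item of $i_{r+1}$; since $v_{i_r}\ge v_{i_{r+1}}$ and $\qual_{j'}>0$ this yields $c_r\ge v_{i_{r+1}}\qual_{j'}-p_{j'}=c_{r+1}$, i.e.\ $c_1\ge c_2\ge\cdots\ge c_{t-1}$. For the base of this chain I would sum constraint~$(1)$ over $X_{i_t}$ to get $d_{i_t}c_{t-1}\ge\sum_{j\in X_{i_t}}(v_{i_{t-1}}\qual_j-p_j)$, then rewrite the right-hand side as $\sum_{j\in X_{i_t}}(v_{i_t}\qual_j-p_j)+(v_{i_{t-1}}-v_{i_t})\sum_{j\in X_{i_t}}\qual_j$, which is $0+(\ge 0)$ by constraint~$(2)$ and $v_{i_{t-1}}\ge v_{i_t}$. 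Hence $c_{t-1}\ge 0$ and therefore $c_r\ge 0$ for all $r\le t-1$. Part~1 now follows because the identity gives $v_{i_{b(i)}}\qual_{j_{i+1}}-p_{j_{i+1}}=c_{b(i)}\ge 0$.

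Parts~3 and~2 then drop out. Part~3 is immediate from the identity: $p_{j_i}-p_{j_{i+1}}=v_{i_{b(i)}}(\qual_{j_i}-\qual_{j_{i+1}})\ge 0$, since qualities are sorted non-increasingly and values are positive. For part~2, clearing the (positive) denominators reduces $\frac{p_{j_i}}{\qual_{j_i}}\ge\frac{p_{j_{i+1}}}{\qual_{j_{i+1}}}$, after substituting the identity for $p_{j_i}$, to $(\qual_{j_i}-\qual_{j_{i+1}})\bigl(v_{i_{b(i)}}\qual_{j_{i+1}}-p_{j_{i+1}}\bigr)\ge 0$; the first factor is nonnegative by the quality ordering and the second is exactly part~1. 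The main obstacle is part~1, and within it the base case $c_{t-1}\ge 0$, which is the only step genuinely using the two LP constraints $(1)$ and $(2)$ jointly with the value ordering; the uniform identity itself also needs care because the prices on $X_{i_{t-1}}$ are fixed by the LP rather than by the explicit rule~(c).
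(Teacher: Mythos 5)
Your proposal is correct and follows essentially the same route as the paper: both rest on the uniform pricing identity $p_{j_i}=v_{i_{b(i)}}(\qual_{j_i}-\qual_{j_{i+1}})+p_{j_{i+1}}$ coming from constraints $(3)$, $(6)$ and rule~(c), establish $v_{i_{t-1}}\qual_{k^*}-p_{k^*}\ge 0$ from constraints $(1)$ and $(2)$ together with $v_{i_{t-1}}\ge v_{i_t}$, telescope back through the winners to get part~1, and then reduce part~2 to the sign of $(\qual_{j_i}-\qual_{j_{i+1}})(v_{i_{b(i)}}\qual_{j_{i+1}}-p_{j_{i+1}})$. The only (immaterial) differences are that you prove the base case directly by summing $(1)$ rather than by contradiction, and you derive part~3 straight from the identity instead of from part~2.
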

\begin{center}
\small{}\tt{} \fbox{
\parbox{6.5in}{\hspace{0.05in} \\[-0.05in]{\sc MaxRevenue}$(S,T)$:
Input a candidate winner set $S$ and a subset of items $T$ where $|T|=d(S)$
\begin{itemize}
%\item Assume $S=\{i_1,i_2,\cdots,i_t\}$ with $i_1 < i_2 < \cdots < i_t$ and
%      $T=\{j_1,j_2,\cdots,j_{\ell}\}$ with $j_{1}<j_2<\cdots<j_{\ell}$.
\item Let $L_{S}=\textup{\sc FindLoser}(S)$.
\item Allocation $\mathbf{X}$
    \begin{itemize}
    \item Let $X_i\leftarrow\emptyset$, for each buyer $i\notin S$.
    \item Allocate items in $T$ to buyers in $S$ according to the following
        rule (by Lemma~\ref{lem-EF-property-1}): \\
        Buyers with smaller indices obtain items with smaller indices.
    \end{itemize}

\item Price $\mathbf{p}$
    \begin{itemize}
    \item Let $Y=\emptyset$
    \item For each item $j\notin T$, let $p_j=\infty$.
    \item For each item $k\in X_{i_t}$, do the following

    \begin{description}
    \item[(a)] Let $J$ be the last
    $2\Delta$
    %$i_{t-1}+i_t$
    items with the largest indices in $T$ if $|T|> 2\Delta$ \\and $J=T$ otherwise.
        Run the following linear program (denoted by\\ LP$^{(k)}$), which computes
        prices for items in $X_{i_{t-1}}\cup X_{i_t}$

        \begin{flushleft}
        \begin{tabular}{cllcc}
        $\min$ & $ v_{i_{t-1}}\qual_k-p_k$ &  & \\[.05in]
        $s.t.$ & $v_{i_{t-1}}\qual_k-p_k\ge v_{i_{t-1}}\qual_j-p_j$ & $\forall\ j\in X_{i_t}$ & $(1)$ \\[.05in]
         & $\sum\limits_{j\in X_{i_t}}(v_{i_t}\qual_j-p_j)=0$ &  & $(2)$ \\[.05in]
         & $v_{i_{t-1}}\qual_j-p_j=v_{i_{t-1}}\qual_k-p_k$ & $\forall\ j\in X_{i_{t-1}}$ & $(3)$ \\[.05in]
         & $v_{i_{t}}\qual_j-p_j\le v_{i_{t}}\qual_{j'}-p_{j'}$ & $\forall\ j\in X_{i_{t-1}}, j'\in X_{i_t}$ & $(4)$ \\ [.05in]
         & $\sum_{j\in J'}(v_i\qual_j-p_j)\le 0$ & $\forall\ i\in L_S$, $J'\subseteq J$ with $|J'|=d_i$ & $(5)$\\ [.05in]
         & $p_{j_s}=v_{b(s)}(\qual_{j_s}-\qual_{j_{s+1}})+p_{j_{s+1}}$
         & $\forall j_s\in J-X_{i_t}-X_{i_{t-1}}$
         & $(6)$
        \end{tabular}
        \end{flushleft}

    \item[(b)] If the LP$^{(k)}$ in (a) has a feasible solution, let $Y\leftarrow Y\cup \{k\}$.

    \item[(c)] For each item $j_s\in X_{i_{1}}\cup \cdots\cup X_{i_{t-2}}$ in the reverse order
        \begin{itemize}
        \item let $p_{j_s}=v_{i_{b(s)}}(\qual_{j_s}-\qual_{j_{s+1}})+p_{j_{s+1}}$
        \end{itemize}
    \item[(d)] Denote the price vector computed above by $\mathbf{p}^{(k)}$.
    \end{description}
    \end{itemize}

\item If $Y=\emptyset$, return that there is no price vector $\mathbf{p}$ such that $(\mathbf{p},\mathbf{X})$ is envy-free.
\item Otherwise,
    \begin{itemize}
    \item Let $k^*\in Y$ have the largest total revenue for which $(\mathbf{p}^{(k^*)},\mathbf{X})$ is an envy-free solution.
    \item Output the tuple $(\mathbf{p}^{(k^*)},\mathbf{X})$ .
    \end{itemize}
\end{itemize}
} }
\end{center}

\begin{proof}
For the first inequality, consider the last case, $v_{i_{t-1}}\qual_{k^*}-p_{k^*}\geq 0$.
Assume it does not hold.
By (1) in Algorithm  {\sc MaxRevenue},
$\sum_{j\in X_{i_t}}(v_{i_{t-1}}\qual_j-p_j)<0$.
Therefore,
$\sum_{j\in X_{i_t}}(v_{i_{t}}\qual_j-p_j)<0$, which contradicts Formula (2).
Further, $v_{i_{u}}\qual_{k^*}-p_{k^*}
\ge 0$ for all $u:1\leq u\leq t-1$. That is, all other buyers have nonnegative utility on item $k^*$.
Now consider $s=1,2,\ldots,\ell-d_{i_t}$. By (6) and (c) in the algorithm, using the convention $j_{\ell-d_{i_t}+1}=k^*$, item 1 holds as following
$$v_{i_{b(s)}}\qual_{j_{s+1}}-p_{j_{s+1}}\geq
v_{i_{b(s+1)}}\qual_{j_{s+1}}-p_{j_{s+1}}
=v_{i_{b(s+1)}}\qual_{j_{s+2}}-p_{j_{s+2}}\geq\cdots\geq v_{i_{t-1}}\qual_{k^*}-p_{k^*}\ge 0.$$

For the second inequality,  by pricing rule (c), we know that
$$\frac{p_{j_{s}}}{\qual_{j_{s}}}\ge \frac{p_{j_{s+1}}}{\qual_{j_{s+1}}}$$
holds if and only if
$$\frac{v_{i_{b(s)}}(\qual_{j_s}-\qual_{j_{s+1}})+p_{j_{s+1}}}{\qual_{j_{s}}}\ge \frac{p_{j_{s+1}}}{\qual_{j_{s+1}}}$$
which holds if and only if $$(v_{i_{b(s)}}\qual_{j_{s+1}}-p_{j_{s+1}})(\qual_{j_s}-\qual_{j_{s+1}})\ge 0,$$
which follows from the first inequality.

The third inequality follows immediately from the second one and the non-increasing ordering of $\qual$'s.
\end{proof}

\begin{Lem}\label{lem-EF-property-3}
Suppose that $S$ is a candidate winner set and $T$ is a subset of items, where $|T|=d(S)$.
Let $\mathbf{X}$ be the allocation computed in the procedure {\sc MaxRevenue}$(S,T)$.
Then {\sc MaxRevenue}$(S,T)$ determines whether there exists a price vector $\mathbf{p}$ such that
$(\mathbf{p},\mathbf{X})$ is an envy-free solution, and if the answer is `yes', it outputs one
that maximizes the  total revenue given by allocation $\mathbf{X}$.
\end{Lem}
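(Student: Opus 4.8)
The plan is to establish three properties of \textsc{MaxRevenue}$(S,T)$: soundness (every tuple it outputs is envy-free), completeness (it reports infeasibility only when no envy-free price vector is consistent with the fixed allocation $\mathbf{X}$), and optimality (the output maximises $\sum_{j\in T}p_j$ over all envy-free price vectors consistent with $\mathbf{X}$). The first move is to replace envy-freeness by the local conditions the algorithm actually manipulates. For a winner $i$, Lemma~\ref{lem-EF-property-0} makes envy-freeness equivalent to $v_i\qual_j-p_j\ge v_i\qual_{j'}-p_{j'}$ for all $j\in X_i$, $j'\notin X_i$, together with $u_i\ge 0$; for losers, Proposition~\ref{Prop-EF-2} restricts attention to the representatives in $L_S=\textsc{FindLoser}(S)$, and since every item outside $T$ is priced $\infty$, a loser can only envy subsets of $T$. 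This reduces the whole question to a finite linear system in the prices of the items of $T$.

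The second step is to understand the revenue-maximising prices structurally. I would argue that the propagation rule $p_{j_s}=v_{i_{b(s)}}(\qual_{j_s}-\qual_{j_{s+1}})+p_{j_{s+1}}$ in (6)/(c) is forced from above: envy-freeness of $b(s)$ gives $p_{j_s}\le v_{i_{b(s)}}(\qual_{j_s}-\qual_{j_{s+1}})+p_{j_{s+1}}$, so taking equality makes each price maximal given the one below it, and by backward induction the resulting chain dominates any feasible chain pointwise. Hence the entire vector is pinned down by the single scalar $p_k$ at the anchor $k\in X_{i_t}$, and by Remark~\ref{Remark-EF-1} revenue is an increasing linear function of $p_k$, so maximising $p_k$ (the LP objective) maximises revenue. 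Here I invoke Proposition~\ref{price-structure}: the monotonicities $p_{j_i}\ge p_{j_{i+1}}$ and $p_{j_i}/\qual_{j_i}\ge p_{j_{i+1}}/\qual_{j_{i+1}}$, with the non-increasing qualities, upgrade the adjacent indifference constraints to the full family $v_i\qual_j-p_j\ge v_i\qual_{j'}-p_{j'}$, so local optimality implies global winner envy-freeness.

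The third step handles the two lowest winners, the anchor, and the correctness directions. To maximise revenue one first pushes $i_t=\max(S)$ to zero utility by the price-lift argument of Claim~\ref{claim-EF-3} (which never decreases revenue); this is constraint~(2). The coupling between $i_{t-1}$ and $i_t$ --- the one place ``over-pricing'' can bind --- is captured by constraints~(1),(3),(4), with $k$ being the item of $i_t$ most attractive to $i_{t-1}$; since this pivot is not known a priori, the algorithm enumerates all $k\in X_{i_t}$ and keeps feasible ones in $Y$. For soundness I verify that each feasible $\mathbf{p}^{(k)}$ meets every condition: winners by the monotonicity of Step~2, losers by constraint~(5). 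For completeness and optimality I start from any envy-free $\mathbf{p}'$ consistent with $\mathbf{X}$, normalise $i_t$ to zero utility, set $k$ to the maximiser of $v_{i_{t-1}}\qual_j-p'_j$ over $j\in X_{i_t}$, and then raise the remaining prices to the tight values dictated by (3) and the propagation rule; the governing principle is that raising a price never violates (1),(4),(5) nor any buyer's non-envy of cheaper items, so this exhibits a feasible point of $\mathrm{LP}^{(k)}$ with revenue at least that of $\mathbf{p}'$. Thus $k\in Y$ and, because $\mathrm{LP}^{(k)}$ maximises $p_k$, the best vector over $Y$ is globally optimal.

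I expect the main obstacle to be justifying the loser conditions, specifically the reduction in constraint~(5) to $d_i$-subsets of the last $2\Delta$ items $J$. Writing $v_i\qual_{j_s}-p_{j_s}=(v_i\qual_{j_{s+1}}-p_{j_{s+1}})+(v_i-v_{i_{b(s)}})(\qual_{j_s}-\qual_{j_{s+1}})$ and using the monotone allocation of Lemma~\ref{lem-EF-property-1}, this per-item utility is unimodal in $s$, so a loser's most-envied bundle is a contiguous window around the quality level where the winners' values cross $v_i$. I would show this window lies within the last $\le 2\Delta$ items for the representatives produced by \textsc{FindLoser}: losers whose value exceeds every winner's value are automatically envy-free, since the candidate-winner-set condition forces their demand above $|T|=d(S)$ so any $d_i$-bundle must use an $\infty$-priced item, while for the remaining losers the crossing point sits at the boundary of $i_t$'s bundle, which occupies the last $d_{i_t}\le\Delta$ items, so a window of width $d_i\le\Delta$ around it fits inside the last $2\Delta$ items. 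Making this window argument precise, and checking that it meshes with the zero-utility normalisation of the maximum value class from Claim~\ref{claim-EF-3}, is the delicate part of the proof.
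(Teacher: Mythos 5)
Your proposal follows essentially the same route as the paper's proof: the two directions (any envy-free $\mathbf{p}'$ can be lifted, via the pivot $k'=\arg\max_{k\in X_{i_t}}(v_{i_{t-1}}\qual_k-p'_k)$ and the tight propagation rule, to a feasible point of LP$^{(k')}$ with no less revenue; conversely the output is envy-free, with winners handled by the telescoping/monotonicity consequences of Proposition~\ref{price-structure} and losers reduced via Proposition~\ref{Prop-EF-2} and constraint~(5) to $d_i$-subsets of the last $2\Delta$ items), together with the zero-utility normalisation of Claim~\ref{claim-EF-3} and the observation of Remark~\ref{Remark-EF-1} that revenue is increasing in $p_k$. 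The one phrasing to tighten is the ``unimodal/contiguous window'' description of a loser's per-item utilities: what Proposition~\ref{price-structure} actually gives is that $v_i\qual_j-p_j$ is non-positive on a prefix and positive only on a suffix of the sold items, which is exactly the (correct) fact the paper uses to confine the loser check to $J$.
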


\begin{proof}

Assume that there is a price vector $\mathbf{p}'$ such that
$(\mathbf{p}',\mathbf{X})$ is a revenue maximizing envy-free solution, with the winner set $S$ and
the sold item set $T$.
In one direction,
we prove that the algorithm given the input sets $S$ and $T$ returns a solution with at least the
same total revenue.
On another direction, we prove that the solution found by the Algorithm is an envy-free solution for the fixed sets $S$ and $T$. By Remark~\ref{Remark-EF-1}, this sharp envy-free solution must be an optimal one.
The two parts together complete the proof.

For the first direction,
let $S=\{i_1,i_2,\ldots,i_t\}$ with $v_{i_1}\ge v_{i_2}\ge\cdots\ge v_{i_t}$
and $T=\{j_1,j_2,\ldots, j_{\ell}\}$
with $q_{j_1}\geq  q_{j_2}\ge \cdots \ge q_{j_{\ell}}$.
 By Claim~\ref{claim-EF-3},
$\sum_{j\in X_{i_t}}(v_{i_t}\qual_j-p'_j)=0$.
Consider an item $k'=\arg\max_{k\in X_{i_t}}(v_{i_{t-1}}\qual_{k}-p'_{k})$.
Define a new price vector $\mathbf{p}$ as follows:
\begin{itemize}
\item For $j\in X_{i_t}$, $p_j=p'_j$.
\item For $j\in X_{i_{t-1}}$, $p_j=v_{i_{t-1}}(\qual_j-\qual_{k'})+p'_{k'}$.
\item For $j\in X_{i_{1}}\cup \cdots\cup X_{i_{t-2}}$, $p_j$ is defined according
to step~(c) of the procedure {\sc MaxRevenue}.
% given defined $p_k$ for $k\in X_{i_{t-1}}\cup X_{i_{t}}$.
\end{itemize}
It is easy to see that the formulas (1), (2) and (3) of LP$^{(k')}$ are satisfied
for price vector $\mathbf{p}$. By induction on the reverse order of items,
we can show that $\mathbf{p}'\leq \mathbf{p}$ (First, we know, by envy-freeness, $v_{i_{t-1}}q_j-p_j\ge v_{i_{t-1}}q_k-p'_k$, for any $k\in X_{i_t}$ and $j\in X_{i_{t-1}}$, which implies $p_j\le v_{i_{t-1}}q_j-\max_{k\in X_{i_t}}(v_{i_{t-1}}q_k-p'_k)=v_{i_{t-1}}(\qual_j-\qual_{k'})+p'_{k'}$. Hence, $p'_j\le p_j$, for any $j\in X_{i_{t-1}}$. Similarly, by induction, $p_j$ defined  according
to step~(c) of the procedure {\sc MaxRevenue} is the maximum price that item $j$ can be defined as. Thus, $\mathbf{p}'\leq \mathbf{p}$).
This implies that formula (4) of LP$_{k'}$ is satisfied as well. Further, since prices
are monotonically increasing, all losers (in particular, those in $L_S$) are still
sharp envy-free, which implies formula (5) is satisfied. Formula (6) is automatically satisfied.
Hence, $\mathbf{p}$ is a feasible
solution of LP$^{(k')}$. Hence, there is a feasible solution in the above procedure
{\sc MaxRevenue}$(S,T)$ for item $k'$; this implies that $Y\neq \emptyset$ in the
course of the procedure.

In addition, again because of $\mathbf{p}'\leq \mathbf{p}$, the total revenue
generated by $(\mathbf{p},\mathbf{X})$ is at least that by $(\mathbf{p}',\mathbf{X})$.
By the objective of the linear program, we know that the revenue generated
by the solution at LP$^{(k')}$ is at least that given by $(\mathbf{p},\mathbf{X})$
Therefore, by Remark ~\ref{Remark-EF-1}, {\sc MaxRevenue}$(S,T)$ computes a revenue no less than that of
$(\mathbf{p},\mathbf{X})$.

For the second direction, let $(\mathbf{p},\mathbf{X})$ be the output of
the procedure {\sc MaxRevenue}$(S,T)$.
We need to show that $(\mathbf{p},\mathbf{X})$ is an envy-free solution.
Suppose $(\mathbf{p},\mathbf{X})$ is computed in terms of LP$^{(k^*)}$,
where $k^*\in X_{i_t}$.

We first claim that all losers are sharp envy-free.
By Proposition~\ref{Prop-EF-2}, we need only to check if all the losers in
$L_S$ are sharp envy-free for $(\mathbf{p},\mathbf{X})$. Since $p_j=\infty$, $\forall j\notin T$,
we only need to check that all the losers in $L_S$ would not envy the items in $T$.

%According to (5) in Step~ of {\sc MaxRevenue}$(S,T)$, for any $i\in L_S$, we know that, for any buyer $i$,
%$\sum_{j\in T'}(v_i\qual_j-p_j)\le 0$, where $T'=
%\{j_{\ell-d_{i_t}-d_i+1},
%j_{\ell-d_{i_t}-d_i+2},
%\cdots,
%j_{\ell-d_{i_t}}\}\subseteq J$
%(as $d_{i_t}, d_i\le \Delta$).
%Let $j_{\max}$ be the largest index in $T'$ such that $v_i\qual_{j_{\max}}-p_{j_{\max}}\leq 0$.
%Then,
%by monotonicity of price-per-unit-quality in Proposition~\ref{price-structure}, we have
%$$\qual_{j_1}\Big(v_i-\frac{p_{j_1}}{\qual_{j_1}}\Big)\leq %\qual_{j_2}\Big(v_i-\frac{p_{j_2}}{\qual_{j_2}}\Big)\leq\cdots\leq %\qual_{j_{\max}}\Big(v_i-\frac{p_{j_{\max}}}{\qual_{j_{\max}}}\Big)\leq 0,$$
%and $v_i\qual_{j}-p_{j}> 0$, $\forall j\in\{j_{\max+1},\cdots,j_{\ell-d_{i_t}+1}\}$. Hence, the largest up to $d_i$ %positive values in the set $\{v_i\qual_{j}-p_{j}~|~j\in T\}$ are contained in $\big\{v_i\qual_{j}-p_{j}~|~j\in %X_{i_t}\cup\{j_{\ell-d_{i_t}-d_i+1}, j_{\ell-d_{i_t}-d_i+2},\cdots, j_{\ell-d_{i_t}}\}\subset J\big\}$. Therefore, %(5) in Step~ of {\sc MaxRevenue}$(S,T)$ would imply that for any $T'\subset T$ with $|T'|=d_i$, we have %$\sum_{j\in T'}(v_i\qual_j-p_j)\le 0$,
%which means that $i$ is envy-free. Hence, all the losers are envy-free for the tuple.

According to (5) in step~(a) of {\sc MaxRevenue}$(S,T)$, for any $i\in L_S$, we know that
$\sum_{j\in T'}(v_i\qual_j-p_j)\le 0$ for any $T'\subseteq J$ with $|T'|=d_i$.
Choose $T'=
\{j_{\ell-d_{i_t}-d_i+1},
j_{\ell-d_{i_t}-d_i+2},
\cdots,
j_{\ell-d_{i_t}}\}\subseteq J$
(as $d_i\le \Delta$).
Let $j_{\max}$ be the largest index in $T'$ such that $v_i\qual_{j_{\max}}-p_{j_{\max}}\leq 0$.
Then,
by monotonicity of price-per-unit-quality in Proposition~\ref{price-structure}, we have
$$\qual_{j_1}\Big(v_i-\frac{p_{j_1}}{\qual_{j_1}}\Big)\leq \qual_{j_2}\Big(v_i-\frac{p_{j_2}}{\qual_{j_2}}\Big)\leq\cdots\leq \qual_{j_{\max}}\Big(v_i-\frac{p_{j_{\max}}}{\qual_{j_{\max}}}\Big)\leq 0,$$
and $v_i\qual_{j}-p_{j}> 0$, $\forall j\in\{j_{\max+1},j_{\max+2},\ldots,j_{\ell-d_{i_t}}\}$.

Hence, for every loser $i$ in $L_S$, its largest $d_i$ values in the set $\{v_i\qual_{j}-p_{j}~|~j\in T\}$ are contained in $\big\{v_i\qual_{j}-p_{j}~|~j\in\{j_{\ell-d_{i_t}-d_i+1}, j_{\ell-d_{i_t}-d_i+2},\ldots, j_{\ell}\}\subset J\big\}$. Therefore, the requirement (5) in step~(a) of {\sc MaxRevenue}$(S,T)$ would imply that for any $T'\subset T$ with $|T'|=d_i$, we have $\sum_{j\in T'}(v_i\qual_j-p_j)\le 0$,
which means that $i$ is sharp envy-free. Hence, all the losers are sharp envy-free for the tuple.

It remains to show that all winners are sharp envy-free as well. Before doing this, by the pricing rule in subroutine (c), we can easily see that for any $i_u$ and    $ j\in X_{i_u}$  with $u<t$, there exists item $j'\in X_{i_{u+1}}$ such that $p_j=v_{i_u}(\qual_j-\qual_{j'})+p_{j'}$. We will use this particular property to show that all winners are sharp envy-free.  Since $p_j=\infty$ for any $j\notin T$, it suffices to show that any winner
would not envy the items of other winners. The claim follows from the following arguments.
\begin{itemize}
\item All winners get non-negative utility. Formula (2) guarantees that $i_t$ gets non-negative utility for $X_{i_t}$. For any winner $i_u<i_t$, none has over-priced item. It follows by the fact that, $\forall s\in J-X_{i_t}$,
    $p_{j_s}=v_{i_{b(s)}}(\qual_{j_s}-\qual_{j_{s+1}})+p_{j_{s+1}}$ in the algorithm
    and $v_{i_{b(s)}}\qual_{j_{s+1}}-p_{j_{s+1}}\ge 0$ in Proposition~\ref{price-structure}.
    %there is no overpriced item in $X_{i_u}$; this would imply that $i_u$ gets nonnegative utility from %$X_{i_u}$.
    %The base case $i_u=i_{t-1}$, for any item $j\in X_{i_{t-1}}$, %$p_j=v_{i_{t-1}}(\qual_j-\qual_{k^*})+p_{k^*}\leq v_{i_{t-1}}\qual_j$ due to $p_{k^*}\le %v_{i_{t-1}}\qual_{k^*}$ by Proposition~\ref{price-structure}.

    %Hence, the base case holds. For $i=i_{k}$, for any $j\in X_{i_k}$, %$p_j=v_{i_k}(\qual_j-\qual_{j'})+p_{j'}\leq v_{i_k}\qual_j$ due to the induction that $p_{j'}\le %v_{i_{k+1}}\qual_{j'}\le  v_{i_k}\qual_{j'}$. Therefore, all the winners get nonnegative utilities from %what they get.
\item Buyer $i_t$ would not envy items won by any other winner $i_{u}$, where $i_u<i_t$. We show this by induction. Formula (4) shows the base case holds (i.e., $i_t$ would not envy items won by $i_{t-1}$). Then, for any item $j'\in X_{i_t}$ and any item $j\in X_{i_{u}}$, (notice that by the pricing rule, there exists $k\in X_{i_{u+1}}$ such that $p_j=v_{i_u}(q_j-q_k)+p_k$), we have
\begin{displaymath}
\begin{split}
   &v_{i_t}q_j-p_j=v_{i_t}q_j-(v_{i_u}(q_j-q_k)+p_k)=(v_{i_t}-v_{i_u})(q_j-q_k)+v_{i_t}q_k-p_k\\
   &\leq v_{i_t}q_k-p_k \leq v_{i_t}q_{j'}-p_{j'},
\end{split}
\end{displaymath}
     where the first inequality follows from $v_{i_t}-v_{i_u}\leq 0$ and $q_j-q_k \geq 0$, and the second inequality follows from the induction hypothesis.
\item For any $i_u$, $i_u<i_t$, $i_u$ would not envy items won by $i_{t}$. Again, the proof is by induction. For the base case $i_u=i_{t-1}$, for any item $j\in X_{i_{t-1}}$ and item $j'\in X_{i_t}$, it holds that
    $$v_{i_{t-1}}q_j-p_j=v_{i_{t-1}}q_j-(v_{i_{t-1}}(q_j-q_{k^*})+p_{k^*})=v_{i_{t-1}}q_{k^*}-p_{k^*}\geq v_{i_{t-1}}q_{j'}-p_{j'},$$
    where the first equality follows from formula (3) and the inequality follows from formula (1). Hence, the base case holds. Next for any $j\in X_{i_{u}}$ and item $j'\in X_{i_t}$, (notice by pricing rule, there exists $k\in X_{i_{u+1}}$ such that $p_j=v_{i_u}(q_j-q_k)+p_k$), we have
\begin{displaymath}
\begin{split}
    &v_{i_u}q_j-p_j=v_{i_u}q_j-(v_{i_u}(q_j-q_k)+p_k)=v_{i_u}q_k-p_k\\
    &=(v_{i_u}-v_{i_{u+1}})(q_k-q_{j'})+v_{i_u}q_{j'}+(v_{i_{u+1}}(q_k-q_{j'})-p_k).
\end{split}
\end{displaymath}
    Since $v_{i_u}-v_{i_{u+1}}\geq 0$ and $q_k-q_{j'}\geq 0$, and by the induction hypothesis, $v_{i_{u+1}}q_k-p_k\geq v_{i_{u+1}}q_{j'}-p_{j'}$, it holds that $v_{i_u}q_j-p_j\geq v_{i_{u}}q_{j'}-p_{j'}$.
\item  Every winner in $S\backslash \{i_t\}$ would not envy the items won by other winner in $S\backslash\{i_t\}$.
 Use the convention $j_{\ell-d_{i_t}+1}=k^*$, recall
 $\forall u, 1\le u\le \ell-d_{i_t}$, $p_{j_u}=v_{i_{b(u)}}(q_{j_u}-q_{j_{u+1}})+p_{j_{u+1}}$, then for $1\le s<s'\le \ell-d_{i_t}$,
\begin{displaymath}
\begin{split}
  &p_{j_s}-p_{j_{s'}}=\sum_{u=s}^{s'-1}(p_{j_u}-p_{j_{u+1}})=\sum_{u=s}^{s'-1}v_{i_{b(u)}}(q_{j_u}-q_{j_{u+1}})\\
  &\le v_{i_{b(s)}}\sum_{u=s}^{s'-1}(q_{j_u}-q_{j_{u+1}})=v_{i_{b(s)}}(q_{j_s}-q_{j_{s'}}).
\end{split}
\end{displaymath}
 Rewrite $p_{j_s}-p_{j_{s'}}\le v_{i_{b(s)}}(q_{j_s}-q_{j_{s'}})$ as $v_{i_{b(s)}}q_{j_s}-p_{j_s}\ge v_{i_{b(s)}}q_{j_{s'}}-p_{j_{s'}}$, which means buyer with smaller index would not envy items won by buyer with larger index. Similarly, note that  $$p_{j_s}-p_{j_{s'}}=\sum_{u=s}^{s'-1}v_{i_{b(u)}}(q_{j_u}-q_{j_{u+1}})\ge v_{i_{b(s')}}\sum_{u=s}^{s'-1}(q_{j_u}-q_{j_{u+1}})=v_{i_{b(s')}}(q_{j_s}-q_{j_{s'}}).$$ Rewrite $p_{j_s}-p_{j_{s'}}\ge v_{i_{b(s')}}(q_{j_s}-q_{j_{s'}})$ as $v_{i_{b(s')}}q_{j_s}-p_{j_s}\le v_{i_{b(s')}}q_{j_{s'}}-p_{j_{s'}}$, which means buyer with larger index would not envy items won by buyer with smaller index. In all, every winner in $S\backslash \{i_t\}$ would not envy the items won by other winner in $S\backslash\{i_t\}$.
\end{itemize}
Therefore, we know that the tuple $(\mathbf{p},\mathbf{X})$ is an envy-free solution.
\end{proof}

Observe that the computation of step~(a) of {\sc MaxRevenue} does not depend on
the whole set $T$. In fact, we only need to know the last $2\Delta$ items
with largest indices in $T$ to check whether $Y$ is empty or not.
Therefore, whether {\sc MaxRevenue}$(S,T)$  will output a tuple
only depends on the last $2\Delta$ items in $T$. The prices for those $2\Delta$ items
are determined in one of the linear programs there.
Suppose that the
last $2\Delta$ items in $T$ are $J$ and let $j_{\min}=\min\{j\in J\}$, then
if {\sc MaxRevenue}$(S,T)$ output a tuple $(\textbf{p},\textbf{X})$, we can
re-choose any other set $Z\subseteq\{1,2,3,\ldots,j_{\min}-1\}$ with
$|Z|=\ell-2\Delta$ and run {\sc MaxRevenue}$(S,Z\cup J)$, which would always
output an envy-free tuple $(\textbf{p}',\textbf{X}')$ as well. Similarly,
if {\sc MaxRevenue}$(S,T)$ claims that there is no tuple $(\textbf{p},\textbf{X})$
which is an envy-free solution, then  {\sc MaxRevenue}$(S,Z\cup J)$
also claims that no tuple exists.
These observations are critical in our main algorithm \algref.

%Hence, if the demand of each buyer is bounded by a constant $\Delta$ and only the last $2\Delta$ items in $T$ are known, {\sc MaxRevenue}$(S,T)$ would still decide on whether there exists a price vector $\mathbf{p}$ such that $(\mathbf{p},\mathbf{X})$ is an envy-free solution; and if the answer is `yes', it outputs one that maximizes total revenue given the allocation $\mathbf{X}$.

\subsubsection{Only the Winner Set is Known}

Suppose that we are given a candidate winner set $S=\{i_1,i_2,\ldots,i_t\}$ and
a set of items $J=\{j_1,\ldots,j_{2\Delta}\}$ with $i_1<i_2<\cdots<i_t$ and
$j_1<\cdots<j_{2\Delta}$. Assume that $\ell=d(S)> 2\Delta$.
Let $Y=\{1,2,\ldots,j_1-1\}$ denote the set of items that have indices smaller than $j_1$.
Our objective is to pick a subset $Z\subseteq Y$ with $|Z|=\ell-2\Delta$ such
that the revenue given by {\sc MaxRevenue}$(S,Z\cup J)$ is as large as possible.
By steps~(a) and (c) of {\sc MaxRevenue}, for the given set of winners $S$,
the prices of the items in $J$ are already fixed (no matter which $Z$ is chosen).
Hence, to maximize revenue from {\sc MaxRevenue}$(S,Z\cup J)$, it suffices to maximize
revenue (or equivalently, prices) from the items in $Z$.
To this end, we use the approach of dynamic programming to find an optimal solution.

Consider any subset $Z=\{z_1,z_2,\ldots,z_{\ell-2\Delta}\}\subseteq Y$ with
$z_1<z_2<\cdots<z_{\ell-2\Delta}$; denote $z_{\ell-2\Delta+1}=j_{1}$.
Suppose {\sc MaxRevenue}$(S,Z\cup J)$ will output a tuple $(\textbf{p},\textbf{X})$.
As we already know  the  winner to which each $z_j$ will be allocated by
{\sc MaxRevenue}$(S,Z\cup J)$, let $w_j=v_i$ if $z_j\in X_i$, for
$j=1,2,\ldots,\ell-2\Delta$; further, let $w_0=0$.
An important observation is that the values of all $w_j$'s are independent
to the selection of $Z$. By the pricing rule in {\sc MaxRevenue}$(S,Z\cup J)$,
it holds that $p_{z_j}=w_{j}(\qual_{z_j}-\qual_{z_{j+1}})+p_{z_{j+1}}$,
for $j=1,2,\ldots,\ell-2\Delta$. Hence, we have
\begin{displaymath}
\begin{split}
\sum\limits_{j=1}^{\ell-2\Delta}p_{z_j}&=\sum\limits_{j=1}^{\ell-2\Delta}\left(\sum\limits_{u=j}^{\ell-2\Delta}(p_{z_u}-p_{z_{u+1}})+p_{j_{1}}\right)\\
&=\sum\limits_{j=1}^{\ell-2\Delta}\sum\limits_{u=j}^{\ell-2\Delta}\big((\qual_{z_u}-\qual_{z_{u+1}})w_u\big)+(\ell-2\Delta)p_{j_{1}}\\
%&=\sum\limits_{j=1}^{\ell-2\Delta}\sum\limits_{u=1}^{j}\big((\qual_{z_j}-\qual_{z_{j+1}})w_j\big)+(\ell-2\Delta)p_{j_{1}}\\
&=\sum\limits_{j=1}^{\ell-2\Delta}(j\cdot \qual_{z_{j}}w_j-j\cdot \qual_{z_{j+1}}w_j)+(\ell-2\Delta)p_{j_{1}}\\
&=\Bigg[\sum\limits_{j=1}^{\ell-2\Delta}\big(j\cdot w_j-(j-1)\cdot w_{j-1}\big)\qual_{z_j}\Bigg]-\Big[(\ell-2\Delta)(\qual_{j_{1}}w_{\ell-2\Delta}-p_{j_{1}})\Big] \\
&\triangleq R_1-R_2,
\end{split}
\end{displaymath}
where $R_1$ and $R_2$ are the first and second term of the difference, respectively.
By the rule of {\sc MaxRevenue}, the allocation of $z_{\ell-2\Delta}$
(thus, the value $w_{\ell-2\Delta}$) and the price $p_{j_1}$ are fixed.
Hence, to maximize $\sum\limits_{j=1}^{\ell-2\Delta}p_{z_j}$, it suffices
to maximize $R_1$. For any $\alpha,\beta$ with $1\le \alpha\le \beta\le j_1-1$,
let $opt(\alpha,\beta)$ denote the optimal value of the following problem,
denoted by $DLP(\alpha,\beta)$, which picks $\alpha$ items from the first
$\beta$ items to maximize a given objective (recall that $w_j$ is defined
above for $j=1,\ldots,\ell-2\Delta$).
\begin{displaymath}
\begin{split}
\mbox{max}\ \  &\sum\limits_{j=1}^{\alpha}\big(j\cdot w_j-(j-1)\cdot w_{j-1}\big)\qual_{z_j}\\
\mbox{s.t.}\ \ & z_1<z_2<\cdots< z_{\alpha}, \{z_1, z_2,\ldots,z_{\alpha}\}\subseteq\{1,2,\ldots,\beta\}.
\end{split}
\end{displaymath}
The problem that maximizes $R_1$ is exactly $DLP(\ell-2\Delta,j_{1}-1)$,
which can be solved by the following dynamic programming.

\begin{center}
\small{}\tt{} \fbox{
\parbox{6.2in}{\hspace{0.05in} \\[-0.05in]
{\sc Solve-DLP}
\begin{description}
\item [1.] Compute $opt(1,1), opt(1,2),\ldots, opt(1,j_{1}-1)$.
\item [2.] Compute \[
   opt(\alpha,\beta+1) = \left\{
  \begin{array}{l l}
     \max\Big\{opt(\alpha,\beta), opt(\alpha-1,\beta)+(\alpha\cdot w_\alpha-(\alpha-1)w_{\alpha-1})\qual_{\beta+1}\Big\}& \quad \text{if $\beta+1\geq \alpha$}\\
    0 & \quad \text{Otherwise}\\
  \end{array} \right.
\]
\item [3.] Find a subset $Z^*$ that maximizes $opt(\ell-2\Delta,j_{1}-1)$.
\item [4.] Return the output of {\sc MaxRevenue}$(S,Z^*\cup J)$.
\end{description}
} }
\end{center}
\normalsize

The following claim is straightforward from the definition of
$DLP(\alpha,\beta)$ and the above dynamic programming.

\begin{Prop}\label{Prop-EF-5}
Given a candidate winner set $S$ and a subset $J$ of $2\Delta$ items, the above {\sc Solve-DLP} picks in polynomial time a subset $Z\subseteq Y$ with $|Z|=\ell-2\Delta$ such that the
revenue given by {\sc MaxRevenue}$(S,Z\cup J)$ is the maximum if we guessed $S$ and $J$ correctly.
\end{Prop}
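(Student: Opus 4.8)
The plan is to separate the feasibility question from the revenue-optimization question, and then reduce the latter to the correctness of the dynamic program {\sc Solve-DLP}. The preceding discussion already performs most of the algebraic work: it records that whether {\sc MaxRevenue}$(S,Z\cup J)$ outputs a tuple depends only on the last $2\Delta$ items $J$ and not on $Z$, and that whenever a tuple is output the price collected from the items of $Z$ equals $R_1-R_2$, where $R_2=(\ell-2\Delta)(\qual_{j_1}w_{\ell-2\Delta}-p_{j_1})$ is fixed once $S$, $J$ and $p_{j_1}$ are determined. First I would invoke these two facts to conclude that either every admissible $Z$ yields a feasible (envy-free) solution or none does; in the feasible case, maximizing revenue over all admissible $Z$ is exactly maximizing $R_1$, i.e.\ solving $DLP(\ell-2\Delta,\,j_1-1)$. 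This disposes of feasibility and isolates the combinatorial core.

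Second I would establish the structural fact that underpins the reduction: the sequence $w_1,\ldots,w_{\ell-2\Delta}$, and hence the coefficients $c_j:=j\cdot w_j-(j-1)\cdot w_{j-1}$ appearing in the objective of $DLP$, is independent of the choice of $Z$. This holds because the allocation rule of {\sc MaxRevenue} assigns items to winners purely by position in the quality order --- the first $d_{i_1}$ positions to $i_1$, the next $d_{i_2}$ to $i_2$, and so on --- so the buyer occupying the $j$-th position, and therefore $w_j$, is determined by the demand profile of $S$ alone. With the coefficients fixed, the objective $\sum_{j=1}^{\alpha} c_j\qual_{z_j}$ becomes separable across the selected positions, which is precisely what makes a dynamic program applicable.

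Third I would prove correctness of {\sc Solve-DLP} by induction on $\beta$ (simultaneously for all $\alpha\le\beta$), where $opt(\alpha,\beta)$ is the maximum of $\sum_{j=1}^{\alpha} c_j\qual_{z_j}$ over all $z_1<\cdots<z_\alpha$ drawn from $\{1,\ldots,\beta\}$. The key observation is that item $\beta+1$ has the largest index among $\{1,\ldots,\beta+1\}$, so in any selection it is either omitted --- giving $opt(\alpha,\beta)$ --- or included, in which case it must occupy the last ($\alpha$-th) position and contributes exactly $c_\alpha\qual_{\beta+1}$, with the remaining $\alpha-1$ items drawn from $\{1,\ldots,\beta\}$ and contributing $opt(\alpha-1,\beta)$. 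Taking the maximum of the two options yields precisely the stated recurrence; I would also note that only states with $\alpha\le\beta$ are ever consulted, so the boundary convention is immaterial, and the base case $opt(1,\beta)$ is direct.

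Finally, for the running time, the table has $O\big((\ell-2\Delta)(j_1-1)\big)=O(nm)$ entries, each computed in constant time, and {\sc Solve-DLP} concludes with a single call to {\sc MaxRevenue}$(S,Z^*\cup J)$, which by Lemma~\ref{lem-EF-property-3} runs in polynomial time; hence the whole procedure is polynomial. I expect the only genuinely subtle point to be the position-independence of the $w_j$'s in the second step: it is what converts an apparently coupled assignment-and-selection problem into the clean separable form that the dynamic program solves, while everything else is either inherited from the preceding analysis or a routine induction.
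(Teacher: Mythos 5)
Your proposal is correct and follows essentially the same route as the paper, which simply declares the proposition ``straightforward from the definition of $DLP(\alpha,\beta)$ and the above dynamic programming'' after having already established the two facts you lean on (feasibility depends only on $J$, and $\sum_j p_{z_j}=R_1-R_2$ with $R_2$ fixed, so one need only maximize $R_1$ via the DP). Your writeup is a faithful elaboration of that implicit argument --- the position-independence of the $w_j$'s and the include/exclude induction on $\beta$ are exactly the missing details --- so there is nothing to fault.
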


\subsubsection{Algorithm}

In this subsection, we will present our main algorithm \algref.
The algorithm has two stages: {\sc stage~1} is to select the set of possible winners
(candidate winners), and {\sc stage~2} is designed
to calculate all the `candidate' maximum revenue and output an optimal
envy-free solution and maximum revenue.

The algorithm is described as follows.

\begin{center}
\small{}\tt{} \fbox{
\parbox{6.2in}{\hspace{0.05in} \\[-0.05in]
\algref\ {\sc stage~1.}
\begin{enumerate}
\item Initialize $D=\emptyset$ (denote the collection of candidate winner sets).
\item Find $S\subseteq A_1$ such that $d(S)=\max\big\{d(S')~|~ d(S')\leq m, S'\subseteq A_1\big\}$, let $D\leftarrow \{S\}$.
\item For $k=2,\ldots,K$
    \begin{itemize}
  %  \item Sort $A_1\cup A_2\cup \cdots \cup A_k$ in the decreasing order of their values.
    \item For each $d$ such that $1\leq d\leq m$
        \begin{itemize}
        \item Let $S=\text{argmax}_S\{d(S)|d(S)\leq d,S\subset A_k\}$.
        \item Let $S_{0,1}=S$, $n_0=1$ and $\mathcal{C}_0=\{S_{0,1}\}$.
        \item Let $\ell=|A_1\cup A_2\cup \cdots \cup A_{k-1}|$.
        \item For $t=1,2,\ldots,\ell$ do:
            \begin{itemize}
            \item In general, we have constructed $\mathcal{C}_t$ containing all the candidate winner sets
            of $\{\ell-t+1, \ell-t+2,\ldots, \ell\}\cup S$.
            \item We order $\mathcal{C}_t=\{S_{t,1},S_{t,2},\ldots, S_{t,n_t}\}$ such that
            $d(S_{t,1})\leq d(S_{t,2})\leq\cdots \leq d(S_{t,n_t})\leq m$.
            \item
            We now add $\ell -t $ into $\mathcal{C}_t$ to construct $\mathcal{C}_{t+1}$.
                \begin{itemize}
                \item Let $t_s = \max\{i: d(S_{t,i}) < d_{\ell-t}\}$ if $\{i: d(S_{t,i}) < d_{\ell-t}\}\neq\emptyset$, otherwise $ts=0$.
                \item Let $S_{t+1,j}=S_{t,j}$ for $j=1,2,\cdots, t_s$.
                \item Let $S_{t+1,j+t_s}=S_{t,j}\cup\{\ell-t\}$ for $j=1,2,\ldots,n_t$.
                \item Let $n_{t+1}=\max\{i\leq t_s+n_t: d(S_{t+1,i})\leq m\}$.
                \item Let $\mathcal{C}_{t+1}=\{S_{t+1,i}: i
                \leq n_{t+1}, d(S_{t+1, i}) \leq m\}$.
                \end{itemize}
            \end{itemize}
        \item $D\leftarrow D\cup \mathcal{C}_{\ell}$.
        \end{itemize}
    \end{itemize}
\item return D
\end{enumerate}
} }
\end{center}

{\sc stage 1} of \algref\ is designed to select candidate winner sets
one of which contains exactly the winners in an optimal envy-free solution.
For each $1\leq k\leq K\leq n$ and $1\leq d\leq m$
the problem is of one discussed in Lemma~\ref{lem-EF-property-4}.
It constructs $\mathcal{C}$, consisting of up to $\frac{m}{d}$ subsets
of total size $O(\frac{mn}{d})$ in time $O(\frac{mn^2}{d})$.
The total time complexity then adds up to
$O(mn^3\log m)$.
Hence,  \algref\ runs
in strongly polynomial time.

\begin{Prop}
There is an optimal winner set contained in the set $D$.
\end{Prop}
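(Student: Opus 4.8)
The plan is to show that the algorithm \algref\ {\sc stage~1} constructs a collection $D$ of candidate winner sets that is guaranteed to contain some optimal winner set. First I would fix an arbitrary optimal winner set $S^*$ (one exists by the existence of a revenue-maximizing envy-free solution, applying Proposition~\ref{Prop-EF-1} to see that $S^*$ is a candidate winner set). Let $k=\max\{r~|~A_r\cap S^*\neq\emptyset\}$, so that all buyers of $S^*$ lie in $A_1\cup\cdots\cup A_k$, and set $d=d(S^*\cap A_k)$. The key reduction is that, by Proposition~\ref{Prop-EF-4}, I may replace $S^*\cap A_k$ by \emph{any} subset of $A_k$ having the same total demand $d$ without destroying optimality; in particular I may assume $S^*\cap A_k$ is a subset of $A_k$ of maximum total demand not exceeding $d$. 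This is precisely the set $S$ chosen in the inner loop of \algref\ {\sc stage~1} when the loop variables equal this $k$ and this $d$.

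Next I would argue that, with $S=S^*\cap A_k$ fixed, the portion $S^*\cap(A_1\cup\cdots\cup A_{k-1})$ is captured by the sublattice-style enumeration the algorithm performs. The inner construction of $\mathcal{C}_0,\mathcal{C}_1,\ldots,\mathcal{C}_\ell$ is exactly the procedure analyzed in the proof of Lemma~\ref{lem-EF-property-4}: it adds the buyers $\ell,\ell-1,\ldots,1$ of $A_1\cup\cdots\cup A_{k-1}$ one at a time and maintains \emph{all} candidate winner sets of the form $S\cup S'$ with $S'\subseteq\{\ell-t+1,\ldots,\ell\}$ and $d(S\cup S')\le m$. Since $S^*=S\cup(S^*\setminus A_k)$ is itself a candidate winner set with total demand at most $m$, and $S^*\setminus A_k\subseteq A_1\cup\cdots\cup A_{k-1}$, induction on $t$ shows that at every stage $S^*$ (restricted to the buyers considered so far) appears as one of the maintained sets $S_{t,i}$. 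Hence $S^*\in\mathcal{C}_\ell\subseteq D$ for the relevant $k$ and $d$.

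The main thing to verify carefully — and the step I expect to be the principal obstacle — is that the truncation rule $n_{t+1}=\max\{i\le t_s+n_t: d(S_{t+1,i})\le m\}$ never discards the particular set that grows into $S^*$. The worry is that $S^*$ could be ordered beyond position $n_{t+1}$ and pruned prematurely. This is ruled out by the monotonicity/lower-bound inequality $d(S_{t,i})\ge i\cdot d(S)$ established in the proof of Lemma~\ref{lem-EF-property-4}, together with the fact that the sets are kept sorted by total demand: any set whose demand is at most $m$ occupies a position $i$ with $i\cdot d(S)\le d(S_{t,i})\le m$, so it survives the cutoff. Since $S^*$ has $d(S^*)\le m$, the prefix of $S^*$ under construction always satisfies the demand bound and therefore is retained at each step.

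Finally I would assemble these observations: the algorithm iterates over all $k\in\{1,\ldots,K\}$ and all $d\in\{1,\ldots,m\}$, and for the pair $(k,d)$ matching the optimal winner set it both chooses the correct base set $S\subseteq A_k$ (justified by Proposition~\ref{Prop-EF-4}) and enumerates every extension into $A_1\cup\cdots\cup A_{k-1}$ that remains a candidate winner set of demand at most $m$. Therefore some optimal winner set lies in $D$, which is the desired conclusion.
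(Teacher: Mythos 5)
Your proof follows essentially the same route as the paper's: fix an optimal winner set, use Proposition~\ref{Prop-EF-4} to replace its $A_k$-part with the algorithm's chosen subset of equal total demand in the matching $(k,d)$ iteration, and then argue that the enumeration over $A_1\cup\cdots\cup A_{k-1}$ necessarily produces the remaining part. The only difference is that you justify the completeness of the enumeration by tracing the $\mathcal{C}_t$ construction and its truncation rule directly (which is, if anything, more careful than the paper's brief appeal to Proposition~\ref{Prop-EF-0}), so the argument is correct and matches the paper's proof.
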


\begin{proof}
Now suppose there is an optimal winner set $W$,  if $W\subseteq A_1$,
then by Proposition~\ref{Prop-EF-4}, the set $S$ selected in above algorithm
is an optimal winner set and we are done. Otherwise, let $i_{\max}=\max(W)$;
suppose $i_{\max}\in A_{k^*}$, where $k^*\geq 2$, and let $w^*=d(W\cap A_{k^*})$.
Now consider the $k^*$th and $w^*$th round of the for loop.
There exists $T\subseteq A_{k^*}$ such that $d(T)=w^*$.
By Proposition~\ref{Prop-EF-4}, we know that $(W\backslash (W\cap A_k))\cup T$
is an optimal winner set. By the procedure of the algorithm and
Proposition~\ref{Prop-EF-0} and the proof of Lemma~\ref{lem-EF-property-4}, the algorithm would find all the candidate winner
sets with the form $C\cup T$ where
$C \subseteq A_1\cup\cdots\cup A_{k-1}$.
Hence, $(W\backslash (W\cap A_k))\cup T \in D$.
\end{proof}

\begin{center}
\small{}\tt{} \fbox{
\parbox{6.2in}{\hspace{0.05in} \\[-0.05in]
\algref\ {\sc stage~2.}
\begin{enumerate}
\setcounter{enumi}{4}
\item For each candidate winner set $S\in D$
    \begin{itemize}
    \item Let $\ell= d(S)$
    \item If $\ell\leq 2\Delta$
        \begin{itemize}
        \item For any set $J\subseteq\{1,2,\ldots,m\}$ with $|J|=\ell$
        \begin{itemize}
            \item Run {\sc MaxRevenue}$(S,J)$.
            \item If it outputs a tuple $(\mathbf{p},\mathbf{X})$,
             let $R^{S,J}\leftarrow\sum\limits_{i=1}^n\sum\limits_{j\in X_i}p_j$
            \item Else, let $R^{S,J}\leftarrow 0$.
           \end{itemize}
        \end{itemize}
     \item Else $\ell > 2\Delta$
         \begin{itemize}
        \item For any set $J\subseteq\{\ell-2\Delta+1,\ell-2\Delta+2,\ldots,m\}$ with $|J|=2\Delta$
        \begin{itemize}
            \item Let $j_{\min}\leftarrow\min\{j\in J\}$
            \item Choose any $Z\leftarrow\{z_1,\ldots,z_{\ell-2\Delta}\}
                      \subseteq\{1,2,\ldots,j_{\min}-1\}$, where $z_1>z_2>\cdots>z_{\ell-2\Delta}$.
            \item Run {\sc MaxRevenue}$(S,J\cup Z)$
            \item If it outputs a tuple
               \begin{itemize}
                \item run {\sc Solve-DLP} on $S$ and $J$ to get a tuple $(\mathbf{p},\mathbf{X})$
                 \item let $R^{S,J}\leftarrow\sum\limits_{i=1}^n\sum\limits_{j\in X_i}p_j$
                \end{itemize}
            \item Else, let $R^{S,J}\leftarrow 0$
           \end{itemize}
        \end{itemize}
    \end{itemize}
 \item Output a tuple $(\mathbf{p},\mathbf{X})$ which gives the maximum $R^{S,J}$.
\end{enumerate}
} }
\end{center}
Since {\sc MaxRevenue} and {\sc Solve-DLP} takes polynomial time,
and $|D|\leq nm\log m$, we know  {\sc stage~2} of \algref\ runs in polynomial time.

\begin{proof}[Proof of Theorem~\ref{Thm-EF-1}]
Since \algref\ takes polynomial time, we only need to check that \algref\ will
output an optimal envy-free solution. By the above analysis, we know
that \algref\ will output an envy-free solution. Since there is an optimal
winner $S\in D$, there exists an optimal envy-free solution $(\mathbf{p},\mathbf{X})$
such that $S=\{i|X_i\neq\emptyset\}$. W.l.o.g. suppose that the items in
$T=\bigcup_{i=1}^n X_{i}$ are allocated to $S$ by the rules of allocation
of {\sc MaxRevenue}$(S,T)$ (otherwise, there exists $i>i'$ and $j<j'$ such that
$j\in X_i$ and $j'\in X_{i'}$, if $v_i=v_{i'}$ , then
$v_i\qual_j-p_j\geq v_i\qual_{j'}-p_{j'}$ and
$v_{i'}\qual_j-p_j\leq v_{i'}\qual_{j'}-p_{j'}$, hence
$v_i\qual_j-p_j= v_i\qual_{j'}-p_{j'}$, then exchanging the allocation $j$ and $j'$
without changing their prices  would still make everyone envy-free. If $v_i<v_{i'}$,
then by Lemma~\ref{lem-EF-property-1}, we have $\qual_j=\qual_{j'}$, then exchanging
allocation $j$ and $j'$ and their prices would still make everyone envy-free).
If $d(S)\leq 2\Delta$, then by the argument of Lemma~\ref{lem-EF-property-3},
we know  $R^{S,T}\geq \sum\limits_{i=1}^n\sum\limits_{j\in X_i}p_j$.
Similarly if $d(S)> 2\Delta$, let $J$ be the $2\Delta$ largest values in $T$,
by the argument of Lemma~\ref{lem-EF-property-3} and Proposition~\ref{Prop-EF-5},
we know $R^{S,J}\geq \sum\limits_{i=1}^n\sum\limits_{j\in X_i}p_j$.
Therefore, the output $(\mathbf{p},\mathbf{X})$ of \algref\ is an optimal envy-free solution.
\end{proof}

\subsection{Proof of Hardness}

%
%\begin{Lem}\label{lem-H-1} (SubX3C)
%Given a set $A=\{a_1,a_2,\cdots,a_{3n}\}$ and a collection of subsets $T=\{S_1,S_2,\cdots,S_m \subset A\}$ where each $|S_i|=3$ and $n\le m\le 2n-1$. Then it is NP-hard to decide whether there are $n$ subsets of $T$ that cover all elements in $A$.
%\end{Lem}
%
%\begin{proof}
%The reduction from the general 3-exact covering problem (X3C), where the statement is the same as SubX3C except the cardinality of the collection is arbitrary. Given an instance of X3C, if $m<n-1$, the problem is trivial(answer is that there is no such covering). Thus, we only need to consider the case $m>2n-1$. Now construct an instance of SubX3C as follows, let $A'=A\cup B$ with $B=\{b_1,b_2,\cdots,b_{3(m-2n+1)}\}$ be the ground set of  SubX3C, and $T'=T\cup C$ with $C=\{(b_{3i+1},b_{3i+12},b_{3i+3})| i=0,1, m-2n\}$ be the collection of subsets of $A'$. Noting that $A'$ and $T'$  indeed satisfy the requirement of SubX3C($|T'|=\frac{2}{3}|A'|-1$). It is easy to check there is a positive answer to the instance of  X3C if and only if there is a positive answer to instance of SubX3C.
%\end{proof}
%

We next prove the NP-hardness result that is part of Theorem~\ref{Thm-EF-1},
that envy-free revenue maximization with $v_i\qual_j$ valuations is NP-hard.

We reduce from the exact cover by 3-sets problem (X3C): Given a ground set
$A=\{a_1,a_2,\ldots,a_{3n}\}$ and collection $T=\{S_1,S_2,\ldots,S_m\}$
where each $S_i\subset A$ and $|S_i|=3$, we are asked if there are $n$
elements of $T$ that cover all elements in $A$.
We assume that $n\le m\le 2n-1$; it is easy to see that the problem still
remains NP-complete (as we can add dummy elements $x,y,z$ to $A$ and subsets including either $x$, $y$ or $z$ to $T$ to balance the sizes of $A$ and $T$).

Given an instance of X3C, we construct a market with $3$ buyers and $n+m$
items as follows. Let $M=3nm+1$, $L=\sum_{i=1}^{3n}M^i$.
Note that $L<3nM^{3n}$, whose binary representation is of size polynomial
in $m$ and $n$. Consider $m$ values $R_i=\sum_{a_j\in S_i}M^j$,
for $i=1,2,\ldots,m$, and rearranging if necessary, let $R_1\ge R_2\ge \cdots\ge R_m$ be
a non-increasing order of these values. The valuations and demands of buyers are
\begin{eqnarray*}
& d_1=n, & v_1=3 \\
& d_2 = 2n, & v_2 = \frac{3n+1}{n+1} \\
& d_3 = n, & v_3 =2
\end{eqnarray*}
The qualities of items are defined as follows: Let
$\qual_j=L$, for $j=1,2,\ldots,n$, and $\qual_{n+j}=R_j$, for $j=1,2,\ldots,m$.
Obviously, the unit values and qualities are in non-increasing order,
and the construction is polynomial.

Consider the winner set in an optimal envy-free solution $(\mathbf{p},\mathbf{X})$.
Since $n\le m\le 2n-1$, the possible winner sets are $\{1\}$, $\{2\}$, $\{3\}$,
and $\{1,3\}$. There is no envy-free solution where
$\{2\}$ or $\{3\}$ is the winner set, since buyer 1 would be envious.
It remains to consider $\{1\}$ and $\{1,3\}$.
If the winner set is $\{1\}$, then the optimal revenue is
$v_1\cdot \big(\sum_{i=1}^n\qual_i\big)=3nL$ where buyer 1 gets the first $n$ items.
If the winner set is $\{1,3\}$, it is not difficult to see that in the optimal
envy-free solution $(\mathbf{p},\mathbf{X})$, it holds that $X_1=\{1,2,\ldots,n\}$.
Suppose that $X_3=\{j_1,j_2,\ldots,j_n\}\subset \{n+1,n+2,\ldots,n+m\}$
where $j_1>j_2>\cdots>j_n$.
Applying the characterizations of optimal envy-freeness i.e. procedure of {\sc MaxRevenue}$(S,T)$, and
Lemma~\ref{lem-EF-property-3} in Section~\ref{section-winer-known},
 we will  prove the following claim. In the optimal solution $(\mathbf{p},\mathbf{X})$ with $X_1=\{1,2,\ldots,n\}$
and $X_3=\{j_1,j_2,\ldots,j_n\}$,
\begin{Claim}\label{claim-EF-4}
$$v_1q_k-p_k=v_1q_j-p_j \ \ \forall k,j\in X_3$$
\end{Claim}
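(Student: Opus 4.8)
The plan is to reduce the claim to a short averaging argument about buyer $1$'s utilities on the items of $X_3$, with revenue maximization doing all the work. First I would fix the allocation $X_1=\{1,\dots,n\}$, $X_3=\{j_1,\dots,j_n\}$ and note that buyer $3$ is the lowest-valued winner, so it sits in the last non-empty value class. Hence Claim~\ref{claim-EF-3} applies and forces buyer $3$ to have utility $0$, i.e. $\sum_{k\in X_3}(v_3\qual_k-p_k)=0$, equivalently $\sum_{k\in X_3}p_k=v_3\sum_{k\in X_3}\qual_k$. Writing $Q_3=\sum_{k\in X_3}\qual_k$, this yields the key invariant
\[
\sum_{k\in X_3}(v_1\qual_k-p_k)=(v_1-v_3)Q_3,
\]
a quantity that depends only on the allocation and not on the individual prices; recall also that $|X_3|=d_3=n$.

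Next I would show that maximizing revenue is the same as minimizing $M:=\max_{k\in X_3}(v_1\qual_k-p_k)$. Since all items of $X_1$ share the common quality $\qual_j=L$ and buyer $1$ is envy-free, Lemma~\ref{lem-EF-property-0} gives $p_j\le v_1L-M$ for every $j\in X_1$; summing over the $n$ items of $X_1$ and adding the fixed amount $v_3Q_3=\sum_{k\in X_3}p_k$ bounds the total revenue above by $n(v_1L-M)+v_3Q_3$, a strictly decreasing function of $M$. This is exactly what {\sc MaxRevenue} minimizes: its objective is $v_1\qual_k-p_k$ for buyer $1$'s favorite item $k\in X_3$, which by constraint $(1)$ equals $M$, and by Remark~\ref{Remark-EF-1} minimizing it is equivalent to maximizing revenue. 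The averaging step then finishes the argument: by the invariant the mean of the $n$ numbers $v_1\qual_k-p_k$ over $k\in X_3$ is exactly $(v_1-v_3)Q_3/n$, so $M\ge (v_1-v_3)Q_3/n$ with equality if and only if all of them coincide. Therefore the revenue-maximizing (hence optimal, by Lemma~\ref{lem-EF-property-3}) prices must attain $M=(v_1-v_3)Q_3/n$, which forces $v_1\qual_k-p_k$ to be constant across $X_3$, proving the claim.

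The step I expect to be the main obstacle is making this last implication airtight, i.e. certifying that the equalized profile is genuinely realizable as an envy-free solution and is the one the optimum picks. Concretely I would set $p_k=v_1\qual_k-(v_1-v_3)Q_3/n$ on $X_3$ (which leaves $\sum_{X_3}p_k$ unchanged, so buyer $3$ still has utility $0$) and $p_j=v_1L-(v_1-v_3)Q_3/n$ on $X_1$, and then check envy-freeness: buyer $1$ becomes indifferent among all sold items at the nonnegative utility $(v_1-v_3)Q_3/n$; buyer $3$'s best $n$ sold items are exactly the small-quality items of $X_3$ (every item of $X_1$ carries the larger quality $L$, and Lemma~\ref{lem-EF-property-1} respects this monotonicity), so his best deviation still has value $0$; and buyer $2$, the only loser, can deviate only to the full set of $2n$ sold items, whose value to him is $v_2(nL+Q_3)$, so he is envy-free precisely when the revenue is at least $v_2(nL+Q_3)$, a lower bound that only becomes easier to meet as we raise prices. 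The delicate point is reconciling the forced equalization level with buyer $2$'s constraint and with the sign conventions used by the {\sc MaxRevenue} LP; I would handle it by observing that any allocation for which the value $M=(v_1-v_3)Q_3/n$ cannot be met admits no envy-free solution at all (its revenue is capped at $n(v_1L-M)+v_3Q_3$, which would fall below buyer $2$'s required $v_2(nL+Q_3)$), so such an allocation is never the optimal winner set, while for every admissible allocation the modification above strictly increases revenue unless equalization already holds, contradicting optimality.
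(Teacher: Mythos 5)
Your proposal is correct and is essentially the paper's own argument: both use buyer $3$'s zero utility (Claim~\ref{claim-EF-3} / constraint $(2^*)$) to fix $\sum_{j\in X_3}(v_1\qual_j-p_j)=(v_1-v_3)\sum_{j\in X_3}\qual_j$, then an averaging argument showing $\max_{j\in X_3}(v_1\qual_j-p_j)$ strictly exceeds the mean unless all terms coincide, and both finish by constructing the equalized price vector and checking it stays envy-free (constraints $(4^*)$, $(5^*)$) with strictly larger revenue, contradicting optimality. The only cosmetic difference is that you phrase the objective as a direct revenue bound via buyer $1$'s prices on $X_1$, whereas the paper works with the objective of LP$^{(k^*)}$ and invokes Remark~\ref{Remark-EF-1} to equate the two.
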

\begin{proof}[proof of Claim~\ref{claim-EF-4}]
According to {\sc MaxRevenue}$(S,T)$, there exists $k^*: n+1\leq k^*\leq m+n$
 such that $(\mathbf{p},\mathbf{X})$ is the optimal solution of the following linear program(denoted by $LP^{(k^*)}$).
\begin{center}
  %      \begin{flushleft}
        \begin{tabular}{cllcc}
        $\min$ & $v_1\qual_{k^*}-p_{k^*}$ &  & \\[.05in]
        $s.t.$ & $v_{1}\qual_{k^*}-p_{k^*}\ge v_{1}\qual_j-p_j$ & $\forall\ j\in X_{3}$ & $(1^*)$ \\[.05in]
         & $\sum\limits_{j\in X_{3}}(v_{3}\qual_j-p_j)=0$ &  & $(2^*)$ \\[.05in]
         & $v_{1}\qual_j-p_j=v_{1}\qual_{k^*}-p_{k^*}$ & $\forall\ j\in X_{1}$ & $(3^*)$ \\[.05in]
         & $v_{3}\qual_j-p_j\le v_{3}\qual_{j'}-p_{j'}$ & $\forall\ j\in X_{1}, j'\in X_{3}$ & $(4^*)$ \\ [.05in]
         & $\sum_{j\in X_1\cup X_3}(v_2\qual_j-p_j)\le 0$ &  & $(5^*)$\\ [.05in]
        \end{tabular}
 %       \end{flushleft}
\end{center}
Please note that the last set of equations $(6^*)$ in the original LP are not needed since they are empty under the current restriction of three buyers.
We first prove all the inequalities in $(1^*)$ must be equalities. Suppose it is not true.
Then there exists $\ell\in X_3$ such that
$$v_{1}\qual_{k^*}-p_{k^*}> v_{1}\qual_{\ell}-p_{\ell}.$$
Set $a_j=v_1\qual_j-p_j$, $j\in X_3$.
From $(2^*)$, it follows that $\sum\limits_{j\in X_3}a_j=(v_1-v_3)\sum\limits_{j\in X_3}\qual_j$. Take the average
$$\bar{a}=\frac{\sum\limits_{j\in X_3}a_j}{|X_3|}=\frac{(v_1-v_3)\sum\limits_{j\in X_3}\qual_j}{|X_3|}$$
We introduce the price vector $\mathbf{p}'=(p'_1,p'_2,\cdots,p'_n,p'_{j_1},p'_{j_2},\cdots,p'_{j_n})$
 such that $\forall j\in X_3$: $p'_j=v_1\qual_j-\bar{a}$ and $\forall  j\in X_{1}$: $p'_j=v_{1}(\qual_j-\qual_{k^*})+p'_{k^*}$.
If we can prove that $(\mathbf{p}',\mathbf{X})$ is still a feasible solution for $LP^{k^*}$, then $p'_{k^*}>p_{k^*}$ (due to $a_{k^*}> \bar{a}$ by $(1^*)$).
It results
in a smaller objective value than $v_1\qual_{k^*}-p_{k^*}$,  a contradiction to the optimality of  $(\mathbf{p},\mathbf{X})$.

First, $(1^*)$ $(2^*)$ $(3^*)$ follows directly from definition of $\mathbf{p}'$. We need only to check $(4^*)$ and $(5^*)$. From $p'_{k^*}>p_{k^*}$,
 $\forall\ j\in X_{1}$ $p'_j=v_{1}(\qual_j-\qual_{k^*})+p'_{k^*}>v_{1}(\qual_j-\qual_{k^*})+p_{k^*} =p_j$.
We have $\forall j\in X_1$:  $p'_j>p_j$. Hence, the inequality $(5^*)$ holds. To see inequality $(4^*)$, notice
\begin{displaymath}
\begin{split}
v_{3}\qual_j-p'_j&=v_3\qual_j-v_{1}(\qual_j-\qual_{k^*})-p'_{k^*}\\
&=v_3\qual_j-v_{1}(\qual_j-\qual_{j'})-p'_{j'}\\
&=(v_3-v_{1})(\qual_j-\qual_{j'})+v_3\qual_{j'}-p'_{j'} \\
&\le v_{3}\qual_{j'}-p'_{j'}, \ \ \ \  \forall j\in X_1, j'\in X_3.
\end{split}
\end{displaymath}
Claim~\ref{claim-EF-4} is proven.
\end{proof}

By Claim~\ref{claim-EF-4} and the above condition $(3^*)$, we have
\begin{equation}
v_1\qual_i-p_i=v_1\qual_j-p_j, \ \forall\ i\in X_1, j\in X_3
\end{equation}
By the above condition $(2^*)$,
\begin{equation}
\sum_{j\in X_3}p_j=v_3\cdot \sum_{k=1}^n\qual_{j_k}.
\end{equation}
Combining (1) and (2), the total revenue is
\[
R=\sum_{i=1}^np_i+\sum_{j\in X_3}p_j
 =v_1\cdot \sum_{i=1}^n\qual_i+(2v_3-v_1)\cdot \sum_{k=1}^n\qual_{j_k}.
\]
Since buyer $2$ is envy-free, we have
\[
v_2\cdot \Big(\sum_{i=1}^n\qual_i+\sum_{k=1}^n\qual_{j_k}\Big)-R
 =(v_2-v_1)\cdot \sum_{i=1}^n\qual_i+(v_1+v_2-2v_3)\cdot \sum_{k=1}^n\qual_{j_k}\le 0.
\]
Therefore, computing the maximum revenue when the winner set is $\{1,3\}$
is equivalent to solving the following program:
\begin{equation}
\begin{split}
\mbox{max}\ \  &R=v_1\cdot \sum_{i=1}^n\qual_i+(2v_3-v_1)\cdot \sum_{k=1}^n\qual_{j_k}\\
\mbox{s.t.}\ \ & (v_2-v_1)\cdot \sum_{i=1}^n\qual_i+(v_1+v_2-2v_3)\cdot \sum_{k=1}^n\qual_{j_k}\le 0 \\
\ \ & j_1>j_2>\cdots>j_n, \ j_k\in \{n+1,n+2,\ldots,n+m\}, k=1,2,\ldots,n.
\end{split}
\end{equation}
Considering $v_1=3$, $v_2=\frac{3n+1}{n+1}$, $v_3=2$, and $\qual_i=L$, $i=1,2,\ldots,n$, the program (3) is equivalent to
\begin{equation}
\begin{split}
\mbox{max}\ \  &R=3nL+\sum_{k=1}^n\qual_{j_k}\\
\mbox{s.t.}\ \ & \sum_{k=1}^n\qual_{j_k}\le L \\
\ \ & j_1>j_2>\cdots>j_n, \ j_k\in \{n+1,n+2,\ldots,n+m\}, k=1,2,\ldots,n.
\end{split}
\end{equation}
It is not difficult to see that the maximum revenue
(i.e., the optimal value of the above program) is $(3n+1)L$ if and only if
there is a positive answer to the instance of X3C. This completes the proof.\,$\Box$

\section{Conclusions}

In this paper, multi-unit demand models of the matching market are studied and their competitive
equilibrium solutions and envy-free solutions are considered.
For the sharp demand model, a strongly polynomial time algorithm is presented to decide whether a
competitive equilibrium exists or not and if one exists, to compute one that maximizes the revenue.
In contrast, the revenue maximization problem for envy-free solutions is shown to be NP-hard.
In a special case when the sharp demands of all players are bounded by a constant,
a polynomial time algorithm is provide to solve the (envy-free) revenue
maximization problem if the demand of each buyer is bounded by a constant number.

The sharp demand model is related to interesting applications such as sponsored
search market for rich media ad pricing. Our work serves a modest step toward
an efficient algorithmic solution. Our models may be further investigated to deal with
much more complicated settings of application problems.

\section{Acknowledgement}
The authors thanks anonymous referees for their constructive review comments, helping to improve the readability of the paper.
\bibliographystyle{plain}
\bibliography{references}

\iffalse

\fi

\appendix

\section{Hardness for General Valuations}\label{appendix-NP}

\begin{Thm}\label{theorem-sharp-NP}
It is NP-complete to determine the existence of a competitive equilibrium
for general valuations in the sharp demand model (even when all demands
are 3, and valuations are 0/1).
\end{Thm}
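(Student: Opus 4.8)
The plan is to prove NP-completeness in two parts: membership in NP, and NP-hardness via a reduction from Exact Cover by 3-Sets (X3C). The restriction in the statement (all demands equal to $3$, all valuations in $\{0,1\}$) is exactly what makes X3C the natural source problem, since a buyer with demand $3$ and a $\{0,1\}$-valuation over items behaves like a $3$-element subset of a ground set.

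For membership in NP, I would take as a certificate an allocation $\mathbf{X}$ (equivalently, the winner set together with the assignment of items to winners). Given $\mathbf{X}$, deciding whether supporting prices exist is a linear feasibility problem: the constraints are $p_j\ge 0$, $p_j=0$ for every unsold item, the winner envy-freeness inequalities, and the loser inequalities. As already observed for the competitive-equilibrium LP in Section~3, this system admits a polynomial-time separation oracle (for each buyer, order items by $v_{ij}-p_j$ and test the best $d_i$-subset), so feasibility can be decided in polynomial time by the ellipsoid method, and when feasible a rational price vector of polynomial bit-length exists. Hence the problem is in NP.

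For hardness, given an X3C instance with ground set $A=\{a_1,\dots,a_{3n}\}$ and $3$-sets $S_1,\dots,S_m$, I would create one item per element $a_j$ and one buyer per set $S_i$, where buyer $i$ has demand $d_i=3$ and valuation $v_{ij}=1$ for exactly the three items of $S_i$ and $0$ otherwise. If this bare construction still admits spurious partial-coverage equilibria, I would add a small gadget of auxiliary buyers and items whose only role is to pin prices and force every element-item to be sold, keeping all demands equal to $3$ and all valuations in $\{0,1\}$. The easy direction is that an exact cover yields a competitive equilibrium: take the $n$ covering sets as winners, assign each its three items (this partitions all $3n$ items), and set every price to $1$. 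Each winner then has utility $0$ and prefers its bundle to any other $3$-subset (which has value at most $3$ and cost $3$); each non-winner also has best utility $0$, satisfying the loser condition; and market clearance is vacuous because every item is sold.

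The main obstacle is the converse: a competitive equilibrium must encode an exact cover. Here I would exploit the market-clearance condition. Any unsold item is priced at $0$, so a buyer that values an unsold item can combine it with two further free or cheap items to obtain strictly positive utility, violating envy-freeness for losers (and, for winners, forcing degenerate all-zero prices on their bundles). Pushing this observation, I would argue that in any competitive equilibrium every element-item must be sold, so the winners receive pairwise-disjoint triples covering all $3n$ items --- precisely an exact cover. Making this implication watertight, ruling out every equilibrium whose winners cover only part of $A$, is the delicate step and the reason the auxiliary gadget is introduced: it guarantees that leftover free items always create envy. Combining the two directions yields the equivalence between the existence of a competitive equilibrium and the existence of an exact cover, completing the NP-completeness proof.
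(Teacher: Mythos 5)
Your NP-membership argument (guess the allocation, decide the existence of supporting prices by the exponential-size LP with a separation oracle) is fine, and is in fact more explicit than the paper, which leaves membership implicit. The problem is in the hardness direction: you correctly identify X3C and the basic element-item / set-buyer correspondence, but you explicitly defer the construction of the ``small gadget of auxiliary buyers and items'' that forces every element-item to be sold --- and that gadget is the entire content of the hard direction. Your informal justification for the converse (``a buyer that values an unsold item can combine it with two further free or cheap items to obtain strictly positive utility'') does not go through in the bare construction: with sharp demand exactly $3$ and $0/1$ valuations, a loser $\ell$ whose set contains an unsold item $a_j$ (priced $0$) derives value exactly $1$ from a triple $\{a_j,x,y\}$ with $x,y\notin S_\ell$, so envy-freeness only requires $p_x+p_y\ge 1$ for all such pairs, which is not an evident contradiction. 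Ruling out partial-coverage equilibria genuinely requires extra structure.

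The paper's gadget is concrete and its arithmetic is the crux: add three items $B=\{b_1,b_2,b_3\}$, a buyer for each triple $\{a_j,b_k,b_l\}$ with $a_j\in A$ and $b_k,b_l\in B$ (that is $9n$ auxiliary buyers), and one buyer valuing exactly $B$. If some $a_j$ is unsold then $p_{a_j}=0$ and the three losing buyers who want $a_j$ together with a pair from $B$ force $p_{b_1}+p_{b_2}\ge 3$, $p_{b_1}+p_{b_3}\ge 3$, $p_{b_2}+p_{b_3}\ge 3$, hence $p_{b_1}+p_{b_2}+p_{b_3}\ge 4.5>3$, so the $B$-buyer cannot win $B$, some $b_k$ is unsold and priced $0$, and the remaining two $B$-items still cost at least $4.5$, contradicting envy-freeness of whoever holds them. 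It is this ``three pairwise constraints summing past the total value'' trick that you would need to supply (or replace with an equivalent device) to make your reduction watertight; as written, the proposal has the right skeleton but a genuine gap where the proof's main idea should be.
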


\begin{proof}
We reduce from exact cover by 3-sets (X3C): Given a ground set
$A=\{a_1,\ldots,a_{3n}\}$ and a collection of subsets
$S_1,\ldots,S_m\subset A$ where $|S_i|=3$ for each $i$, we are asked whether
there are $n$ subsets that cover all elements in $A$. Given an instance
of X3C, we construct a market with $3n+3$ items and $9n+m+1$ buyers as
follows. Every element in $A$ corresponds to an item; further, we
introduce another three items $B=\{b_1,b_2,b_3\}$. We use index $j$ to denote one item. For each subset
$S_i$, there is a buyer with value $v_{ij}=1$ if $j\in S_i$ and
$v_{ij}=0$ otherwise; further, for every possible subset $\{x,y,z\}$
where $x\in A$ and $y,z\in B$, there is a buyer with value $v_{ij}=1$ if
$j\in \{x,y,z\}$ and $v_{ij}=0$ otherwise; finally, there is a buyer
with value $v_{ij}=1$ if $j\in B$ and $v_{ij}=0$ otherwise. The demand
of every buyer is 3.

We claim that there is a positive answer to the X3C instance if and only
if there is a competitive equilibrium in the constructed market. Assume
that there is $T\in \{S_1,\ldots,S_m\}$ with $|T|=n$ that covers all
elements in $A$. Then we allocate items in $A$ to the buyers in $T$ and
allocate $B$ to the buyer who desires $B$, and set all prices to be 1.
It can be seen that this defines a competitive equilibrium.

On the other hand, assume that there is a competitive equilibrium
$(\mathbf{p},\mathbf{X})$. We first claim that all the items in $B$ must be allocated ({\bf Cl}).
Suppose the claim {\bf C1} is not true, there are two cases: Case 1, there is only one unallocated items, since if the items in $B$ are allocated, either two items are allocated to some buyer or all three items are allocated (because there only exist buyers who desire two items in $B$ or three items in $B$). W.l.o.g. suppose $b_1$ and $b_2$ together with some $x\in A$ are allocated to a buyer and $b_3$ is unallocated, then we know $p_{b_1}+p_{b_2}+p_x\le 3$. If $p_x=3$,  it holds that $p_{b_1}=p_{b_2}=0$, then buyer who values $B$ will not be envy-free. If $p_x<3$, then we either have $p_{b_1}+p_x<3$ or $p_{b_2}+p_x<3$. W.l.o.g. suppose $p_{b_1}+p_x<3$, then the buyer who values the set $\{b_1,x,b_3\}$ will not be envy-free. Case 2:  all three items in $B$ would be unallocated, contradicting envy-freeness
of the buyer who values $B$. Second, we claim all the items are allocated ({\bf C2}). Otherwise, by {\bf C1}, there must exist an item $a_j\in A$
that is not allocated to any buyer. Then we have $p_{a_{j}}=0$. Consider
the buyers who desire subsets $\{a_j,b_1,b_2\}, \{a_j,b_1,b_3\},
\{a_j,b_2,b_3\}$. They do not win since $a_j$ is not sold. Due to
envy-freeness, we have
\begin{eqnarray*}
p_{b_1}+p_{b_2}&\ge& 3 \\
p_{b_1}+p_{b_3}&\ge& 3 \\
p_{b_2}+p_{b_3}&\ge& 3
\end{eqnarray*}
This implies that $p_{b_1}+p_{b_2}+p_{b_3}\ge 4.5$.
Hence, the buyer who desires $B$ cannot afford the price of $B$ and at least one
item in $B$, say $b_1$, is not allocated out, which contradicts with {\bf C1}.

Now since all items in $A$ are allocated out, because of the construction
of the market, we have to allocate all items in $A$ to $n$ buyers
and allocate $B$ to one buyer; the former gives a solution to the
X3C instance.
\end{proof}

\end{document}